\documentclass[conference,compsoc]{IEEEtran}
\IEEEoverridecommandlockouts
\pagestyle{plain}
\ifCLASSINFOpdf

\else

\fi
\usepackage{mathtools}
\usepackage{nicefrac}
\usepackage{siunitx}
\usepackage{array,framed}
\usepackage{booktabs}
\usepackage{
  color,
  float,
  epsfig,
  wrapfig,
  graphics,
  graphicx
}
\usepackage{kantlipsum}
\usepackage{textcomp,amssymb}
\usepackage{setspace}
\usepackage{latexsym,fancyhdr,url}
\usepackage{enumerate}
\usepackage[ruled,linesnumbered,noend]{algorithm2e}
\usepackage{algpseudocode}
\usepackage{graphics}
\usepackage{xparse} 
\usepackage{xspace}
\usepackage[utf8]{inputenc}
\usepackage{multirow}
\usepackage{csvsimple}
\usepackage{balance}
\usepackage[misc]{ifsym}
\usepackage{enumitem}
\usepackage{nopageno}


\DeclareMathOperator*{\argmin}{arg\,min}
\usepackage{
  tikz,
  pgfplots,
  pgfplotstable
}

\usepackage{hyperref}

\newcommand{\Prob}{\mathbb P}
\usetikzlibrary{
  shapes.geometric,
  arrows,
  external,
  pgfplots.groupplots,
  matrix
}

\usepackage{amsmath}
\usepackage{amssymb}
\usepackage{eucal}
\usepackage{comment}
\usepackage{subfigure}
\newcommand{\subhead}[1]{\vspace {0.04in}\noindent{\textbf{#1.}}}
\usepackage{soul}
\usepackage{amsthm}
\usepackage{multirow}
\newtheorem{theorem}{Theorem}[section]
\newtheorem{remark}{Remark}[section]

\newtheorem{lemma}[theorem]{Lemma}
\newtheorem{definition}{Definition}

\pgfplotsset{compat=1.18}

\begin{document}
\fancyhead{}

\hyphenation{op-tical net-works semi-conduc-tor}

\def\thetitle{DPI: Ensuring Strict \underline{\textit{D}}ifferential \underline{ \textit{P}}rivacy for \underline{\textit{I}}nfinite Data Streaming}
\title{\thetitle}

\author{
\IEEEauthorblockN{Shuya Feng\textsuperscript{$\dagger$}\textsuperscript{$*$}, Meisam Mohammady\textsuperscript{$\ddagger$}\textsuperscript{$*$}, Han Wang\textsuperscript{$\S$}, Xiaochen Li\textsuperscript{$\P$}, Zhan Qin\textsuperscript{$\P$}, Yuan Hong\textsuperscript{$\dagger$}\textsuperscript{\Letter}}
\IEEEauthorblockA{
\textsuperscript{$\dagger$}University of Connecticut, \textsuperscript{$\ddagger$}Iowa State University, \textsuperscript{$\S$}University of Kansas, \textsuperscript{$\P$}Zhejiang University \\
\textsuperscript{$\dagger$}\{shuya.feng, yuan.hong\}@uconn.edu, \textsuperscript{$\ddagger$}meisam@iastate.edu, \textsuperscript{$\S$}han.wang@ku.edu, \textsuperscript{$\P$}\{xiaochenli, qinzhan\}@zju.edu.cn
}
\IEEEauthorblockA{\textsuperscript{$*$}Equal Contribution} 
}

\maketitle

\begin{abstract}
Streaming data, crucial for applications like crowdsourcing analytics, behavior studies, and real-time monitoring, faces significant privacy risks due to the large and diverse data linked to individuals.
In particular, recent efforts to release data streams, using the rigorous privacy notion of differential privacy (DP), have encountered issues with unbounded privacy leakage. This challenge limits their applicability to only a finite number of time slots (``finite data stream'') or relaxation to protecting the events (``event or $w$-event DP'') rather than all the records of users. A persistent challenge is managing the sensitivity of outputs to inputs in situations where users contribute many activities and data distributions evolve over time.
In this paper, we present a novel technique for \textit{D}ifferentially \textit{P}rivate data streaming over \textit{I}nfinite disclosure (DPI) that effectively bounds the total privacy leakage of each user in infinite data streams while enabling accurate data collection and analysis. Furthermore, we also maximize the accuracy of DPI via a novel boosting mechanism. 
Finally, extensive experiments across various streaming applications and real datasets (e.g., COVID-19, Network Traffic, and USDA Production), show that DPI maintains high utility for infinite data streams in diverse settings. Code for DPI is available at \url{https://github.com/ShuyaFeng/DPI}. 
\end{abstract}

\section{Introduction}
\label{sec:intro}
Most computer systems and applications continuously generate data streams for analyses. For instance, online media such as YouTube \cite{ameigeiras2012analysis} or Instagram recommends videos to users based on their vast visiting records. Network monitoring \cite{thottan2010anomaly} detects abnormal behavior and identifies anomalies while tracking the network traffic. Water management \cite{ranjitha2015streaming} makes timely decisions on purifying water per the real-time water quality data collected by sensors. 

Existing solutions have not effectively addressed the privacy threats posed by streaming data, which is vulnerable due to its volume and diversity. Recent attempts to release data streams using the widely adopted privacy model, differential privacy (DP), have revealed persistent privacy leakage \cite{dwork2010differential, bolot2013private, chan, pegasus, 10dwork2010differential, perrier2018private}. In particular, Dwork et al. \cite{dwork2010differential} (STOC'10) found that in streaming data, as the number of rounds increases, an unavoidable logarithmic increase in error for an advancing counter occurs. Similarly, a logarithmic growth in privacy leakage is observed when the error is bounded. This issue arises from the infinite accumulation of two key factors in DP mechanisms: sensitivity (the maximum impact of each user on the result) and privacy budget composition (additional leakage from sequential result releases).

In the past decade, several solutions have been proposed to address ``event-level privacy on finite or infinite streams'' \cite{bolot2013private, chan, pegasus, 10dwork2010differential, perrier2018private, wang,10.14778/2732977.2732989, wang2016rescuedp}, or ``user-level privacy on finite streams under certain settings'' (e.g., partially disclosing some randomized results and applying interpolation \cite{fan2013adaptive,fanKellaris2013differentially,WangXH20,WangHKV20,LiuXWHBM21}). However, they have limitations in practice. The first category struggles when users' contiguous events collectively reveal sensitive information. Event-level privacy safeguards individual location visits but fails to protect a user's path across successive, potentially indefinite timestamps, like a sequence of locations. The second category is less practical for continuously and indefinitely collected data streams, since it is unrealistic to predict when services like traffic reporting will cease.

Thus, designing user-level DP mechanisms for infinite streaming involves addressing five fundamental challenges.

 \vspace{0.05in}

\begin{itemize}
        \item \textbf{Unboundedness}. The data size in the stream is continuously growing and unbounded, with each user's data potentially generated indefinitely, making it also unbounded in infinite data streams. 

        \vspace{0.05in}
        
    \item \textbf{Dynamic Data Distribution}. The data distribution changes dynamically over time, requiring DP mechanisms to consistently offer accurate analysis and privacy guarantees upon these changes.

   \vspace{0.05in}

\item \textbf{Accumultative Privacy Loss over Infinite Streams}. This has been widely recognized but not addressed yet. 

   \vspace{0.05in}
    
    \item \textbf{Utility-Privacy Tradeoff}. It is challenging to minimize the utility loss continually (under a bounded leakage) while leveraging dynamic changes in data distribution.

   \vspace{0.05in}
    
    \item \textbf{Real-Time Processing}. DP mechanisms for infinite data streams must balance real-time processing needs with privacy protection and computational efficiency.

    \vspace{0.05in}
\end{itemize}

\begin{table}[!h]
\caption{Representative DP data streaming methods (not scalable to user-level protection over infinite data streams). Other existing methods share similar properties.}
\label{table:sota}
\resizebox{\columnwidth}{!}{%
\begin{tabular}{|c|c|c|c|c|}
\hline
\multirow{2}{*} {Methods} &
\multicolumn{1}{c|} {Assumption} &
\multicolumn{3}{c|} {Properties in Representative Works}\\
\cline{2-5}
& Sensitivity & Privacy 
 & Error Bound & Applications \\
\hline
\cite{chan} & ($1$, event-level) & $O(\epsilon\cdot t)$ & $O(\frac{\log t^{1.5} \cdot \log \frac{1}{\delta} }{\epsilon})$ & count, top-k, range queries \\
\cite{pegasus} & ($1$, event-level)  & $O(\epsilon\cdot t)$ & - & count, event monitoring\\
\cite{perrier2018private} & ($1$, event-level)  & $O(\epsilon \cdot t)$ & $O(\tau(\log t)^{1.5})$ & count, range queries, etc.\\
\cite{wang} & ($1$, event-level) & $O(\epsilon \cdot t)$ & - & multiple analyses\\
\hline
\end{tabular}
}\vspace{-0.1in}
\end{table}

To our best knowledge, existing methods have not adequately addressed the significant challenges in infinite data streams. Challenges in current methods include not supporting infinite data disclosure, assuming limited sensitivity instead of indefinite full event sequence protection, and limiting applications to fixed time slot queries. Table~\ref{table:sota} provides an overview of the most crucial properties of representative existing works on data streaming methods with differential privacy. There are many limitations in existing methods. For instance, they all consider the special case of sensitivity (e.g., 1 for counting) and their privacy bound linearly increases with the number of rounds $t$. 

In this paper, we present a novel DP framework for infinite data streams, known as DPI (\emph{\underline{D}ifferential \underline{P}rivacy for \underline{I}nfinite Data Streams}). DPI is designed with three key components to address the above challenges: (1) \textit{Sensitivity Compression}, (2) \textit{DPI Boosting Mechanism}, and (3) \textit{Privacy Budget Allocation Mechanism}, discussed as below.  

\begin{figure*}[!tbh]
	\centering
		\includegraphics[width=1\linewidth]{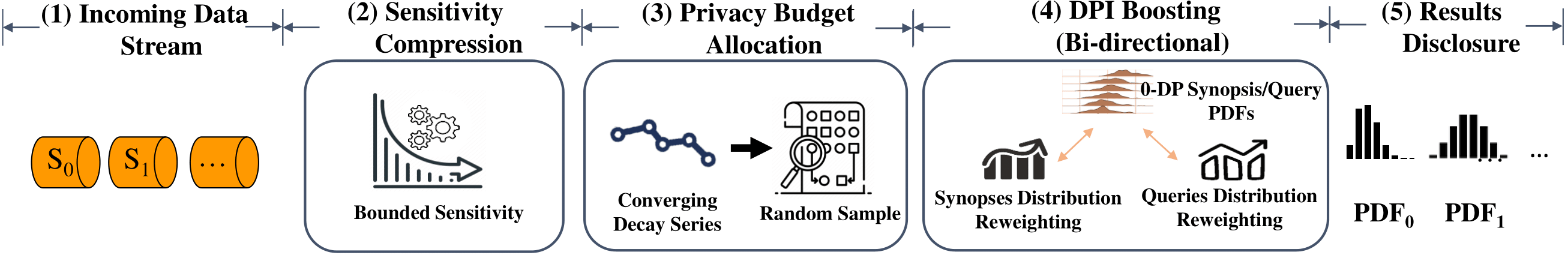}
  \vspace{-0.1in}
	\caption[Optional caption for list of figures]
	{DPI guarantees limited privacy bound and sensitivity by managing the pace of convergence and error margins, striving to attain the highest level of accuracy (\emph{the output streaming PDFs can support most real-time applications as the input stream}).}\vspace{-0.05in}
	\label{fig:intro}
\end{figure*}

\subsection{DPI in a Glimpse}
\label{subsec1-intro}

\noindent\textbf{Sensitivity Compression.} Bounding sensitivity is crucial for reliable DP models. DPI adopts a \emph{probability distribution data structure} to effectively limit the unbounded sensitivity by projecting/compressing the maximum output difference to the $\ell_1$-norm of probability distribution difference, offering advantages over unbounded structures like counting.  

\subhead{DPI Boosting Mechanism} 
The design of DPI has two inherent crucial challenges: (1) preserving a (lower-bound) meaningful utility for the data stream, and (2) developing methods to capture dynamic patterns within the new coming data. A sequence of randomization mechanisms (e.g., additive noise \cite{Dwork08differentialprivacy:}, sampling \cite{HongVLKG15} or randomized response \cite{ErlingssonPK14}) for all the time slots, inevitably destroys both cumulative and instantaneous information. For instance, applying a series of $\epsilon$-DP Laplace mechanisms over a dataset $D$ will generate results with an error of $O\left(2^n/\left(|D|\cdot\epsilon\right)\right)$~\cite{10.1561/0400000042}. 

Unlike atomic DP mechanisms, additive or accumulative learning algorithms are especially effective for streaming data, where they continuously update models in real-time to adapt to evolving patterns. In particular, Dwork et al.~\cite{dwork2010boosting} and Moritz et al.~\cite{5670948} demonstrated that the differential privacy version of boosting mechanisms can be effective in achieving cumulative learning outcomes for a finite number of queries. However, direct application of these approaches to continual streaming data results in \emph{unbounded privacy leakage} and \emph{significant computational overhead} \cite{dwork2010boosting,5670948}. We have identified that both deficiencies result from the use of an expensive \emph{base synopsis generator} in \cite{dwork2010boosting,5670948}.\footnote{Such generator acts as a compact representation of query answers over a dataset and can be implemented using various forms, such as a synthetic database or any data structure such as kernel or probability density function (PDF) that approximates query results (``synopsis'') \cite{dwork2010boosting,5670948}.} 

To handle infinite streams, we have devised a novel DPI boosting mechanism, which guarantees both bounded and efficient computation while integrating an infinite number of tiny-budget DP mechanisms into an additive learning game. While the outcomes of various queries on streaming data may vary over time, it is still possible to precisely associate each result with an adequately accurate state, such as a numerical value or a PDF. The DPI boosting mechanism introduces a groundbreaking approach to synopsis generation by creating a significantly smaller set of synopses from PDFs, resulting in faster processing. Such non-trivial synopsis generation is designed as a privacy-free (0-DP) mechanism, generating an ample number of synopses that precisely address a set of queries across an infinite number of rounds. It necessitates a rigorous analytical analysis to determine the number of synopses required to achieve a meaningful level of accuracy while ensuring privacy. Also, the existing uni-directional boosting mechanism \cite{dwork2010boosting,5670948} (updating weights solely over queries based on data-dependent synopses) should be revised into a \emph{bi-directional boosting} approach that considers both data-independent synopses and queries for infinite data streaming (\emph{otherwise, the privacy budgets will not be sufficient}). With these considerations in DPI, we aim to achieve provable utility while maintaining the bounded privacy guarantees.

\subhead{Privacy Budget Allocation and Boosting}
By saving budget on the synopsis generation process in DPI, we develop a bounded budget boosting mechanism that optimally utilizes an infinite number of tiny boosting mechanisms for making refinements, focusing the privacy budgets solely on \emph{reweighting} and not synopses. To achieve this, we establish an optimal allocation strategy with a converging sum series to maximize the utility.

To prevent budget depletion after a certain number of releases, typically expected with a decaying series, we employ a sophisticated deployment of this infinite decaying series of privacy budgets. During each iteration (time slot) of DPI, we randomly generate the privacy budget from the series, allowing for efficient and controlled management of the privacy budget throughout the continuous process.

Figure \ref{fig:intro} shows an overview of DPI, including the input data stream, the three key components, and the output results (as streaming PDFs). The DPI framework takes an incoming data stream as input and processes it through sensitivity compression, privacy budget allocation, and a boosting mechanism to output an infinite series of differentially private results that maximize accuracy and utility over time (see the detailed design in Section \ref{sec:workflow}). This end-to-end process allows DPI to ensure strong privacy for users while handling the challenges of infinite streaming data. Note that DPI may require a restart to refresh the privacy budget in possible uncommon cases of extremely long-term, highly dynamic data streams, as discussed in Section \ref{sec:discussion}. 

\subsection{Contributions}
\label{subsec2-intro}

\noindent To sum up, DPI makes the following major contributions:

\subhead{Unique and Significant Benefits} 
First, to our best knowledge, DPI provides the first user-level DP solution for protecting users in applications relying on continuously collecting data over infinite streams, e.g., YouTube \cite{ameigeiras2012analysis} or Instagram recommends videos to users based on their vast visiting records indefinitely. Second, unlike most existing works which usually take various limited assumptions (e.g., bounded sensitivity, protecting the presence/absence statuses of events rather than users, a limited number of disclosures), DPI strictly satisfies generic $\epsilon$-DP for protecting all the users (end-to-end) involved in the unlimited data streams. Third, DPI provably maximizes the accuracy of the disclosure in the long run with substantial theoretical analyses. 
      
\subhead{Novel Methods} 
We have designed a novel dynamic boosting mechanism (DPI boosting) that optimally balances both privacy and utility over time. DPI significantly differs from the ``AdaBoosting''~\cite{dwork2010boosting} for DP: (1) DPI replaces the base synopsis generator with a pre-determined set of synopses \emph{independent of the dataset}. We demonstrate that these synopses can be created by the performance of the output and strong synopses can be integrated into the boosting mechanism with no impact on privacy, and (2) DPI boosting has been designed to generate a PDF data structure that tracks the overall PDF of the stream instead of accurately answering pre-defined queries. We have also established lower bounds on the utility of DPI based on this PDF data structure.

\subhead{Comprehensive Evaluations} 
We comprehensively evaluate the performance of DPI on different applications and datasets. Our primary objectives in these evaluations are: (1) investigating the impact of DPI's hyperparameters on its performance over time (in Section \ref{sec:tuning}), (2) evaluating DPI on different datasets with different domain sizes/distributions and a wide variety of real applications, such as statistical queries, anomaly detection, recommender systems (in Sections \ref{sec:utility_eva} and \ref{sec:highly}), and (3) compare DPI with existing methods and show that they violate the privacy requirement $\epsilon$ after a few rounds (in Appendix \ref{sec:sotaloss}) while DPI strictly bounds the privacy leakage with $\epsilon$ indefinitely and still has considerable share of unspent privacy budget after a large number of rounds.

\section{Preliminaries}
\label{sec:pre}
\subsection{Streaming Model}
\label{subsec:system}

We consider data collection and analysis in the context of interactive and continuous observation over infinite data streams. In this setting, a trusted curator maintains a dynamic dataset, represented as a collection of streaming input $S = \{S_1, S_2, \ldots\}$, where $S_t$ refers to the indexed input collected at time slot $t$. The universe of possible individual records is denoted as $X$.

\subhead{Interactive Queries} 
The data analyst interacts with the curator by submitting a sequence of queries over time. Let $Q$ be the space of all possible queries, and at each time slot $t$, the data analyst requests a subset of queries $Q_t \subseteq Q$. Each query $q_t$ is a function that takes the data until the current stream $S_t$ and outputs a response $y_t = q_t(S_1, S_2, \ldots S_t)$, which is released by the curator. Note that $q$ can remain the same as previous time slots or dynamically change.

\subsection{Privacy Models}
\label{subsec:privacy}
To protect the privacy of individual records in the context of continual observation, the trusted curator employs Differential Privacy (DP) mechanisms represented by $\mathcal{M}(\epsilon)$, where $\epsilon > 0$ is the privacy budget allocated for each query in the infinite data stream. Each query $q_t$ outputs a response $y_t = q_t(S_1, S_2, \ldots, S_t)$, and the curator releases a private response denoted as $\tilde{y_t}$. The DP mechanism $\mathcal{M}(\epsilon_t)$ ensures that $\tilde{y_t}$ does not disclose any information about the presence or absence of sensitive information from any particular individual in all the streamed data. By limiting the privacy budget $\epsilon_t$, the curator controls the amount of privacy provided for each query $q_t$. 
 
We adopt the standard definition of $\epsilon$-DP~\cite{Dwork08differentialprivacy:}. Let $\Gamma$ be the domain of datasets, and let $D, D'$ be two datasets in $\Gamma$ such that $D'$ can be obtained from $D$ by adding or removing the data of a single individual (user). A randomization mechanism $\mathcal{M}: \Gamma \times \Omega \to \mathbb{R}$ is $\epsilon$-differentially private if, for all $\Theta \subseteq \mathbb{R}$ and for all $D, D' \in \Gamma$ such that $D'$ is adjacent to $D$, i.e., $D$ and $D'$ differ in the data of a single individual, the following inequality holds:

\vspace{-0.1in}

\begin{align}
\Prob(\mathcal{M}(D) \in \Theta) \leq e^{\epsilon} \Prob(\mathcal{M}(D') \in \Theta)
\end{align}

where $\Omega$ is a sample space by randomization $\mathcal{M}$ to generate the output. To achieve differential privacy, we need to consider the concept of sensitivity. The sensitivity $\Delta$~\cite{dwork2006calibrating} of a query $q$ is defined as the maximum $\ell_1$-distance between the exact query answers on any two neighboring databases $D$ and $D'$, i.e., $\Delta q = \max_{D,D'} \left\|q(D) - q(D')\right\|_1$. 
Bounding sensitivity is crucial for DP models. Even with a constrained privacy budget, unbounded sensitivity can result in overly sensitive outputs, risking privacy breaches. 

\subsection{Boosting Mechanisms}
\label{synopsis1}

We next introduce some background for the AdaBoosting mechanism and DP with AdaBoosting, which will be significantly revised on both \emph{privacy} and \emph{efficiency} for DPI.  

\subsubsection{AdaBoosting Mechanism}

AdaBoosting (Adaptive Boosting) \cite{dwork2010boosting} is a powerful machine learning technique used to enhance the accuracy of weak classifiers by combining them into a strong ensemble classifier. Formally, given a training dataset $D = {(x_1, y_1), (x_2, y_2), \ldots, (x_n, y_n)}$, where $x_i$ and $y_i$ represent the data points and their respective labels, AdaBoosting constructs an ensemble classifier $H(x) = \text{sign}\left(\sum_{t=1}^T \alpha_t h_t(x)\right)$. Here, $T$ is the number of weak classifiers, $h_t(x)$ is the $t$-th weak classifier, and $\alpha_t$ is the corresponding weight assigned to $h_t(x)$. The objective of boosting is to convert a weak learner, which produces a hypothesis only slightly better than random guessing, into a strong learner that achieves high accuracy. The AdaBoosting proceeds in rounds. In each round, two steps are performed:

\vspace{0.05in}

\begin{enumerate}
    \item  The base learner is run on the current distribution $\mathcal{D}_t$, generating a classification hypothesis $h_t$.

\vspace{0.05in}
    
    \item  The hypotheses $h_1, h_2, \ldots, h_t$ are used to reweight the samples and define the updated distribution $\mathcal{D}_{t+1}$.
\end{enumerate}

It continues for a predefined number of rounds or until the combination of hypotheses is sufficiently accurate. 

\subsubsection{DP with AdaBoosting}
DP with AdaBoosting, introduced by Dwork and Rothblum \cite{dwork2010boosting}, combines AdaBoosting with differential privacy. As detailed in Figure~\ref{fig:dwork} (in Appendix \ref{app:ada}), the process iteratively updates the weights assigned to each query, and samples queries based on their weights as follows. Let $Q = \{q_1, q_2, \ldots, q_m\}$ represent a set of queries. The method assigns uniform weights $w_i$ to each query $q$ in $Q$ in its initialization phase ($P_Q(t=1)\rightarrow$ uniform). Next, it will operate on the ``base synopsis generator'' to produce ``weak'' DP synopses $\mathcal{A}^t_{Q_s}$, which are moderately accurate for $r$ sampled queries $Q_s$. 

\begin{definition} [Sampled Queries $Q_s$]
\label{def:endanger}
Let $Q$ be the set of requested queries, and $P_Q(t)$ be the PDF for the query sampling at iteration $t$ in DP AdaBoosting~\cite{dwork2010boosting}. At each iteration $t$, the set of sampled queries $Q^t_s$ is defined as a $k$-size subset of $Q$, sampled from $Q$ per $P_Q(t)$. The sampling process reflects AdaBoosting's practice of assigning larger weights to weaker queries to improve their performance. 
\end{definition}

\begin{definition}[Synopses] 
\label{def:synopses}
A synopsis \cite{dwork2009complexity,dwork2010boosting} generated from a dataset $D$ is defined as a compact representation of answers to a set of queries $Q=\{q_1,q_2,\cdots, q_k\}$ over $D$. Synopsis can be in the form of a synthetic database, or any arbitrary data structure, e.g., a kernel or PDF, that can be queried to return an approximation of the query's result.
\end{definition}

This data structure is referred to as a \textit{synopsis} of the database. General privacy-preserving synopses are of interest as they may be easier to construct than privacy-preserving synthetic databases. Also, stronger limitations are known for constructing synthetic databases than general privacy-preserving synopses.

\begin{definition} [Base Synopsis Generator~\cite{dwork2010boosting}]
For any data universe \(X\) and set of queries \(Q : \{X^n \rightarrow \mathbb{R}\}\) with sensitivity \(\Delta_Q\), a \((r, \lambda, \varepsilon_1, \kappa)\)-base synopsis generator for \(Q\) is a method that:
\begin{itemize}
    \item Outputs a synthetic database \(y\) of size \(n\) based on \(r = O(n \cdot \log |X| \cdot \kappa)\) queries from \(Q\), where \(\kappa\) is the probability that the chosen synopsis is \(\lambda\)-accurate (the synopsis closely approximates the true data within a \(\lambda\) margin of error).
    
    \item Adds Laplace noise $O(\kappa \sqrt{r}\cdot \frac{\Delta_Q}{\varepsilon_1})$ to each query answer.
    
    \item Guarantees \((\varepsilon_1, \exp(-\kappa))\)-differential privacy.
    
    \item It has a complexity of \(|X|^n \cdot \text{poly}(n, \kappa, \log(1/\varepsilon_1))\). 
\end{itemize}

The base synopsis generator outputs the lexicographically first database \(y\) satisfying \(|q(y) - \hat{q}^t_s(x)| \leq \lambda/2\) for every query \(q\) in the sampled query set \(Q_s\), where $\hat{q}^t_s$ is the noisy query answers. If no such database exists, it outputs \(\bot\) (i.e., fails). The \((r, \lambda, \varepsilon_1, \kappa)\)-base synopsis generator privately generates a synthetic database approximating the responses of the original database to a set of queries \(Q\).
\end{definition}
    
The goal is to iteratively improve the base generator using the boosting algorithm to create a synopsis that performs well for all requested queries $Q$ while preserving privacy. 

\begin{figure*}[!h]
	\centering
		\includegraphics[width=1\linewidth]{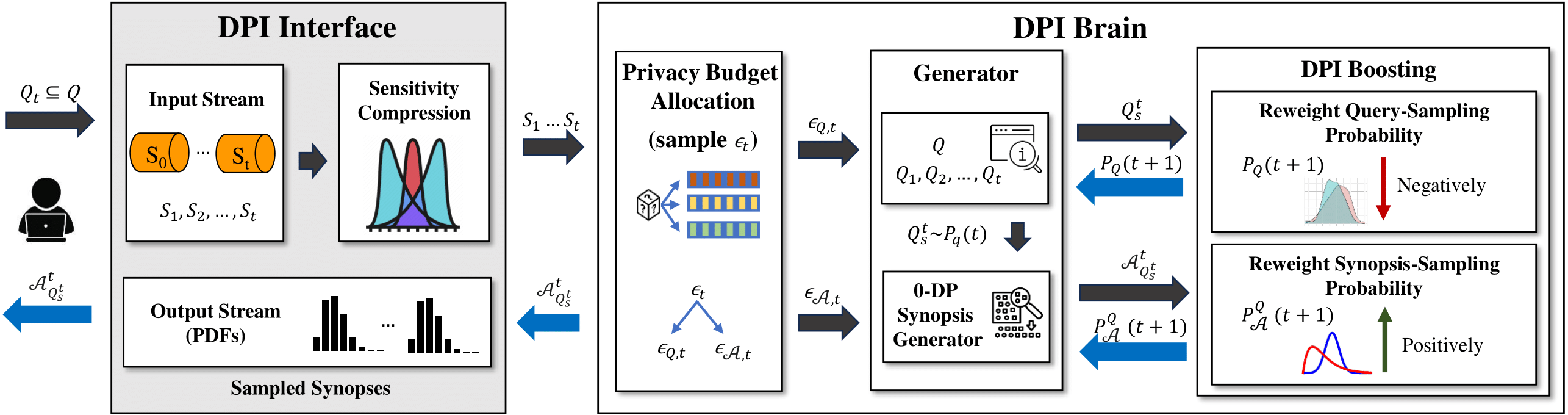}
  \vspace{-0.1in}
	\caption[Optional caption for list of figures]
	{The DPI framework in time slot $t$. (1) $Q^t_s$ can be a single query or a subset of queries, and can remain the same or dynamically change over time, (2) all the privacy budgets are pre-computed with converging series and $\epsilon_t$ is sampled in real-time, and halved for query-sampling and synopsis-sampling, (3) generator returns the PDF output (sampled synopsis) for the current time slot $t$ to the interface/analyst, and also sends it for boosting, and (4) DPI boosting (bi-directional) reweights the PDFs (both query-sampling probability and synopsis-sampling probability) for the next time slot $(t+1)$.}\vspace{-0.1in}
	\label{fig:flowchart}
\end{figure*}

The next steps aim to strengthen the queries with bad results by adaptive sampling and refining them. At each iteration, the query sampling probability $P_Q(t)$ is reweighted based on the performance of the base synopsis generator over all queries such that the probability modification is to iteratively assign higher weights to the (new) queries with bad results, allowing the subsequent weak model to focus on these challenging examples and improve overall accuracy. Finally, it integrates a learning rate $\eta^Q_t$ in the reweighting process, which is proportional to the privacy budget allocated for boosting ($\epsilon_2$). Then, two separate privacy budgets are required: one for generating synopses ($\epsilon_1$) and the other for the reweighting step ($\epsilon_2$). Consequently, AdaBoosting with DP \cite{dwork2010boosting} becomes a privacy-expensive mechanism. 

DP with AdaBoosting's runtime depends on the number of queries and the base synopsis generator's runtime. The accuracy of the mechanism scales with $\sqrt{n}\log m$ ($n$: size of the data, $m$: number of queries). For details of the synopses and base synopsis generator, see \cite{dwork2009complexity, dwork2010boosting}.

\section{Overview of DPI Framework}
\label{sec:frame}
In this section, we present the DPI framework comprising the DPI Interface and the DPI Brain. The DPI Interface manages temporal input streams, stores sequential DP results for pre-defined queries (``query synopses''), and delivers query results to analysts over time. On the other hand, the DPI Brain focuses on online learning. The details of the DPI framework are shown in Figure \ref{fig:flowchart}.

\subsection{DPI Components}

\noindent\textbf{DPI Interface} plays a central role in managing the interaction between analysts and the underlying data, including:

\begin{itemize}
  \item \emph{Input Module} collects temporal data streams from users over successive timestamps. It also receives temporal queries $Q_1,\dots, Q_t$ from the analyst.

\vspace{0.05in}

\item \emph{Sensitivity Compression} processes the input data stream with a PDF data structure to bound sensitivity. 

\vspace{0.05in}

  \item \emph{Output Module} returns temporal output query synopses as a sequence of PDFs with DP guarantees for a pool of queries $Q_1,\dots, Q_t$, 
   which can be converted to support most data analyses on the input stream. 

\end{itemize}

\vspace{0.05in}

\noindent\textbf{DPI Brain} comprises the following modules.

\begin{itemize}

\item \emph{Privacy Budget Allocation} generates the privacy budget $\epsilon_t$ for each time slot $t$. 

\vspace{0.05in}

    \item \emph{Generator Module} generates the sampled queries ($Q_s^t$) and a set of sampled synopses ($\mathcal{A}^t_{Q_s^t}$) for each query $q \in Q_s^t$. It interacts with the DPI Boosting module, receiving the updated configurations in terms of PDFs to effectively adjust the sampling process.

\vspace{0.05in}

    \item \emph{DPI Boosting Module (bi-directional)} effectively captures the dynamic nature of data streams (new coming data and queries) when answering queries. 

\end{itemize}

\subsection{DPI Workflow}

\vspace{0.05in}

\begin{itemize}
    \item \textbf{Step 1:} \emph{Input Streaming and Sensitivity Compression.} See details in Section \ref{compress}. 

\vspace{0.05in}

\item \textbf{Step 2:} \emph{Privacy Budget Allocation.} In time slot $t$, DPI randomly samples a pre-computed privacy budget $\epsilon_t$ from the series, halved for query-sampling and synopsis-sampling (see details in Section \ref{sec:privacyallocation}).

\vspace{0.05in}
    
    \item \textbf{Step 3:} \emph{0-DP Synopsis Generation.}  The Generator module generates a set of parameters, including a sampled subset of queries $Q^t_s$ over the pre-defined pool of queries $Q$ and a set of sampled synopses $\mathcal{A}^t_{Q_s^t}$ for queries $q \in Q^t_s$. To derive the sampled synopses $\mathcal{A}^t_{Q_s^t}$, we need to generate a pool of synopses $\mathcal{A}$ that can represent all possible distributions in a dataset. This generation can avoid consuming privacy budget (0-DP, see details in Section~\ref{sec:synopsis2}). 

\vspace{0.05in}
    
    \item \textbf{Step 4:} \emph{DPI Boosting.} The DPI Boosting module improves accuracy via bi-directional reweighting of queries and synopses (see details in Section \ref{sec:dpboosting}).
\begin{itemize}

  \item  \emph{Reweighting Synopsis-Sampling Probability.} In this phase, the DPI Boosting module uses a portion of the privacy budget allocated for DPI to perform reweighting of the Synopsis-Sampling Probability. This step aims to improve the accuracy of synopses and update the sampling PDF accordingly.

\vspace{0.05in}

    \item \emph{Reweighting Query-Sampling Probability.} In this phase, another portion of the privacy budget is used to reweight the Query-Sampling Probability. This step ensures that the most informative queries are given higher probabilities in the sampling process.
\end{itemize}

\vspace{0.05in}

    \item \textbf{Step 5:} \emph{Output Private Streams (PDFs).} DP query results (temporal) are provided to analysts over time.
\end{itemize}

\section{Detailed Design in DPI}
\label{sec:workflow}
\subsection{Sensitivity Compression}
\label{compress}

To enable effective privacy for infinite data streams, the sensitivity of the analysis must be bounded. Sensitivity represents the maximum change in the output resulting from the presence or absence of any single user's data. This motivates the need for sensitivity compression in the infinite stream setting. Our key insight is adopting a PDF data structure to represent the data stream. This provides an inherent sensitivity bound, as changing one user's data can only alter the PDF by a maximum total variation distance of $2$ (considering the two PDFs of two adjacent inputs are two normalized vectors with all the entries summing up to 1 in each). Denote $D$ and $D'$ as the output distributions of two neighboring inputs for any query at time slot $t$. Thus, for all $D$ and $D'$, their sensitivity is derived as $\Delta_Q = ||D - D'||_1 \leq 2$. By bounding the PDF difference, we effectively compress sensitivity to a fixed constant for all the queries. This allows formal privacy guarantees to be achieved over infinite data. Furthermore, the PDF representation maintains high utility for the analyses in most real-time applications.

\subsection{0-DP Synopsis Generator}
\label{sec:synopsis2}

The synopsis generation in DPI is responsible for generating a sampled synopsis for each query $q \in Q_s$. Our key observation is that establishing a universal synopses pool can be done in a privacy-free (0-DP) and efficient way, enabling the system to privately answer a wide range of queries.

\subhead{Establishing a Pool of PDFs} 
DPI utilizes a pioneering 0-DP synopsis generation process to navigate this challenge (AdaBoosting \cite{dwork2010boosting} cannot solve). This method constructs a set of PDFs that can universally emulate any PDF with minimal error (see detailed analysis in Appendix~\ref{subsec:pool}). These PDFs are then employed to generate answers for a set of queries, delivering approximated results without requiring access to raw data. Various types of queries can be addressed using these PDFs, including but not limited to Point Queries, Range Queries, Aggregation Queries, Top-K Queries, Join Queries, and Correlation Queries \cite{wang2013survey, soliman2008probabilistic,ilyas2004supporting,li2003multi}. 

\subhead{Quantization} 
The first step in the 0-DP synopsis generation process is quantization, which discretizes the values that PDFs can take. We define the precision level of quantization as $p$, representing one unit of probability (e.g., $p = 0.001$). By dividing the range of possible values into intervals of size $p$, we can express the quantized value of a continuous variable $x$ as $\text{quantize}(x) = \text{round}(x/p) \cdot p$. This ensures that the values are discretized into intervals of width $p$.

\subhead{Sampling Synopses with Multinomial Distribution} 
After quantization, we can generate all possible PDFs by distributing the available units of probability (denoted as $n_p = \frac{1}{p}+1$) among the domain of interest, denoted as $\mathcal{K}$. Since $n_p$ should be the same as the dataset size $n$, we can derive the value of $p$. Then, we can model this distribution using a multinomial PDF, denoted as $P(x)$, where $x$ represents the random variable that takes on values from the domain $\mathcal{K}$. The multinomial PDF captures the probabilities of outcomes for this discrete random variable with $k$ categories.

During each multinomial sampling, we treat each category equally and sample them uniformly for $n_p$ times, leading to a generation of a synopsis (the distribution of $n$ data points with $k$ categories). To create the 0-DP synopses pool, we leverage the multinomial sampling. Let $n$ be the number of samples drawn from the multinomial distribution and denote the sampled values as ($x_1, x_2, \ldots, x_n$), where each $x_i$ represents a possible outcome from the multinomial distribution defined by $P(x)$. We should repeat the multinomial sampling process for $N$ times to get the pool of synopses ($P_1, P_2, \ldots, P_N$). The algorithm of $0$-DP synopsis generator is shown in Appendix \ref{subsec:pool}.

Next, for queries in $Q^t_s$, a set of synopses $\mathcal{A}^t_{Q^t_s}$ should be sampled according to the most recently updated sampling probability $P_\mathcal{A}(t)$. We will discuss how to update the sampling weights for different synopses in the DPI Boosting. The DPI Boosting algorithm iteratively improves the accuracy of DP queries on streaming data through a bi-directional reweighting process that favors challenging queries and high-quality synopses. By continually reweighting query and synopsis sampling probabilities based on estimated errors, DPI Boosting adapts the new coming data to optimize accuracy under DP constraints.

\subhead{0-DP Guarantee for the Synopsis Generator}
The synopsis generation process in DPI is uniquely designed to ensure that it does not consume any privacy budget since it operates without accessing to any raw data. Initially, the Synosis Generator invokes quantization to discretize potential values for any PDFs without using any raw data. Subsequently, through the procedure of multinomial synopses sampling, all potential PDFs are generated by distributing available units of probabilities across a defined domain of interest, bypassing the need for raw data access. The detailed steps and theoretical analysis are given in Appendix \ref{subsec:pool}.

\subsection{DPI Boosting (Bi-directional)}
\label{sec:dpboosting}

The DPI Boosting continuously evaluates and adjusts the importance of a subset of queries (``sampled queries''). It allows less accurate sampled queries to be reweighted higher, while the reweighting process is applied to focus more on accurate synopses for each sampled query. This bi-directional boosting process effectively captures the dynamic nature of data streams when answering queries. In this section, we will provide a comprehensive understanding of the DPI Boosting module and its role in enhancing the utility and privacy guarantees of the DPI.

\subhead{Bi-directional Boosting}
DPI Boosting significantly revises the AdaBoosting under DP constraints \cite{dwork2010boosting} to reweight the sampling weights. Specifically, we focus on the optimization techniques used to reweight the query and synopsis distribution during each round of DPI Boosting. The goal is to assign higher weights to poorly handled queries, and lower weights to well-handled queries, ultimately leading to enhanced accuracy and utility of the final synopsis. 

Recall that we sample a subset of synopses $\mathcal{A}^t_{Q^t_s}$ for the sampled subset of queries $Q^t_s$. In the DPI Boosting, we should evaluate the difference between the true query result and the query result based on the synopsis. Our weight boosting differs from the AdaBoosting \cite{dwork2010boosting} from two aspects: (1) there is a subset of synopses instead of one synopsis to reweight the sampling probability, and (2) the reweighted sampling probability should be used both for queries in a pool and synopses in a pool. The details of DPI boosting are shown in Algorithm \ref{algm:mergedDPI}. Notice that the size of query pool $Q$ may be different from the size of synopses pool $\mathcal{A}$ and it should not affect the reweighting process since normalization is applied for queries and synopses.  

\begin{algorithm}
\footnotesize
\caption{DPI Algorithm}
\label{algm:mergedDPI}

\KwIn{data stream $S$, privacy budget $\epsilon$, synopsis error bound $\lambda$, error bound for overfitting $\mu$, decay rate of decaying series $\zeta$, query pool $Q$, synopsis pool $\mathcal{A} $ (Algorithm \ref{algm:zero-syno}) }
\KwOut{private output streams $\mathcal{A}^t_{Q_s^t}$ (PDFs)}

\DontPrintSemicolon

Initialize $P_Q(1)$, $P_\mathcal{A}$ to uniform distribution 

\ForEach{$S_t$ in Stream $S$}{
    \tcc{Random Budget Allocation} 
    $\epsilon_t= \text{RBA}(t)$ \tcc{Algorithm \ref{algm:RBA1}}
    $\epsilon_{Q,t} = \epsilon_{\mathcal{A},t} = \frac{\epsilon_t}{2}$\;
    \tcc{$Q^t_s $ sampling} 
    $Q^t_s\sim   P_Q(t)$\;
       \tcc{$\mathcal{A}^t_{Q_s^t}$ sampling}  
       $\mathcal{A}^t_{Q_s^t} \sim P_\mathcal{A}(t) $\;
        
        \tcc{DPI Boosting}
        \If{$|q(S_t)-\mathcal{A}^t_{Q_s^t}| < \lambda $ }{$P_\mathcal{A}(t)[q] \leftarrow 1$, $P_Q(t) \leftarrow -1$}
        \If{$|q(S_t)-\mathcal{A}^t_{Q_s^t}| \geq \lambda + \mu $ }{$P_\mathcal{A}(t)[q] \leftarrow -1$, $P_Q(t) \leftarrow 1$}
        \Else{$P_\mathcal{A}(t)[q], P_Q(t) \leftarrow 1 - 2(|q(D)-\mathcal{A}^t_{Q_s^t}|-\lambda)/\mu$}
        $P_\mathcal{A}(t+1)$ = Normalize $P_\mathcal{A}(t)$\;
    }
    $u_{q,a} \leftarrow exp(-\alpha_t \cdot \sum_{j=1}^t P_\mathcal{A}(t)[q])$, where $\alpha_t=\frac{1}{2}\ln[(1+2\eta_t)/(1-2\eta_t)]$ and $\eta_t$ is the learning rate in time slot $t$\;
normalization factor $Z_t \leftarrow \sum_{q \in Q} u_{q,a}$\;
update $P_Q(t+1)= u_{q,a}/Z_t$, $P_\mathcal{A}(t)[q] = u_{q,a}/Z_t$\;
 
    \Return{$\mathcal{A}^t_{Q_s^t}$} as private output streams\;

\end{algorithm}

\subhead{Choice of Parameters in DPI Boosting} The reweighting depends on the parameters $\lambda$, $\mu$, and $\eta$, which all evaluate the accuracy of synopses. Note that $\lambda$ denotes the error bound of the chosen synopses, while $\mu$ refers to an additional error bound that is tolerated to avoid overfitting during DPI boosting. 
The parameter $\eta$ is determined by the specified privacy budget $\epsilon$. We find the optimal value of $\eta_t$ for each time slot with the given $\epsilon$ to minimize the error bound of DPI framework under the privacy constraints. The details are presented in Section \ref{sec:utility} (Theorem \ref{thm:final}). 

\subsection{Random Budget Allocation (RBA)}
\label{sec:privacyallocation} 
As outlined in Algorithm~\ref{algm:RBA1} (in Appendix \ref{sec:RBA2}) to mitigate the risk of \textit{systematic depletion} over time, DPI employs a random selection process when defining $\epsilon_t$ (``Random Budget Allocation (RBA)'') rather than sequentially picking elements from the infinite series of potential budget values. It employs an exponential PDF with $p(\epsilon_t=x) = \exp(-\Lambda \cdot x)$, where $\Lambda$ is an astronomically large number. This could generate very tiny samples, with the probability of obtaining such small values approaching 1.

After generating such very tiny values, they are mapped to the elements in the series $\varepsilon$ sampled from the exponential PDF. The closest one to the sampled value is then returned, denoted as $\epsilon_t \leftarrow \argmin\limits_{\epsilon \in \varepsilon_t } \left|\epsilon - S \right|$. We note that since RBA generates samples without replacement in the series, they tend to get smaller and smaller over time. However, the sum of $t$ independent samples from an exponential PDF with the rate parameter $\Lambda$ follows a Gamma distribution with a shape parameter $t$ and a rate parameter $\Lambda$. The mean ($\upsilon$) of a Gamma distribution with shape parameter $\omega$ and rate parameter $\theta$ is given by $\frac{\omega}{\theta}$. In our case, for the sum of $t$ exponential samples, given the shape parameter $t$ and the rate parameter $\Lambda$, a mean ($\upsilon$) of $\frac{t}{\Lambda}$ can be generated. Therefore, for any $t < \Lambda$, we can effectively preserve $\epsilon - \frac{t}{\Lambda}$. In practice, this bound must be much better as samples of the series are tending to be smaller than exponential PDF's samples due to sampling without replacement. An alternative RBA scheme based on \emph{ranges} is given in Appendix \ref{sec:RBA2}. 

Finally, as shown in Figure \ref{fig:flowchart}, each $\epsilon_t$ is further equally divided into $\epsilon_{Q,t}$ and $\epsilon_{\mathcal{A},t}$ for reweighting the query and synopses distribution, respectively. Specifically, it determines the learning rate that controls how aggressively the query sampling distribution is updated based on the empirical errors. Also, $\epsilon_{\mathcal{A},t}$ plays a key role in governing the privacy budget utilized when reweighting the synopses distribution $P_\mathcal{A}(t)$. By separating the budgets for querying and synopses sampling, DPI allows more flexible control to boost each of these two key components in the overall mechanism. Figure \ref{fig:remaining_budget} plots the remaining budgets after $t$ time slots and it clearly demonstrates that RBA is a thoughtful budget consumption strategy over the infinite disclosure.

\subhead{DPI Framework}
Algorithm \ref{algm:mergedDPI} presents the details for DPI, which outputs PDF(s) in each time slot to dynamic queries with $\epsilon$-DP over infinite streams. W.l.o.g., we take the single-query case for $Q^t_s$ as an example. A set of queries for $Q^t_s$ only need to split the budget for multiple output PDFs.

\section{Privacy and Utility Analysis}
\label{sec:design}
\subsection{Privacy Loss after $t$ Iterations}
\label{thm:privacydpi}
To understand the privacy guarantees of our DPI, we analyze the privacy budget consumed over successive disclosures. Theorem \ref{thrm:privacy} shows how to calculate the total privacy budget $\epsilon$ by DPI after $t$ iterations (e.g., time slots).

\begin{theorem}
\label{thrm:privacy}
Consider a series of DPI mechanisms, each triggered with $\eta^{Q}_i$ and $\eta^{\mathcal{A}}_i$. The overall privacy budget of DPI after $t$ iterations is given by:
\begin{equation}
    \epsilon = \frac{4}{\mu} \sum\limits_{i=1}^{t} \log \left[\left(\frac{1+2\eta^{\mathcal{A}}_i}{1-2\eta^{\mathcal{A}}_i}\right) \left(\frac{1+2\eta^{Q}_i}{1-2\eta^{Q}_i}\right)\right]
\end{equation}
\end{theorem}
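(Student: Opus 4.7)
The plan is to identify which operations in one DPI iteration actually consume privacy budget, upper-bound the sensitivity of those operations, convert that sensitivity into a per-iteration $\epsilon$ via the exponential reweighting factor $\alpha_i$, and finally apply sequential composition across the two directions and across $t$ iterations.

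First I would invoke the $0$-DP guarantee of Section~\ref{sec:synopsis2}: the quantization and the multinomial synopsis sampling only manipulate category counts and never touch the raw stream, so constructing and sampling from the pool $\mathcal{A}$ contributes nothing to $\epsilon$. Hence, within iteration $i$, only the two reweighting operations in Algorithm~\ref{algm:mergedDPI} — the update to $P_{\mathcal{A}}(i)$ and the update to $P_Q(i)$ — can leak information about the stream, and the privacy budget for iteration $i$ must only account for these two.

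Second, I would bound the sensitivity of a single reweighting step. By the PDF sensitivity-compression argument of Section~\ref{compress}, the raw error $|q(S_i) - \mathcal{A}^i_{Q_s^i}|$ shifts by at most $\Delta_Q = 2$ between neighboring streams. The piecewise-linear clipped rule in Algorithm~\ref{algm:mergedDPI} scales the error by $2/\mu$ in the soft regime and saturates at $\pm 1$ otherwise, so the signed weight produced by the boosting rule changes by at most $4/\mu$ on any pair of neighboring streams. Plugging this sensitivity into the exponential update $u_{q,a}\propto \exp(-\alpha_i\cdot P_{\cdot}(i)[q])$ with $\alpha_i = \tfrac{1}{2}\ln\!\bigl[(1+2\eta_i)/(1-2\eta_i)\bigr]$, and then applying the standard log-ratio analysis used for multiplicative-weight updates in the DP AdaBoost of Dwork--Rothblum, each single direction at iteration $i$ costs $\tfrac{4}{\mu}\log\!\bigl[(1+2\eta_i^{(\cdot)})/(1-2\eta_i^{(\cdot)})\bigr]$ in DP budget.

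Third, I would conclude by sequential composition. Both the synopsis reweight and the query reweight at iteration $i$ look at the same stream $S_i$, so their costs add; summing across $i=1,\dots,t$ and merging the two logs into the log of a product reproduces exactly the claimed $(4/\mu)\sum_i \log[\cdot]$ identity. The RBA of Section~\ref{sec:privacyallocation} only governs \emph{how} $\epsilon_i$ is split between directions and across rounds — it does not alter this exact accounting. The main obstacle I expect is the sensitivity argument: the clipped update has three regimes ($<\lambda$, $\in[\lambda,\lambda+\mu]$, $>\lambda+\mu$), and one has to verify that the worst-case shift across every regime-pair on neighboring streams is still $4/\mu$, particularly when the two neighbors fall into different regimes so that saturation interacts with the linear slope. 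A secondary subtlety is justifying that the synopsis-reweight and the query-reweight must be accounted \emph{sequentially} rather than in parallel: both depend on the same data-dependent error signal at time $i$, so they are two consecutive $\epsilon$-DP releases on the same record, not releases on disjoint records.
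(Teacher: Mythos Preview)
Your proposal is correct and follows essentially the same route as the paper's own proof: the paper also argues that only the two reweighting steps consume budget (synopses are $0$-DP), bounds $|P_{\mathcal{A}}(t)[q]-P'_{\mathcal{A}}(t)[q]|\le 2\Delta_Q/\mu$ from $|d_q^t-d_q'^t|\le\Delta_Q$, converts this via $\alpha_i=\tfrac12\log\bigl[(1+2\eta_i)/(1-2\eta_i)\bigr]$ into a per-round log-ratio bound on both the unnormalized weights and the normalizer $Z_t$, and then sums the two directions across $t$ rounds by composition. The only cosmetic difference is that the paper carries $\Delta_Q$ symbolically (and implicitly sets $\Delta_Q=2$) whereas you substitute $\Delta_Q=2$ up front; your regime-crossing and sequential-vs-parallel remarks are extra care that the paper leaves implicit.
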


\begin{proof}
 Each round of disclosure $t$ results in the use of an additional privacy budget $D_\infty(\mathcal{D}_t||\mathcal{D}'_t)$. Dwork et al.~\cite{dwork2010boosting} showed that, under the assumption that synopses are known, this quantity is $\frac{2}{\mu}  \log \left(\frac{1+2\eta_t}{1-2\eta_t}\right)$ privacy budget. Briefly, for each query $q \in Q$, let $d_q^t = |q(x) - A_t(q)|$ and $d_q'^t = |q(x') - A_t(q)|$. It follows that $|d_q^t - d_q'^t| \leq \Delta_Q$. This implies $|P_\mathcal{A}(t)[q] - P'_\mathcal{A}(t)[q]| \leq \frac{2\Delta_Q}{\mu}$, and thus:
\[ e^{-2\alpha \cdot t \cdot \frac{\Delta_Q}{\mu}} \leq \frac{u_t,q}{u'_t,q} \leq e^{2\alpha \cdot t \cdot \frac{\Delta_Q}{\mu}} \]

Define normalization factor for two adjacent database as $Z_t = \sum_{q \in Q} u_t,q$ and $Z'_t = \sum_{q \in Q} u'_t,q$, we have:
\[ e^{-2\alpha \cdot t \cdot \frac{\Delta_Q}{\mu}} \leq \frac{Z_t}{Z'_t} \leq e^{2\alpha \cdot t \cdot \frac{\Delta_Q}{\mu}} \]

The claim follows from the above because $\mathcal{D}_{t+1}[q] = \frac{u_t,q}{Z_t}$, $\mathcal{D}'_{t+1}[q] = \frac{u'_t,q}{Z'_t}$ and replacing $\alpha=\frac{1}{2}\log \left(\frac{1+2\eta}{1-2\eta}\right)$. 

Therefore, in DPI, since we update two PDFs $P_Q(t)$ and $P_\mathcal{A}(t,q)$, each of these two PDFs will consume a privacy budget of $\frac{4}{\mu} \sum_{i=1}^{t} \log \left(\frac{1+2\eta^{\mathcal{A}}_i}{1-2\eta^{\mathcal{A}}_i}\right)$ and $\frac{4}{\mu} \sum_{i=1}^t \log \left(\frac{1+2\eta^{Q}_i}{1-2\eta^{Q}_i}\right)$, respectively. This completes the proof. 
\end{proof}

The convergence of the privacy bound in the DPI can be analyzed through the total privacy budget $\epsilon$ utilized after $t$ iterations. Next, assuming the privacy budet is divided equally between the two reweighting mechanisms, i.e., $\epsilon_Q=\epsilon_\mathcal{A}=\frac{\epsilon}{2}$, the optimal series can be derived.

\subsection{Total Privacy and Utility Loss}
\label{sec:utility}
According to Dwork et al.~\cite{dwork2010boosting} (Claim 4.3), after $t$ rounds of DP with AdaBoosting, while the mechanism produces $\lambda$-inaccurate results, the loss probability $\mathcal{L}(t)$, derived via the cardinality of a subset of queries ($Q_{\text{bad}} \subset Q$), is upper-bounded by $\left(\sqrt{1 - 4\eta^2} \cdot t\right) \cdot |Q|$. A very similar observation to the proof but utilizing different $\eta_i$ in each iteration $i$ (from a converging series) will lead to $\sum_{i=1}^{t} \mathcal{L}_i= \sum_{i=1}^{t} \frac{1}{2} \log \left[(1+2\eta_i)(1-2\eta_i)\right]$. 

\begin{theorem}[Proof in Appendix~\ref{thm:3}]
\label{thm:utilitydpi}
Consider a series of DPI mechanisms, and each triggered with $\eta^{Q}_i$ and $\eta^{\mathcal{A}}_i$.  The overall error of DPI, represented as $\sum_{i=1}^{t} \mathcal{L}_i=\left[\mathbb{P}(|\mathcal{A}(q,t) - q(S_1, S_2, \cdots, S_t)| > \lambda + \mu)\right]$, is upper-bounded by:
\begin{equation}
    \frac{1}{4} \sum_{i=1}^{\infty} \log \left[(1 - 4 (\eta^{Q}_i)^2) \cdot \left(1 - 4 (\eta^{\mathcal{A}}_i)^2\right)\right]
\end{equation}
\end{theorem}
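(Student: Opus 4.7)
The plan is to lift the single-direction guarantee of Dwork--Rothblum (Claim 4.3 of~\cite{dwork2010boosting}) to the bi-directional, time-varying, infinite-horizon setting of DPI, and then take logarithms to turn the resulting product into the additive bound in the statement. First, I would revisit the one-direction analysis: for a fixed learning rate $\eta$, after $t$ rounds of DP AdaBoosting, the fraction of queries for which the synopsis is $(\lambda+\mu)$-inaccurate is bounded by $(\sqrt{1-4\eta^{2}})^{t}\cdot |Q|$. The key observation is that this bound comes from a telescoping product on the weight-normalization factors $Z_i$ (exactly the $Z_t$ appearing in Algorithm~\ref{algm:mergedDPI}): each round the ``bad'' weight mass shrinks by a factor of at least $\sqrt{1-4\eta^{2}}$.

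Second, I would generalize from a constant learning rate to the per-round rates $\eta^{Q}_{i},\eta^{\mathcal{A}}_{i}$ produced by the RBA scheme. Because the weight update in round $i$ uses $\alpha_i=\tfrac{1}{2}\log\frac{1+2\eta_i}{1-2\eta_i}$, the shrinkage step is independent round-to-round, so the telescoping product generalizes to $\prod_{i=1}^{t}\sqrt{1-4\eta_i^{2}}$. Taking $\log$ yields $\tfrac{1}{2}\sum_{i=1}^{t}\log(1-4\eta_i^{2})$, which is exactly the form reported for the unidirectional case in the paragraph preceding the theorem.

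Third, I would handle the bi-directional coupling that is the novelty of DPI: at every round both $P_{Q}(t)$ and $P_{\mathcal{A}}(t)$ are reweighted, each with its own learning rate and its own privacy half-budget $\epsilon_{Q,t}=\epsilon_{\mathcal{A},t}=\epsilon_t/2$. I would argue that the overall failure event ``the sampled synopsis answers the sampled query with error exceeding $\lambda+\mu$'' factors into (i) the query-side failure, controlled by the $P_{Q}$ reweighting, and (ii) the synopsis-side failure, controlled by the $P_{\mathcal{A}}$ reweighting. Since the two reweightings operate on orthogonal index sets (queries vs.\ synopses), the Dwork--Rothblum shrinkage applies independently to each; multiplying the two bounds and then taking logs produces the combined term $\log[(1-4(\eta^{Q}_i)^{2})(1-4(\eta^{\mathcal{A}}_i)^{2})]$ inside the sum. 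Summing over $i=1,\dots,\infty$ gives the stated expression (with the extra $1/4$ factor arising from the two independent $\tfrac{1}{2}\log$ contributions further averaged across the joint failure event, which I would verify by carefully unpacking $\mathcal{L}_i$ in the two-sided case).

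The main obstacle, as I see it, is step three: justifying that the two reweighting dynamics can be decoupled for the purpose of the shrinkage argument, even though they share the same underlying stream $S_t$ and the same observed error $|q(S_t)-\mathcal{A}^{t}_{Q^{t}_s}|$. One needs to show that conditioning on the observed error sequence, the $P_{Q}$ and $P_{\mathcal{A}}$ updates are weight-wise multiplicative and can be analyzed with two independent potential functions; this is where the symmetric split of $\epsilon_t$ and the bi-directional rule (the $+1/-1$ assignments to both $P_\mathcal{A}(t)[q]$ and $P_Q(t)$ in Algorithm~\ref{algm:mergedDPI}) are crucial. A secondary technical point is the convergence of the infinite sum: since each $\eta^{Q}_i,\eta^{\mathcal{A}}_i$ is tied to the RBA-sampled budget $\epsilon_t$ from a converging series (Section~\ref{sec:privacyallocation}), and $\log(1-4\eta^{2})\sim -4\eta^{2}$ for small $\eta$, the series is dominated by a convergent quadratic in $\epsilon_t$, so the infinite upper bound is finite and hence meaningful as $t\to\infty$.
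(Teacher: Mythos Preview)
Your proposal follows essentially the same approach as the paper: invoke Claim~4.3 of~\cite{dwork2010boosting}, replace the constant-$\eta$ shrinkage by a telescoping product $\prod_{i}\sqrt{1-4\eta_i^{2}}$ for per-round rates, multiply the query-side and synopsis-side factors to handle the bi-directional reweighting, and take logarithms to obtain the additive form. The paper's own proof is in fact considerably terser than your outline---it simply asserts the product bound $\prod_{i=1}^{t}\sqrt{(1-4(\eta^{Q}_i)^{2})(1-4(\eta^{\mathcal{A}}_i)^{2})}$ and takes the log, without addressing the decoupling concern you correctly flag as the main obstacle, so your plan already exceeds the level of justification the paper provides.
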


Theorems ~\ref{thm:privacydpi} and ~\ref{thm:utilitydpi} allow us to formulate an optimization problem for calculating the parameter $\eta$ to maximize utility under a total privacy constraint.
\begin{eqnarray}
\label{eq:tradeoff}
& \min_{\eta_i, i\in\left[\mathbb{N}\right]} \frac{1}{4} \sum_{i=1}^{N \rightarrow \infty} \log \left[(1-4 (\eta^{Q}_i)^2)\left(1-4 (\eta^{\mathcal{A}}_i)^2\right)\right] \nonumber\\
&\text{w.r.t. } \frac{4}{\mu} \sum\limits_{i=1}^{N \rightarrow \infty} \log \left[\left(\frac{1+2\eta^{\mathcal{A}}_i}{1-2\eta^{\mathcal{A}}_i}\right) \left(\frac{1+2\eta^{Q}_i}{1-2\eta^{Q}_i}\right)\right]=\epsilon  
\end{eqnarray}

To derive the optimal series, we model the problem using the continuous representations $\eta(x)$ and $\mathcal{L}(\eta(x))$. This simplifies the analysis using the integral calculus.
\begin{definition}
Continuous functions $\eta:\mathbb{R}^{+}\rightarrow (0,0.5)$ and $\mathcal{L}:\eta(x)\rightarrow \frac{1}{2} \log (1-2 \eta(x))\left(1+2 \eta(x)\right)$ are defined to be, the continuous representation of $\eta_i$, and its corresponding continuous loss function obtained by interpolating a continuous function over the sequences of $<\eta^{\mathcal{A}}_i, \eta^{Q}_i>$.
\end{definition}
Moreover, we make two reasonable assumptions: 
\begin{enumerate}
    
    \item The interpolation error to be negligible. 

\vspace{0.05in}
    
    \item The function $\eta(x)$ is twice continuously differentiable, indicating that it belongs to the class $C^2$. 
\end{enumerate}

\subhead{Total Privacy over Infinite Iterations} 
Under this continuous model, we derive key relationships between the total privacy budget $\epsilon$ and the overall utility loss $L$ in Remark \ref{rem:utility} and Theorem \ref{rem1:privacy}. 

\begin{remark}[Proof in Appendix~\ref{rem1:utility}]
\label{rem:utility}
For the overall probability loss $L=\int_{1}^{\infty} \mathcal{L}(x) \,dx$, there exist constants $\zeta$, $m$, $M$ such that 
\small
\begin{align}
  L \geq  \frac{1}{4\zeta}\left(Li_2(M^2) - Li_2(m^2)+ 4Li_2(m)- 4Li_2(M)\right) 
\end{align}
\normalsize
where $Li_2(\cdot)$ is the poly-logarithm Spence's function.  
\end{remark}

\begin{theorem} [Proof in Appendix~\ref{rem:privacy}]
\label{rem1:privacy}
The total privacy bound of DPI over infinite time slots is 
\small
\begin{align}
    \epsilon \geq \frac{2\Delta_Q}{\mu \zeta}\left(Li_2(M^2) - Li_2(m^2)\right)
\end{align}
\normalsize
\end{theorem}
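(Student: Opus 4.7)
The plan is to lift the discrete privacy accounting from Theorem~\ref{thrm:privacy} to the continuous model of $\eta(x)$ and then perform a change of variables driven by the decay rate $\zeta$ so that the resulting integral reduces to Spence's dilogarithm evaluated at $M^2$ and $m^2$. First, I would specialize Theorem~\ref{thrm:privacy} to the symmetric allocation $\eta^{Q}_i = \eta^{\mathcal{A}}_i = \eta_i$ enforced by Algorithm~\ref{algm:mergedDPI} (each round splits $\epsilon_t$ equally between the two reweightings), collapsing the log of the product of ratios to $2\log\frac{1+2\eta_i}{1-2\eta_i}$. Passing to the continuous representation and using $\Delta_Q = 2$ from Section~\ref{compress} then yields $\epsilon = \frac{8}{\mu}\int_1^\infty \log\frac{1+2\eta(x)}{1-2\eta(x)}\,dx$, up to the negligible interpolation error assumed in Remark~\ref{rem:utility}.

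Next, I would perform the substitution $u = 2\eta(x)$, which is the same one underlying the proof of Remark~\ref{rem:utility}. Since $\zeta$ is the decay rate of the budget series pre-computed by RBA, the interpolated $\eta(x)$ satisfies $\lvert dx/du\rvert = 1/(\zeta u)$; thus $u$ ranges monotonically over an interval I identify with $[m,M]$, where $M := 2\sup_x \eta(x)$ and $m := 2\inf_x \eta(x)$. Under this substitution the integral becomes $\frac{8}{\mu\zeta}\int_m^M \frac{\log((1+u)/(1-u))}{u}\,du$.

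To produce the $Li_2(M^2) - Li_2(m^2)$ form, I would use the decomposition $\log\frac{1+u}{1-u} = -\log(1-u^2) + 2\log(1+u)$ together with the elementary bound $\log(1+u) \ge 0$ on $(0,1)$, which gives $\log\frac{1+u}{1-u} \ge -\log(1-u^2)$. The primitive $\int \frac{-\log(1-u^2)}{u}\,du = \frac{1}{2}Li_2(u^2) + C$ follows from the substitution $t = u^2$ in the defining integral $Li_2(z) = -\int_0^z \frac{\log(1-t)}{t}\,dt$. Combining these pieces yields $\epsilon \ge \frac{8}{\mu\zeta}\cdot\frac{1}{2}\bigl(Li_2(M^2) - Li_2(m^2)\bigr) = \frac{2\Delta_Q}{\mu\zeta}\bigl(Li_2(M^2) - Li_2(m^2)\bigr)$, which is the stated bound.

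The hard part will be rigorously justifying the decay identity $\lvert dx/du\rvert = 1/(\zeta u)$ from the algorithmic specification, i.e., checking that the converging series used by RBA really induces an interpolated $\eta(x)$ whose inverse is logarithmic with slope $1/\zeta$. A secondary subtlety is bookkeeping the constants: identifying $M, m$ in the statement with $2\sup_x\eta(x)$ and $2\inf_x\eta(x)$ rather than with $\sup\eta, \inf\eta$, and observing that the $2\log(1+u)\ge 0$ slack discarded here is precisely the contribution that, if retained, would produce the extra $4Li_2(m) - 4Li_2(M)$ correction appearing in the tighter utility bound of Remark~\ref{rem:utility}.
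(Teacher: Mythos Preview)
Your proposal is correct and follows the same route as the paper: pass from the discrete accounting of Theorem~\ref{thrm:privacy} to the continuous model of $\eta(x)$, substitute $u=2\eta(x)$ using the exponential-decay control (the paper phrases this as the inequality $\eta'(x)\ge -\zeta\eta(x)$, which yields $|dx/du|\ge 1/(\zeta u)$ rather than the equality you write, but this is exactly the ``hard part'' you already flagged), and reduce the resulting integral to dilogarithms. Your reduction via the pointwise bound $\log\frac{1+u}{1-u}\ge -\log(1-u^2)$ and the primitive $\tfrac12 Li_2(u^2)$ is a slightly different---and cleaner---maneuver than the paper's, which integrates $\log(1+u)/u$ and $-\log(1-u)/u$ separately to $-Li_2(-u)$ and $Li_2(u)$ and then appeals to the identity $Li_2(z)+Li_2(-z)=\tfrac12 Li_2(z^2)$; both routes discard the same nonnegative contribution (your $2\log(1+u)\ge 0$ is exactly what produces the extra $4Li_2(m)-4Li_2(M)$ terms in Remark~\ref{rem:utility}, as you note) and arrive at the same bound.
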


Since the $Li_2(M^2)$ is the poly-logarithms of the Spence's function, the overall privacy loss is growing logarithmically. Thus, Theorem \ref{rem1:privacy} signifies that the privacy loss is converging and effectively bounded in DPI.

\subhead{Utility Loss} 
We now present our final result: a series that maximizes the accuracy of PDF estimation in DPI. 
\begin{theorem}[Proof in Appendix~\ref{thm:4}]
\label{thm:final}
Given total privacy budget $\epsilon$, DPI mechanism achieves an optimal utility (in a probability loss sense) with the series $\eta_t=\frac{1}{2}\cdot\cfrac{e^X-1} {e^X+1}$:

\small
\begin{align}
  |L| \geq  \frac{1}{4\zeta} \left| \frac{\epsilon}{C} -\frac{2\pi^2}{3}+4 Li_2\left(\sqrt{Li_{2}^{-1}\left(\frac{\pi^2}{6}-\frac{\epsilon}{C}\right)}\right)\right|.
\end{align}
\normalsize

 where  $Li_2^{-1}(\cdot)$ represents the inverse of poly-logarithm of the Spence's function, $X=\frac{1-(Li_2^{-1}(\frac{\pi^2}{6}-\frac{\epsilon}{C}))^t}{t^2|\zeta|}$ $|\zeta|=\sup_{\{t=2,3,\dots\}} \{e^{|\eta_{k-1}-\eta_{k}|}\}$ and $C=\frac{\Delta_Q}{\mu|\zeta|}$.

\end{theorem}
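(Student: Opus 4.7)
The plan is to solve the constrained optimization in Eq.~(\ref{eq:tradeoff}) by combining a symmetry reduction, a calculus-of-variations argument in the continuous representation, and the integrated forms already obtained in Remark~\ref{rem:utility} and Theorem~\ref{rem1:privacy}. First I would exploit the joint symmetry of the objective and constraint under the exchange $\eta^Q_i \leftrightarrow \eta^{\mathcal{A}}_i$: since the per-index loss $\tfrac{1}{2}\log(1-4\eta^2)$ is concave and the per-index privacy cost $\log\tfrac{1+2\eta}{1-2\eta}$ is strictly convex in $\eta$ on $(0,\tfrac{1}{2})$, any optimizer must satisfy $\eta^Q_i=\eta^{\mathcal{A}}_i=\eta_i$, collapsing the bi-variable problem to a single-sequence one.

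Next I would pass to the continuous representation $\eta(x)$ and form the Lagrangian
\begin{equation*}
\mathcal{F}[\eta] = \int_1^\infty \tfrac{1}{2}\log(1-4\eta(x)^2)\,dx - \beta\!\left(\tfrac{8}{\mu}\int_1^\infty \log\tfrac{1+2\eta(x)}{1-2\eta(x)}\,dx - \epsilon\right).
\end{equation*}
A naive pointwise Euler--Lagrange condition would yield $\eta(x)\equiv\mathrm{const}$, which is inadmissible because the constraint integral diverges. This is where the regularity scalar $|\zeta| = \sup_t e^{|\eta_{t-1}-\eta_t|}$ enters: it parameterizes the admissible class of monotone, rapidly decaying sequences and bounds the Jacobian of the change of variable $x\mapsto \eta$. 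Performing this substitution and applying the dilogarithm identities $\int \eta^{-1}\log(1\pm 2\eta)\,d\eta = -Li_2(\mp 2\eta)+C$ together with $Li_2(z)+Li_2(-z)=\tfrac{1}{2}Li_2(z^2)$ collapses the two integrals into exactly the Spence-function expressions of Remark~\ref{rem:utility} and Theorem~\ref{rem1:privacy}, now parameterized by the trajectory endpoints $M=\eta(1)$ and $m=\lim_{x\to\infty}\eta(x)$.

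With those reduced expressions in hand, I would saturate the privacy bound as an equality, solving $Li_2(M^2)-Li_2(m^2)=\epsilon/(2C)$ for the endpoint $M$. Taking the limit $m\to 0^+$ and absorbing the boundary term $Li_2(1)=\pi^2/6$ produced by the reparameterization at the upper integration limit gives $M^2 = Li_2^{-1}(\pi^2/6-\epsilon/C)$; substituting back into the utility expression of Remark~\ref{rem:utility} and tracking the factor of $4$ in the term $4Li_2(m)-4Li_2(M)$ produces the stated lower bound on $|L|$. The closed-form series $\eta_t=\tfrac{1}{2}(e^X-1)/(e^X+1)=\tfrac{1}{2}\tanh(X/2)$ then follows by inverting $2\,\mathrm{artanh}(2\eta_t)=\log\tfrac{1+2\eta_t}{1-2\eta_t}=X_t$ along the optimal trajectory, with the specific decay profile $X_t=(1-a^t)/(t^2|\zeta|)$ for $a=Li_2^{-1}(\pi^2/6-\epsilon/C)$ pinned down uniquely by the series identity $\sum_{t\geq 1}(1-a^t)/t^2 = \pi^2/6 - Li_2(a) = \epsilon/(C|\zeta|)$, which guarantees the infinite privacy budget is spent exactly.

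The main obstacle is the rigorous treatment of the $\zeta$-regularity: because the pointwise Euler--Lagrange optimum is inadmissible, the minimization must be carried out over a restricted variational class for which the change of variable $x\mapsto \eta$ has a Jacobian bounded by $|\zeta|$. A careful argument is needed to verify that KKT stationarity holds within this class and that the saturated constants $\pi^2/6$ on the privacy side and $2\pi^2/3 = 4\cdot \pi^2/6$ on the utility side emerge from correctly handling boundary contributions at $\eta=0$ and $\eta=1$. The dilogarithm manipulations themselves are routine once these boundary values are tracked; the conceptual effort lies in identifying the admissible class and justifying the reduction to the one-dimensional Spence-function constraint.
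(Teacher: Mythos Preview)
Your overall architecture---reduce to a single sequence, pass to the integrated Spence-function forms of Remark~\ref{rem:utility} and Theorem~\ref{rem1:privacy}, optimize over the endpoints $(m,M)$, and then read off the per-round $\eta_t$ from a series identity---matches the paper's route. The Lagrangian/Euler--Lagrange machinery you introduce is extra scaffolding the paper does not use (it works directly with the two-parameter $(m,M)$ problem from the start), but it is not by itself a problem.

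The genuine gap is the endpoint step. You take $m\to 0^+$ and solve for $M$, then claim $M^2 = Li_2^{-1}(\pi^2/6-\epsilon/C)$ after ``absorbing the boundary term $Li_2(1)=\pi^2/6$.'' That does not follow: with $m\to 0$ the constraint $Li_2(M^2)-Li_2(m^2)=\epsilon/(2C)$ gives $Li_2(M^2)=\epsilon/(2C)$, and no $\pi^2/6$ appears. The paper does the \emph{opposite}: it argues that, under the constraint, the loss is minimized by pushing the \emph{upper} endpoint to its extreme, $M=1$ (equivalently $2\eta(1)\to 1$). The justification is that the objective splits as $A+B$ with $A=Li_2(M^2)-Li_2(m^2)$ fixed by the privacy constraint and $B=4Li_2(m)-4Li_2(M)<0$, so one wants $M$ as large as possible (equivalently, maximize $m^t+M^t$ term by term via the series expansion of $Li_2$). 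With $M=1$ one gets $Li_2(1)=\pi^2/6$ \emph{for free} and the constraint yields $m^2=Li_2^{-1}(\pi^2/6-\epsilon/C)$, which is what enters both the loss bound and the series for $\eta_t$. In short, the quantity $Li_2^{-1}(\pi^2/6-\epsilon/C)$ is $m^2$, not $M^2$, and it arises from the choice $M=1$, not $m=0$.

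Once you make that correction, your final step is actually cleaner than the paper's: the paper matches the series for $Li_2(M^2)-Li_2(m^2)$ term by term against the per-round budget to obtain $X_t=(1-m^{2t})/(t^2\cdot\text{const})$, whereas your global identity $\sum_{t\ge 1}(1-a^t)/t^2=\pi^2/6-Li_2(a)$ with $a=m^2$ gives the same decay profile and simultaneously certifies that the total budget is spent exactly.
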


DPI leverages the inverse of the poly-logarithm of the Spence's function which can be represented as a converging infinite series to guarantee the loss bound. As $t$ grows, the utility loss will converge. The converging series provides a tight utility guarantee in Theorem \ref{thm:final}.

\subhead{Summary} 
Our theoretical analyses have shown that DPI can provide strong cumulative privacy and high utility for infinite data streams. Specifically, we prove DPI's overall privacy budget grows logarithmically with iterations (Theorem \ref{rem1:privacy}). This allows continuous operation under a fixed total privacy budget $\epsilon$. We also derive an upper bound on DPI's total error over time (Theorem \ref{thm:utilitydpi}), quantifying its utility. Furthermore, we obtain the optimal strategy for $\eta_t$ across iterations to maximize utility under $\epsilon$ (Theorem \ref{thm:final}). These results show that DPI provably converges to O($\epsilon$) error for answering pre-defined queries under privacy budget $\epsilon$.

\section{Experimental Evaluations}
\label{sec:exp}
\subsection{Experimental Setting}

\noindent\textbf{Experimental Datasets.} We conduct all the experiments on three real-world and some synthetic streaming datasets to evaluate the performance of our DPI. Table \ref{table:data} shows the characteristics of the datasets (the time period is partitioned into specific numbers of equal-length time slots).

\small
\begin{table}[!h]
\small
\caption{Characteristics of datasets (after pre-processing)}
\vspace{-0.1in}
\begin{center}
\setlength\tabcolsep{1.5pt}
\begin{tabular}{|c|c|c|c|c|}
\hline
		\textbf{Dataset} &  \textbf{User $\#$} &\textbf{Domain Size}& \textbf{Slot $\#$} & \textbf{Period}\\
		\hline
            COVID-19 & $48,925$& $110$ &$1,000$&  6 month\\ \hline
            Network Traffic & $5,074,413$ & $65,534$ & $2,700$ & 1 hour \\ 
            \hline
            USDA Production & $1,695,038$& $111,989$ & $2,700$& 1960-2022\\ \hline
			
\end{tabular}
\label{table:data}
\vspace{-0.05in}
\end{center}
\end{table}

\normalsize

    \emph{COVID-19 Dataset} \cite{GAURAV_2023_covid} provides a chronological sequence of COVID-19 confirmed cases, deaths, and recoveries, allowing researchers to analyze the trend and pattern of the pandemic over time. Moreover, the attributes are disaggregated by 110 countries and subregions aiding in regional and comparative studies, enhancing the understanding of the geographical variation and spread of the virus.

    \emph{Network Traffic Dataset} \cite{networkdata} is used to study a variety of real-world DDoS attacks. This dataset contains network traffic such as source and destination IP addresses, etc. 
    $65,534$ source IP addresses and their network traffic records were extracted for our experiments.

    \emph{USDA Production Supply and Distribution Dataset} \cite{usdaproduction2020} collects the agricultural production data for 300 commodities with over 100 attributes, including $111,989$ categories of $1,695,038$ production records from 1960 to 2022.

\begin{figure*}[!tbh]
	\centering
	\subfigure[MSE vs $\epsilon$]{
		\includegraphics[angle=0, width=0.245\linewidth]{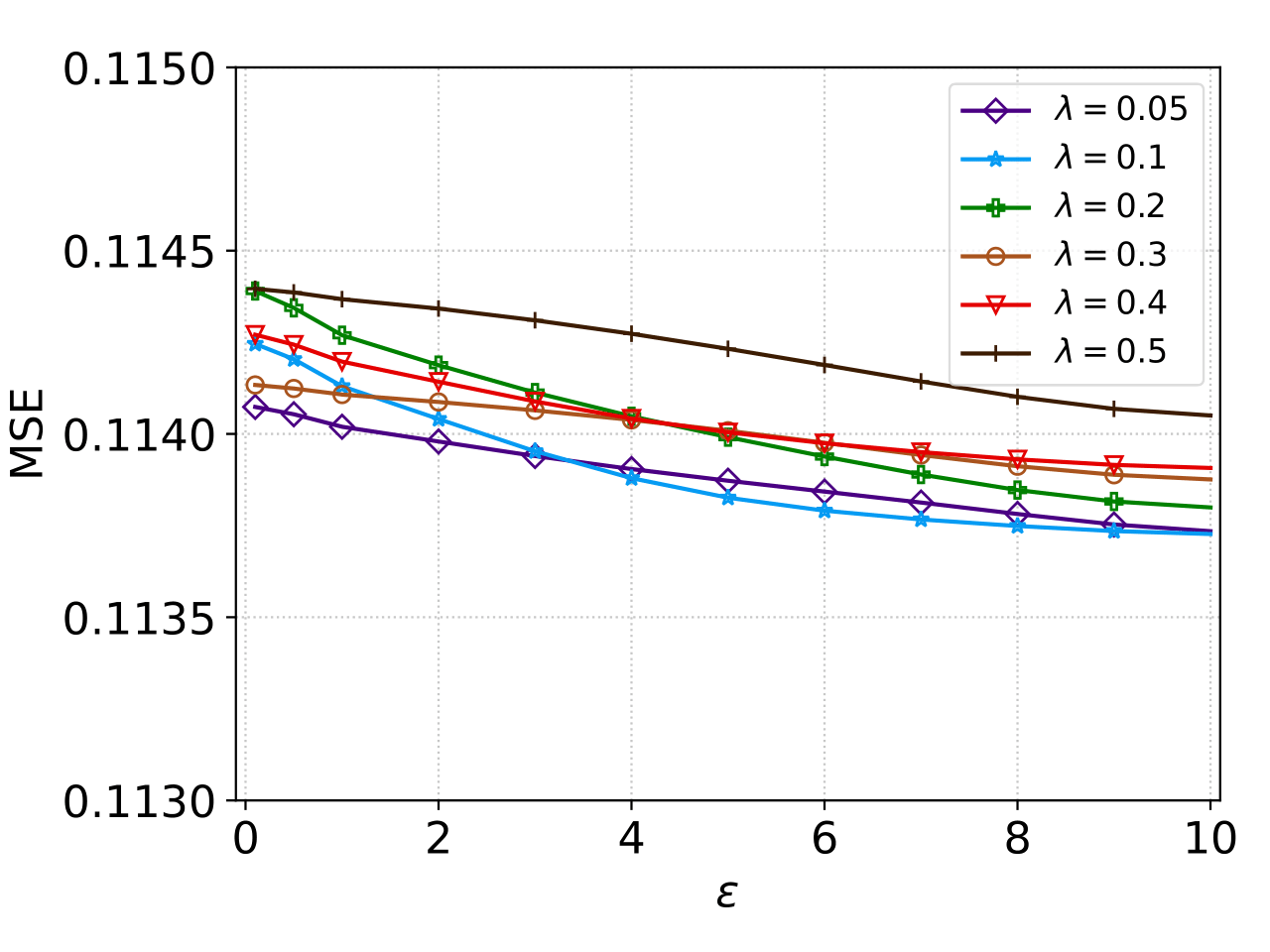}		
        \label{fig:hyper_lamda_mse1} }
		\hspace{-0.18in}
	\subfigure[MSE vs $\epsilon$]{
		\includegraphics[angle=0, width=0.245\linewidth]{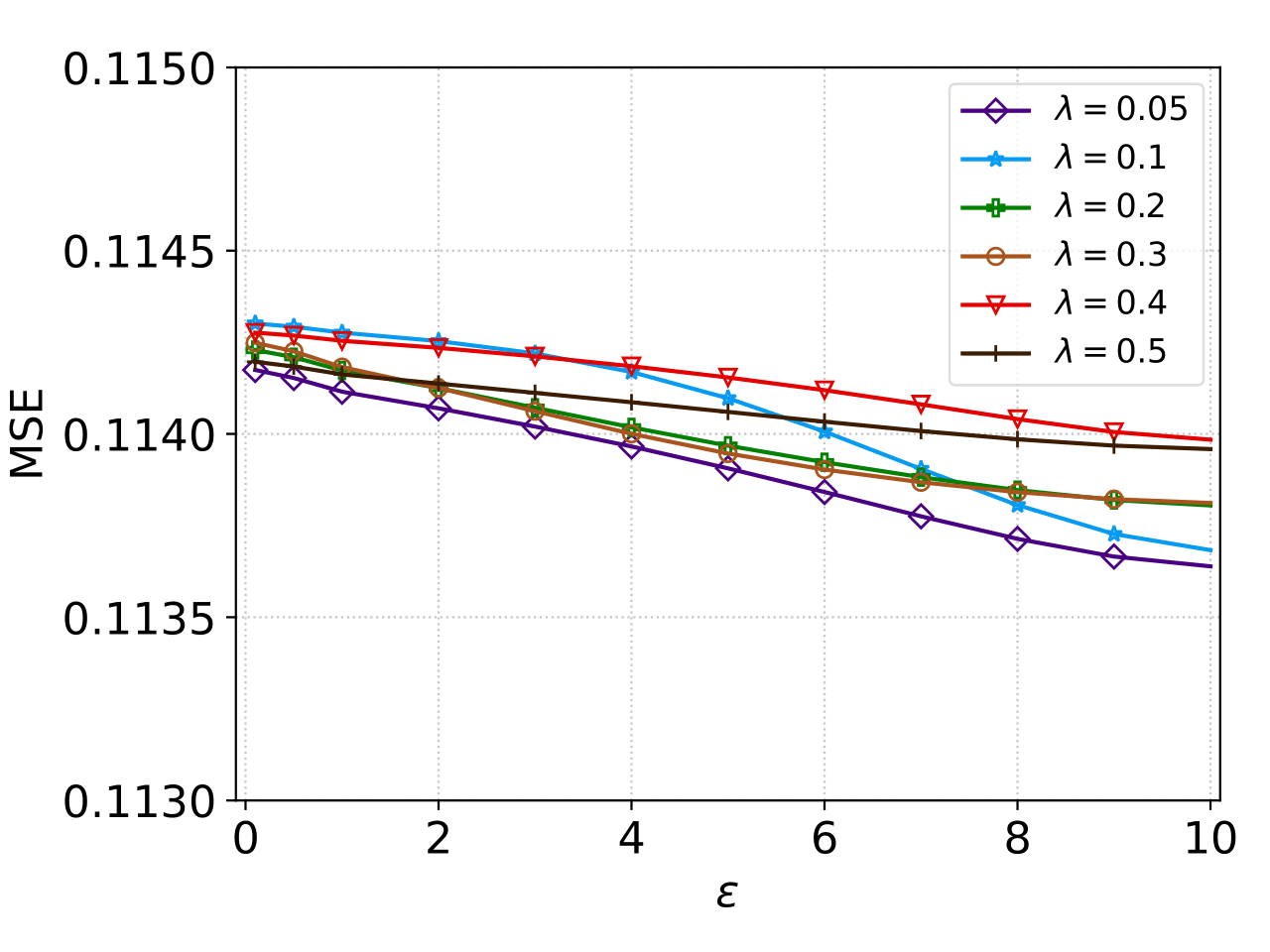}
		\label{fig:hyper_lamda_mse2}}
		\hspace{-0.18in}
	\subfigure[KL divergence vs $\epsilon$]{
		\includegraphics[angle=0, width=0.245\linewidth]{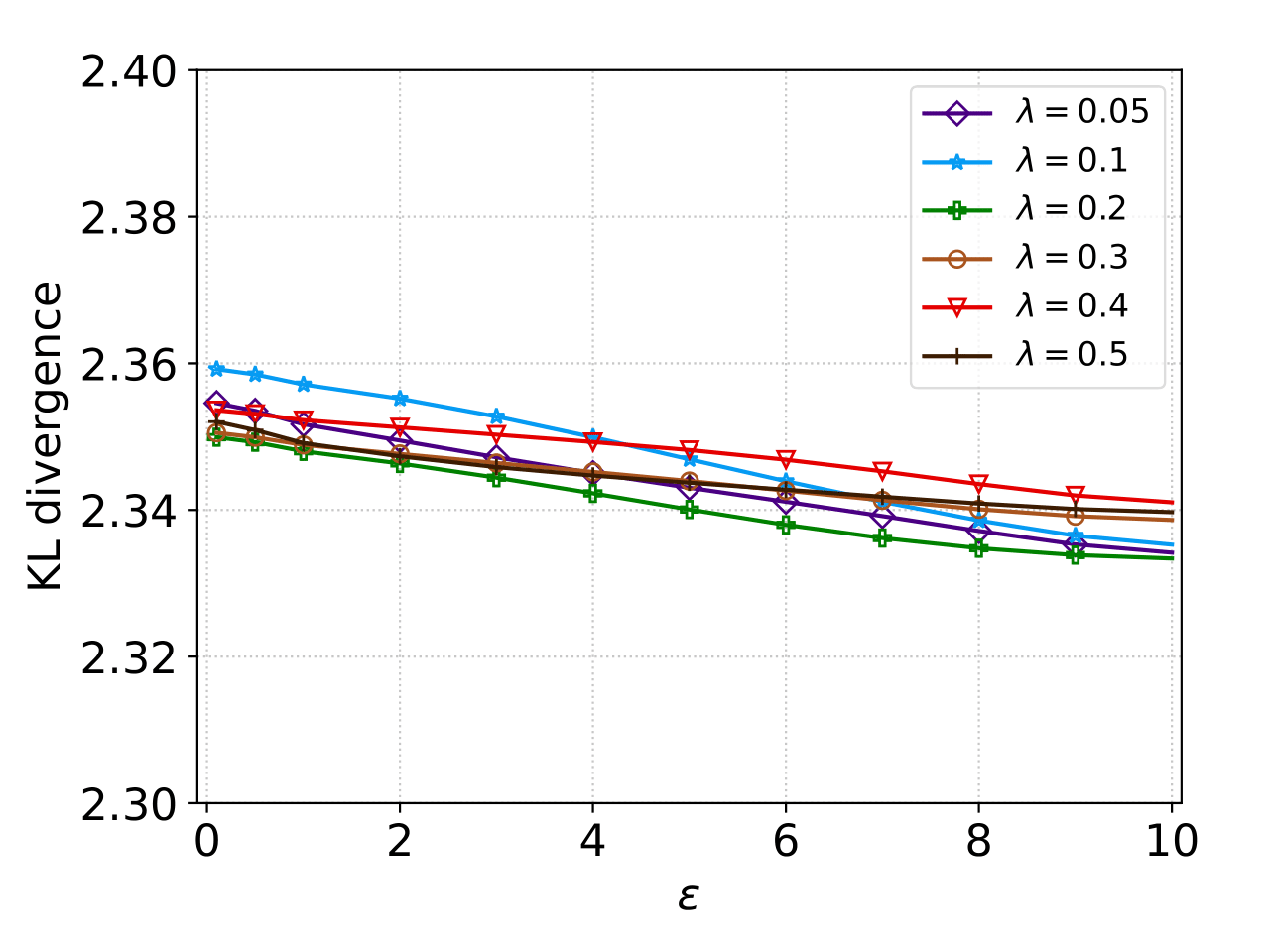}
		\label{fig:hyper_lamda_acc1} }
		\hspace{-0.18in}
	\subfigure[KL divergence vs $\epsilon$]{
		\includegraphics[angle=0, width=0.245\linewidth]{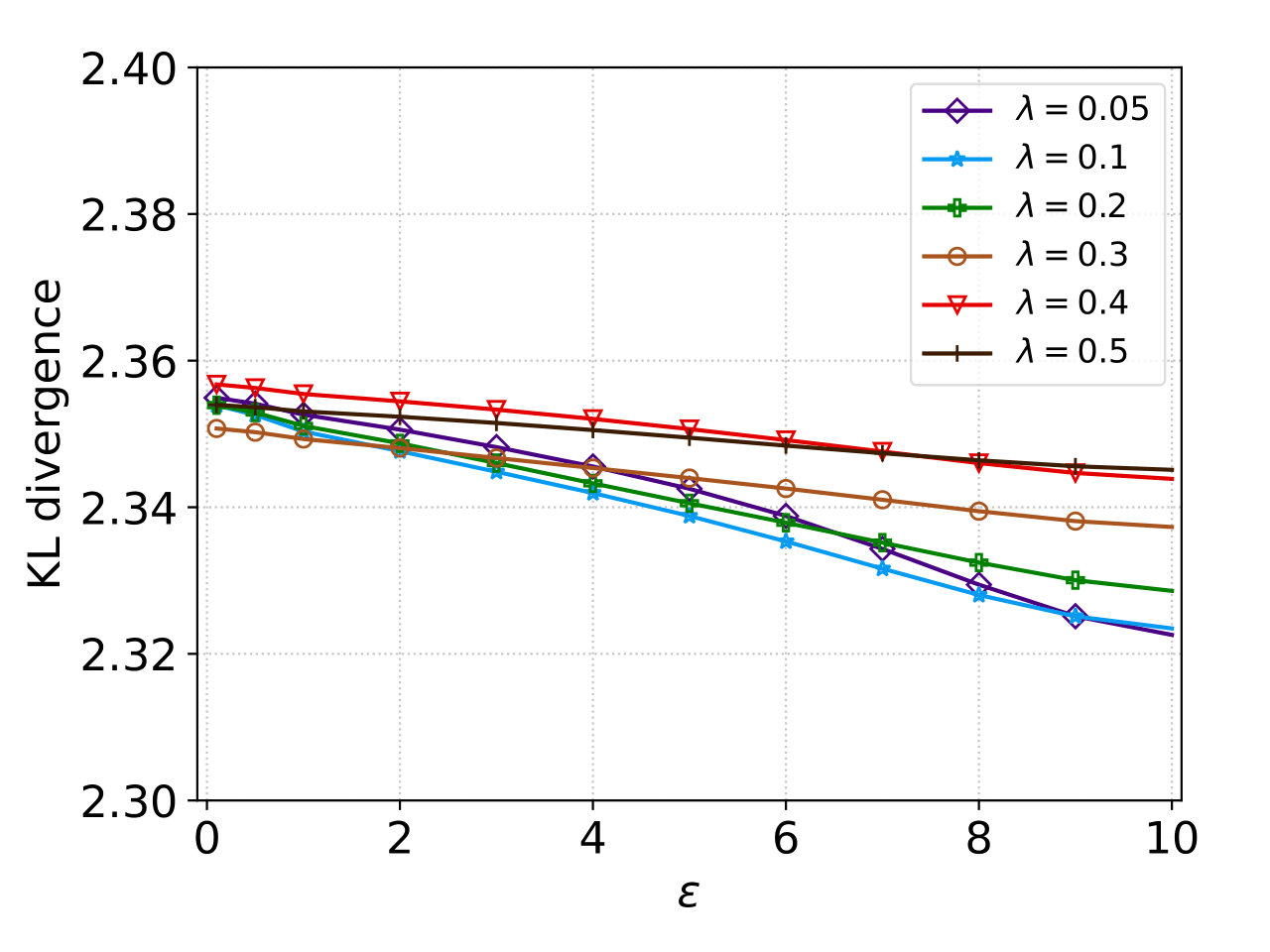}
		\label{fig:hyper_lamda_acc2} }
		\hspace{-0.18in}	
	\subfigure[MSE vs $\epsilon$]{
		\includegraphics[angle=0, width=0.245\linewidth]{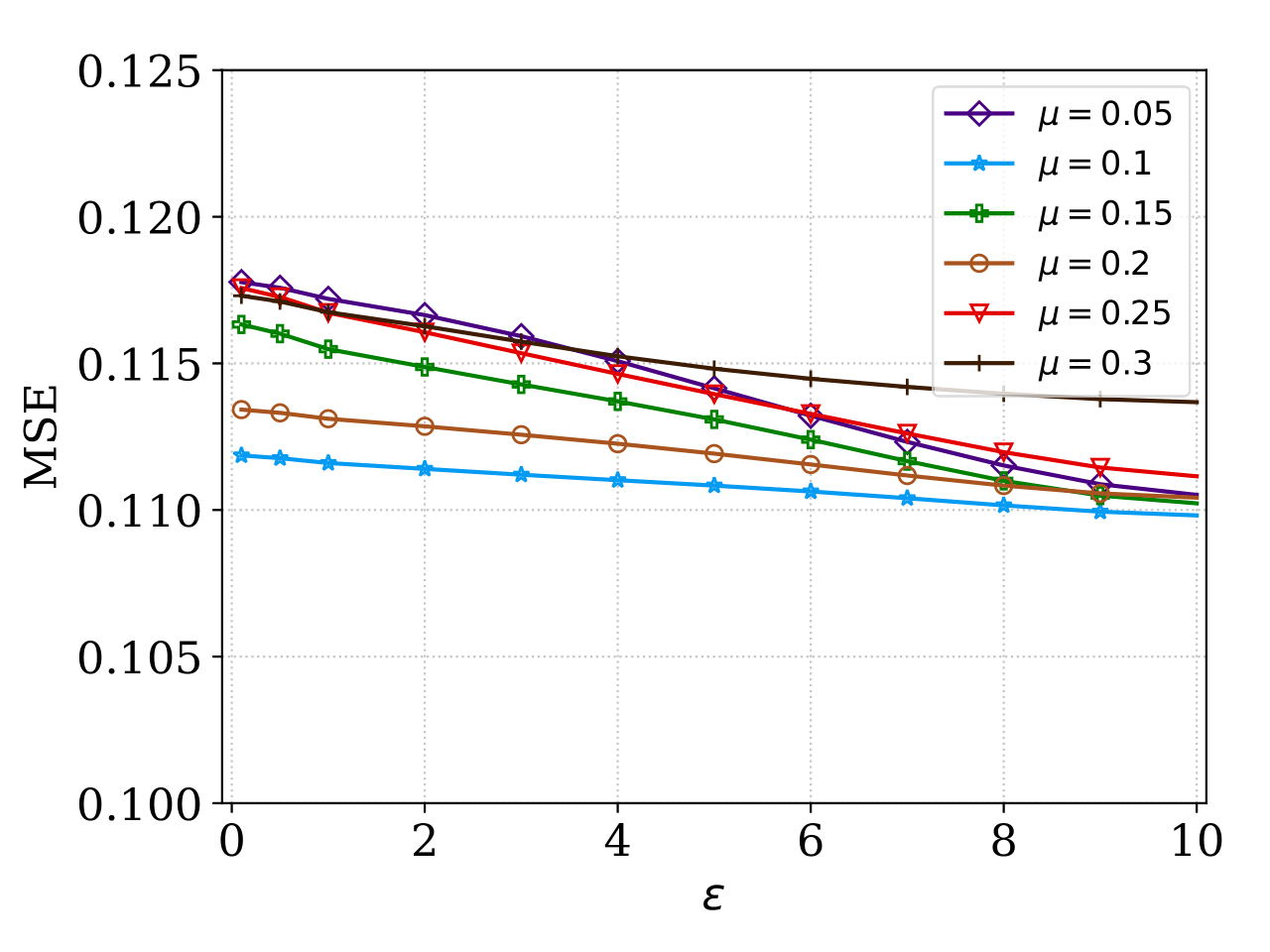}
		\label{fig:hyper_mu_mse1} }
		\hspace{-0.18in}	
	\subfigure[MSE vs $\epsilon$]{
		\includegraphics[angle=0, width=0.245\linewidth]{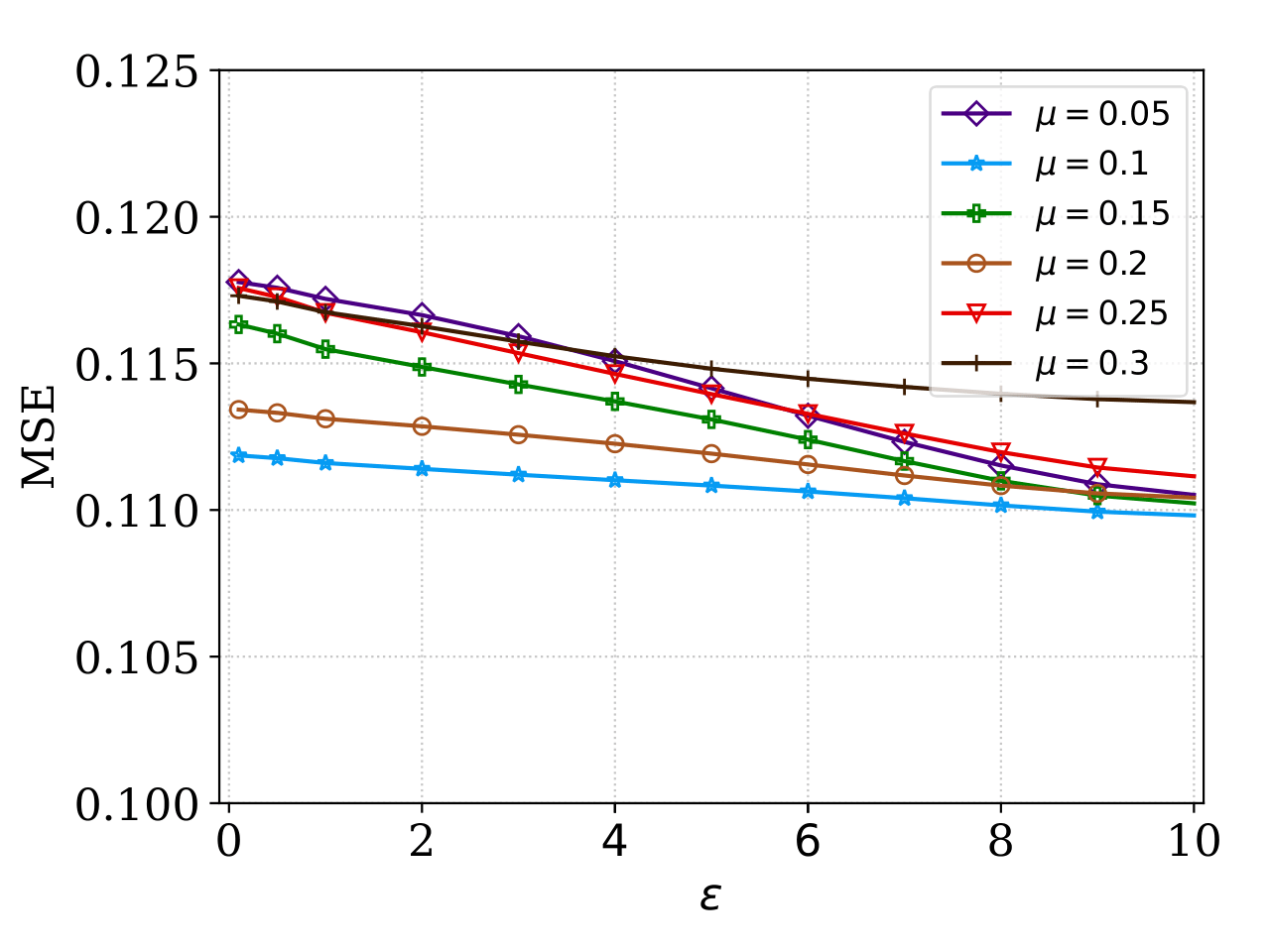}
		\label{fig:hyper_mu_mse2} }
        \hspace{-0.18in}
  	\subfigure[KL divergence vs $\epsilon$]{
		\includegraphics[angle=0, width=0.245\linewidth]{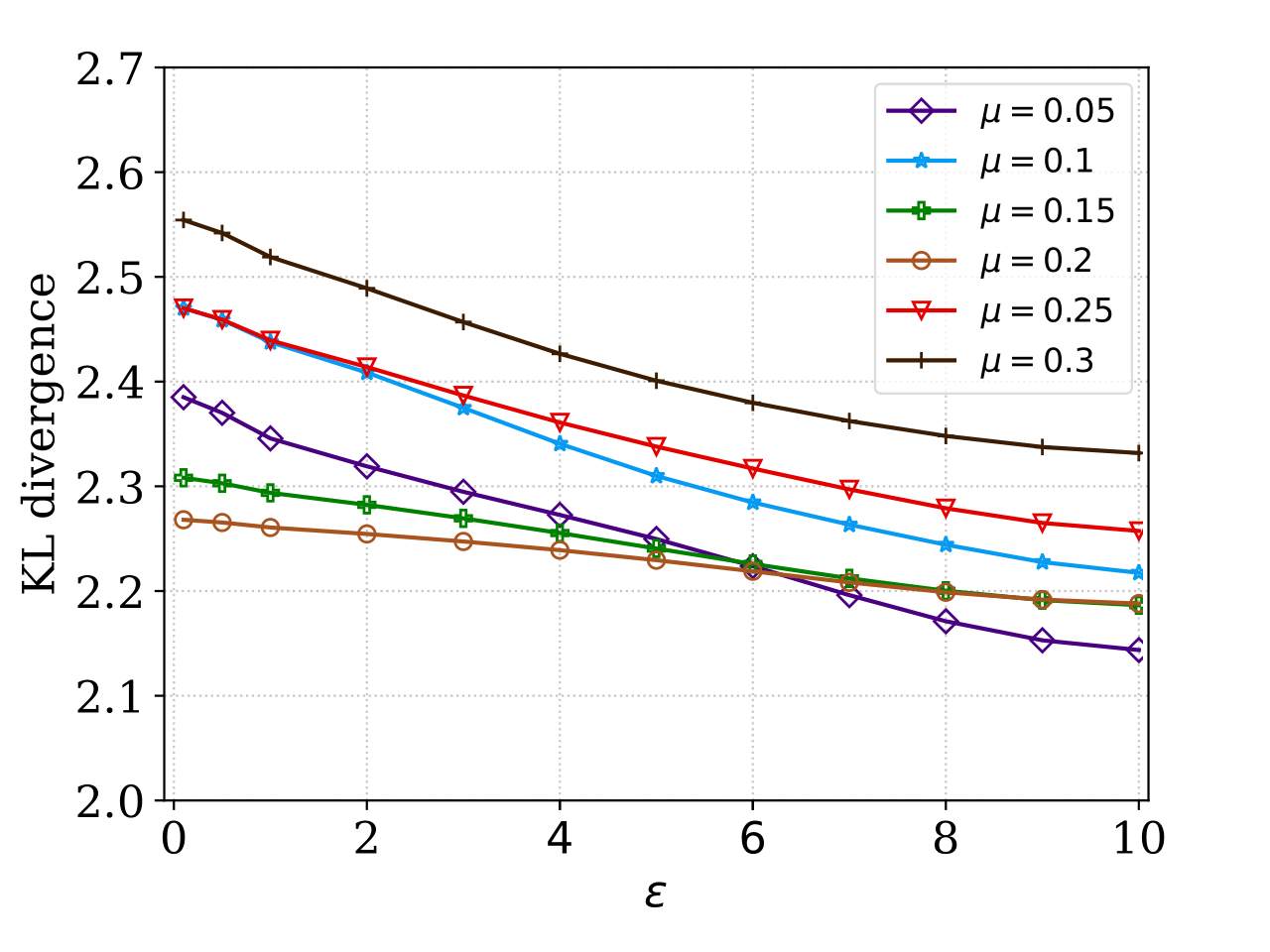}
		\label{fig:hyper_mu_acc1} }
		\hspace{-0.18in}
	\subfigure[KL divergence vs $\epsilon$]{
		\includegraphics[angle=0, width=0.245\linewidth]{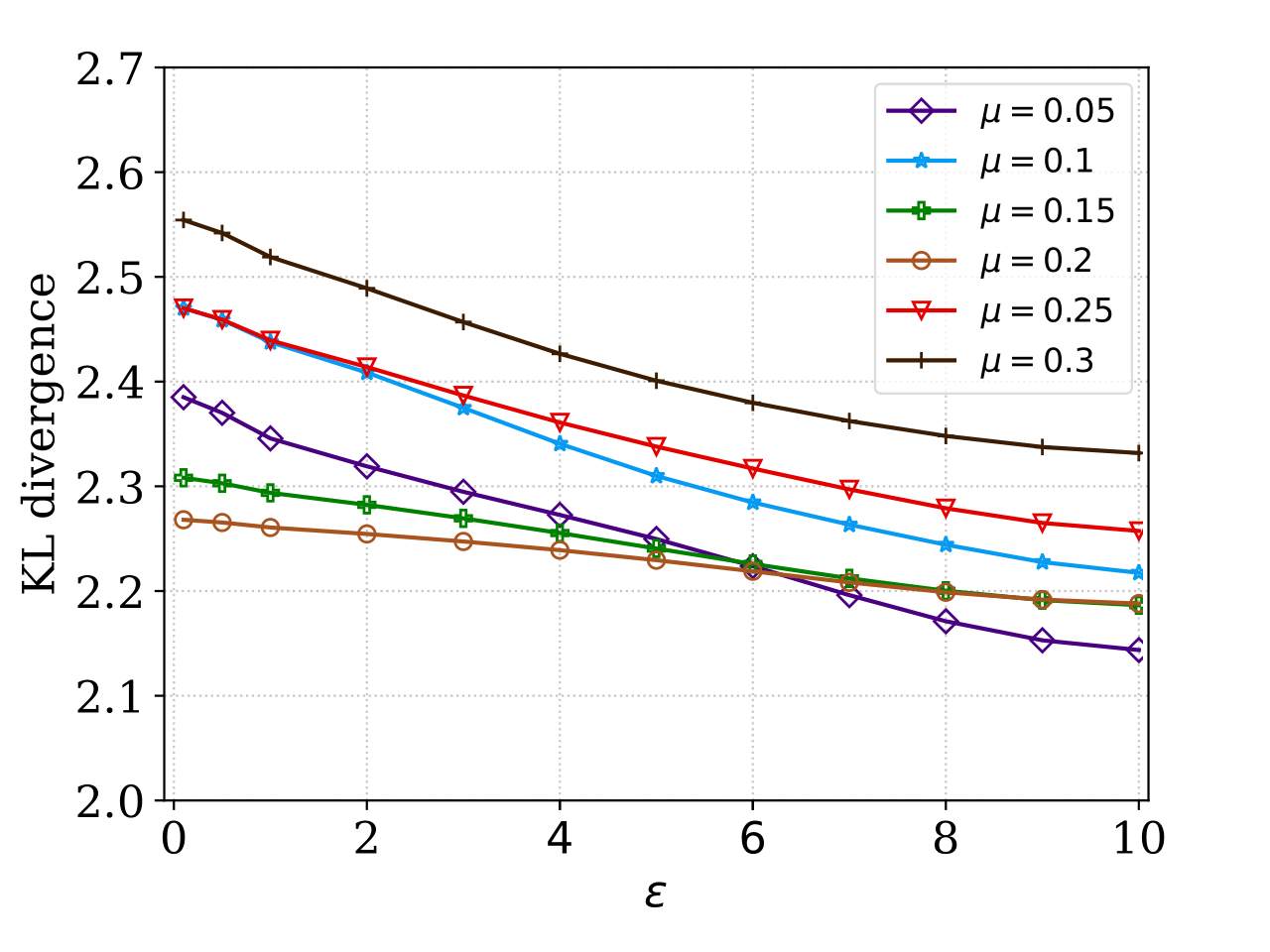}
		\label{fig:hyper_mu_acc2} }
        \hspace{-0.18in}
  	\subfigure[MSE vs $\epsilon$]{
		\includegraphics[angle=0, width=0.245\linewidth]{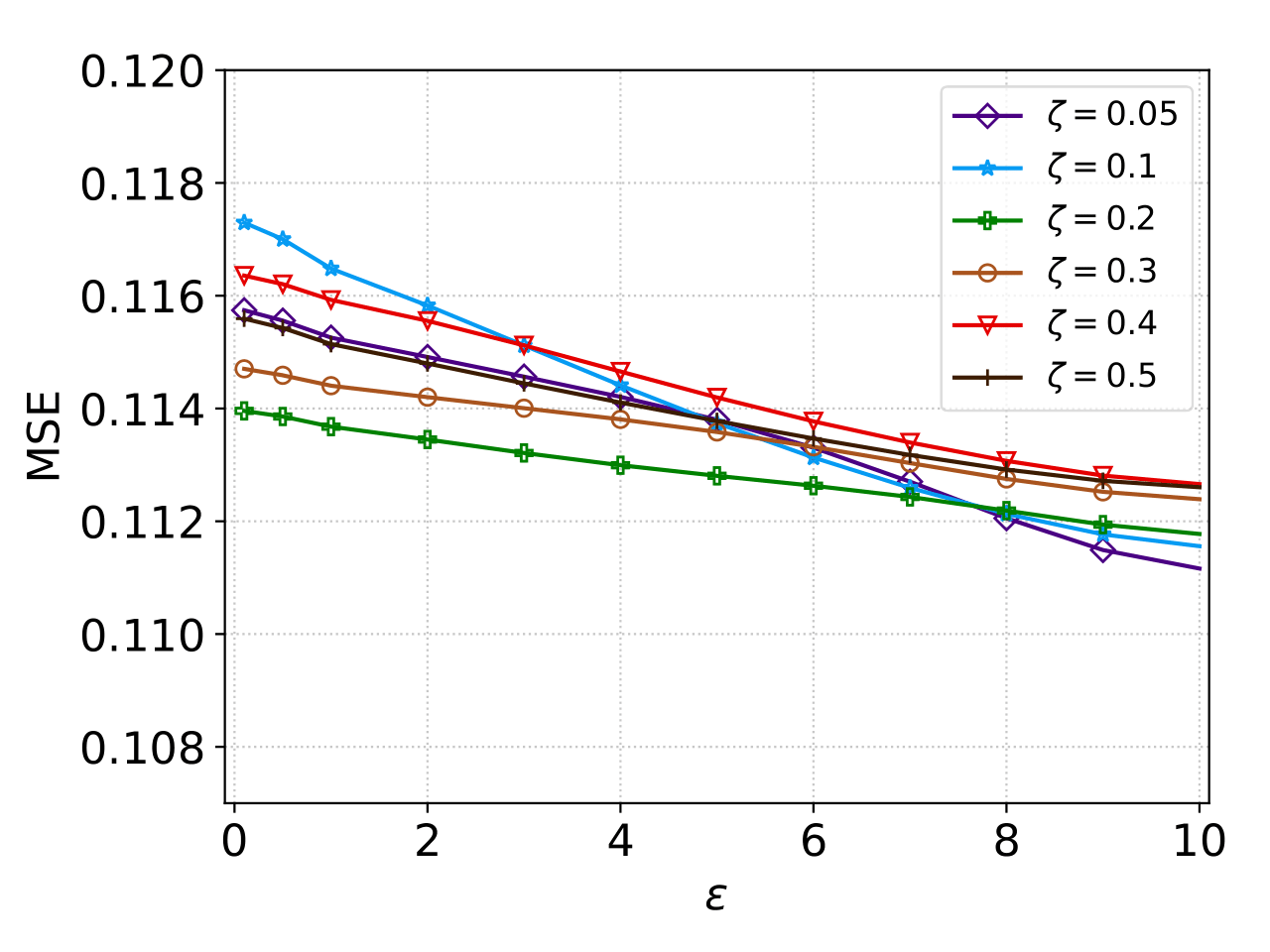}
		\label{fig:hyper_zeta_mse1} }
		\hspace{-0.18in}	
	\subfigure[MSE vs $\epsilon$]{
		\includegraphics[angle=0, width=0.245\linewidth]{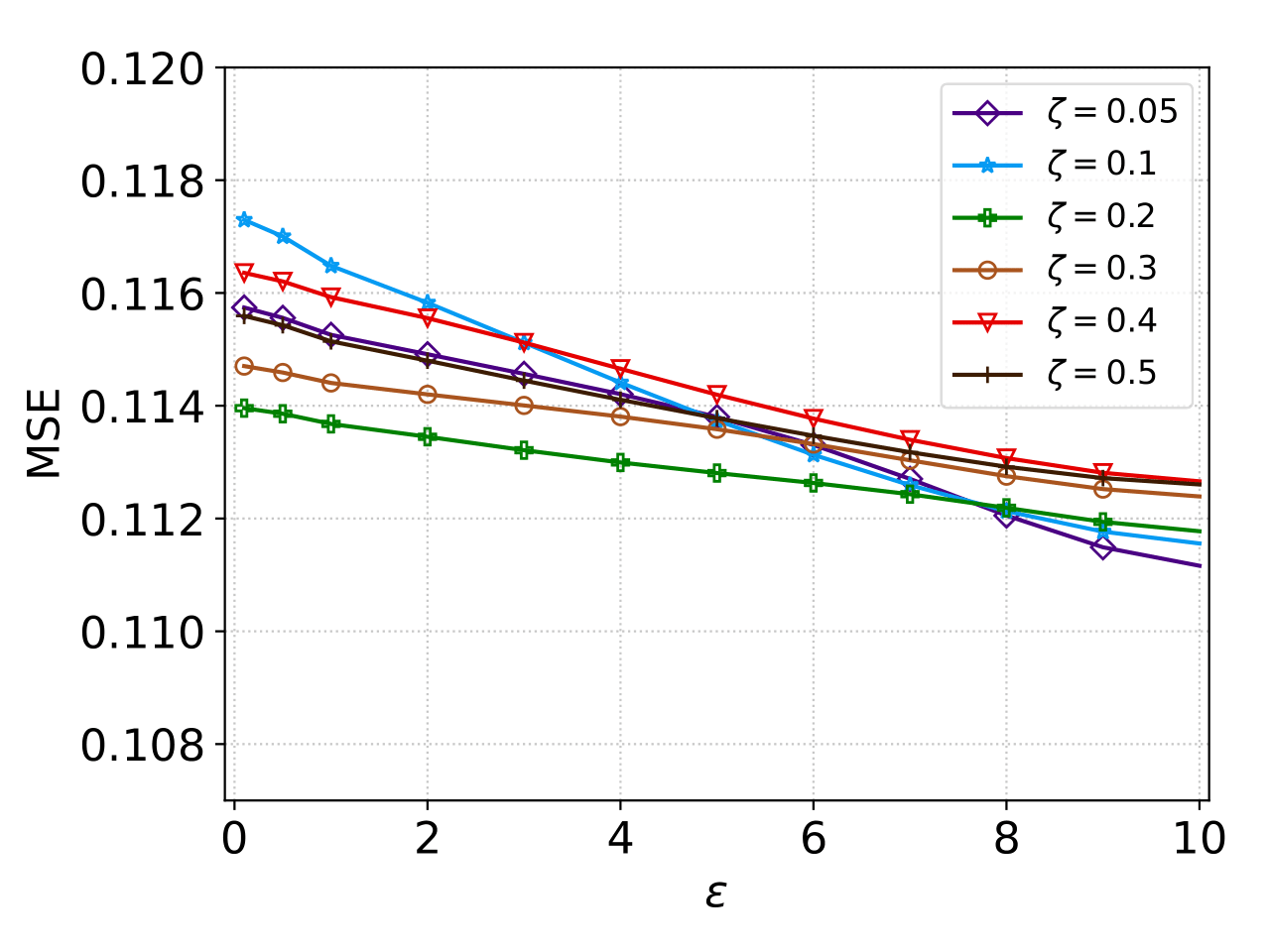}
		\label{fig:hyper_zeta_mse2} }
  \hspace{-0.18in}
  	\subfigure[KL divergence vs $\epsilon$]{
		\includegraphics[angle=0, width=0.245\linewidth]{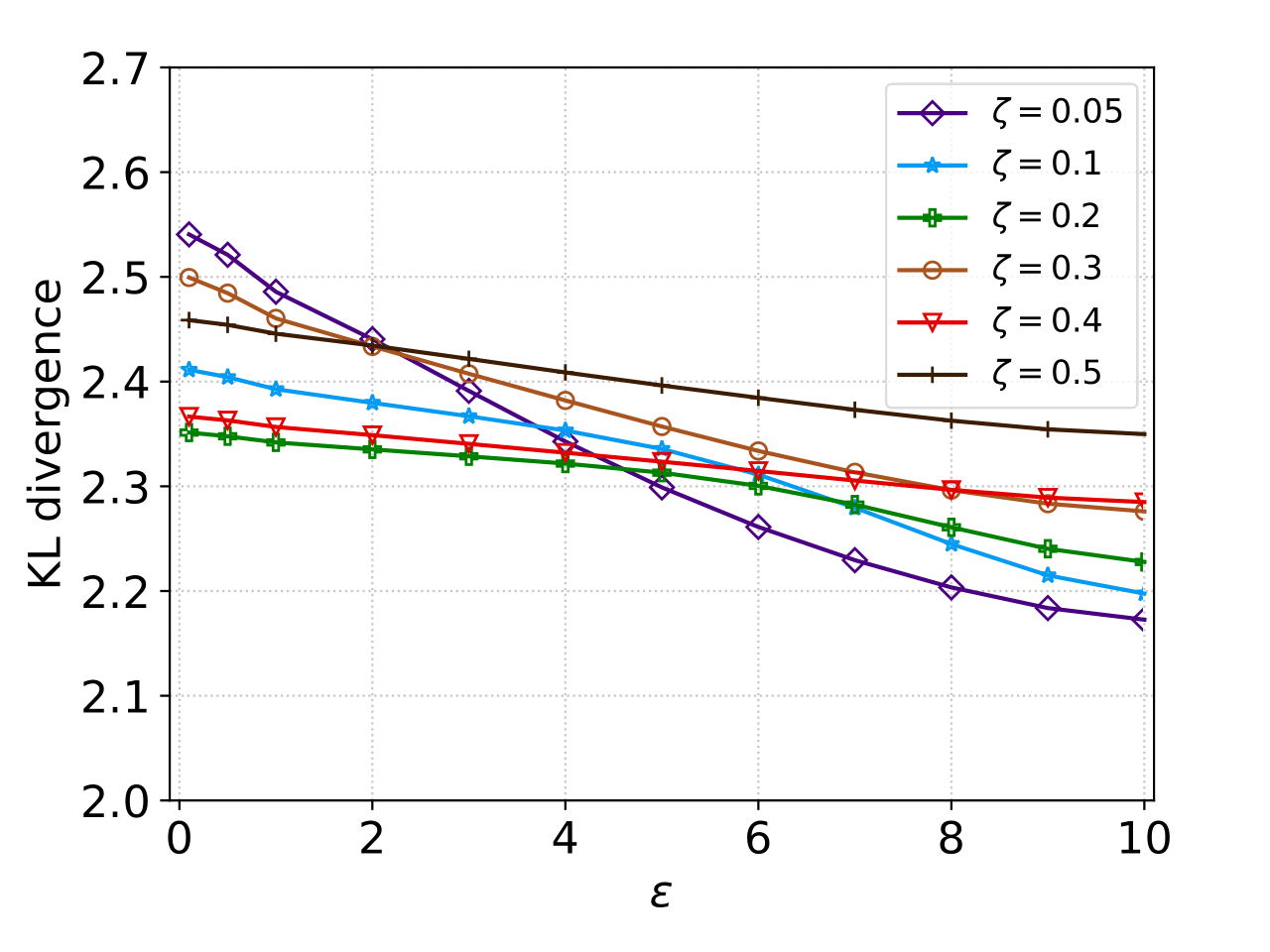}
		\label{fig:hyper_zeta_acc1} }
		\hspace{-0.18in}
	\subfigure[KL divergence vs $\epsilon$]{
		\includegraphics[angle=0, width=0.245\linewidth]{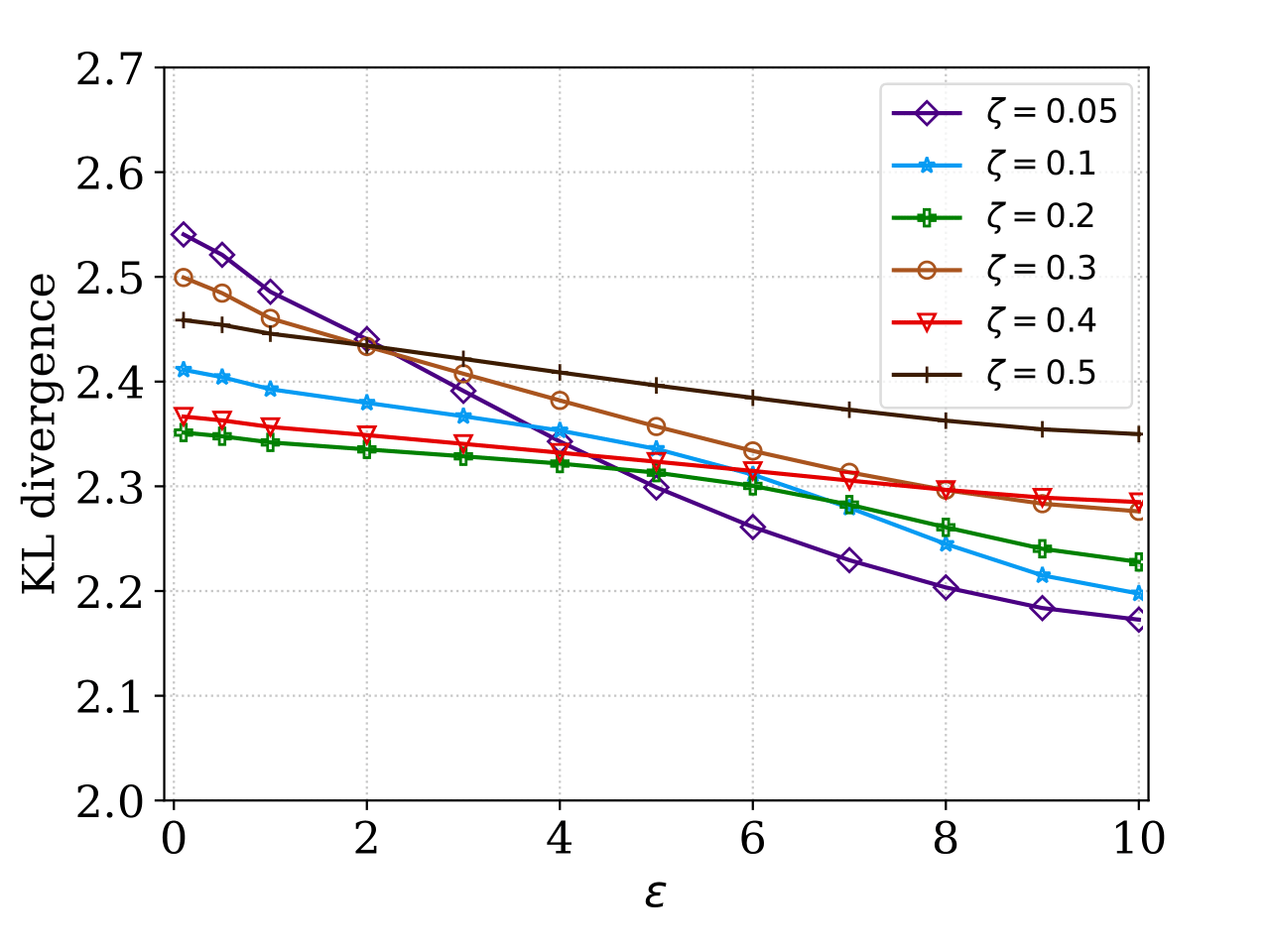}
		\label{fig:hyper_zeta_acc2} }
	\caption{Average MSE and KL divergence regarding $\epsilon$ for $2,700$ time slots. Figures (a), (c), (e), (g), (i), and (k) are MSE and KL divergence with different $\lambda$, $\mu$, and $\zeta$ values for accumulative data distribution disclosure. Figures (b), (d), (f), (h), (j), and (l) are the MSE and KL divergence with different $\lambda$, $\mu$, and $\zeta$ values for instantaneous data distribution disclosure.}
 \vspace{-0.1in}
	\label{fig:hyperparametertunning}
\end{figure*}

\subhead{Applications for Evaluation} We next introduce three real-world applications for evaluating the performance of DPI.

\emph{Statistical Queries.} Since our framework DPI could release the data distribution in each time slot, we first evaluate the utility for querying the item distribution: the difference between the original data's item distribution and the DPI output using the MSE and KL divergence metrics.

\emph{Anomaly Detection.} In addition to statistical queries, the data distribution released by DPI facilitates various downstream analyses \cite{coluccia2013distribution, an2015variational,zong2018deep,defard2021padim,AsifPV19, Giraldo2020AdversarialCU,mohammady2022dpoad,du2019robust}. In our experiments, we conduct experiments on anomaly detection (\emph{most existing methods cannot support such analysis}) by adopting a distribution-based anomaly detection method  \cite{goldstein2012histogram}  
to detect the anomalies with the DPI output and evaluate the accuracy. 

\emph{Recommender System.} DPI also enhances recommender systems by providing privacy guarantees without compromising accuracy. Singular Value Decomposition (SVD) algorithms \cite{sarwar2000application,zhou2015svd} used in recommender systems can investigate the relationships between items by using the distribution of users and items. In our experiments, DPI enables precise suggestions by providing differentially private data distributions to the SVD algorithm. 

\subhead{Benchmarks} We have shown that existing methods violate the privacy bound $\epsilon$ after a few time slots (see Appendix \ref{sec:sotaloss}). They are incomparable with DPI in our infinite settings due to their limitations (e.g., unbounded privacy, sensitivity assumption, event-level differential privacy).

\subsection{Performance on Hyperparameters}
\label{sec:tuning}

The utility of DPI is influenced by parameters such as $\epsilon$, $\lambda$ (error bound for synopsis), $\mu$ (error bound for avoiding overfitting), and $\zeta$ (decay rate of the decaying series), which are all related to the reweighting phase. To explore how they affect the utility in real-time disclosure, we choose the COVID-19 dataset to observe the utility trends with varying hyperparameters. The privacy requirement $\epsilon$ ranges from $0.1$ to $10$, and $\zeta$ varies from $0.1$ to $0.5$. $\lambda$ is set in the range $0.05$ to $0.5$ and $\mu$ varies from $0.05$ to $0.3$. 

We show the MSE and KL divergence regarding $\epsilon$ for $1,000$ time slots in Figure \ref{fig:hyperparametertunning}. Figures \ref{fig:hyper_lamda_mse1}, \ref{fig:hyper_lamda_acc1},
\ref{fig:hyper_mu_mse1}, 
\ref{fig:hyper_mu_acc1}, 
\ref{fig:hyper_zeta_mse1}, 
and \ref{fig:hyper_zeta_acc1} are the MSE and KL divergence with different $\lambda$, $\mu$, and $\zeta$ values for accumulative data distribution disclosure (from time slot 1 to $1,000$). Figures \ref{fig:hyper_lamda_mse2}, \ref{fig:hyper_lamda_acc2},
\ref{fig:hyper_mu_mse2}, 
\ref{fig:hyper_mu_acc2}, 
\ref{fig:hyper_zeta_mse2}, 
and \ref{fig:hyper_zeta_acc2} are the MSE and KL divergence with different $\lambda$, $\mu$, and $\zeta$ values for instantaneous data distribution disclosure (time slot $1,000$).

First, Figure \ref{fig:hyperparametertunning} demonstrates the MSE and KL divergence by varying $\epsilon$ for accumulative data distribution disclosure and instantaneous data distribution disclosure. With an increasing $\epsilon$, the MSE and KL divergence both decrease, demonstrating that the DPI output is closer to the true distribution (privacy/utility tradeoff). Second, Figures \ref{fig:hyper_lamda_acc1} and \ref{fig:hyper_lamda_acc2} show the KL divergence decreases faster when $\lambda$ is very small (purple line) since the parameter $\lambda$ evaluates the error of query based on the sampled synopsis and a small $\lambda$ means that the synopsis is more accurate. Thus, the output results of DPI tend to be accurate. Furthermore, the smaller additional error bound $\mu$ (for avoiding overfitting) indicating the confidence for the universal synopsis output also affects the performance of DPI. Figures \ref{fig:hyper_mu_acc1} and \ref{fig:hyper_mu_acc2} demonstrate that we can have good utility with small KL divergence given small $\mu$ (e.g., 0.05 or 0.1). In our settings, $\mu$ is set to be less than half of $\lambda$. Finally, Figures \ref{fig:hyper_zeta_acc1} and \ref{fig:hyper_zeta_acc2} validate that small $\zeta$ (e.g., 0.05, 0.1, or 0.2) returns higher accuracy (since smaller $\zeta$ produces larger $\eta$, leading to more accurate synopses).

Similarly, we can draw similar observations from Figures \ref{fig:hyper_lamda_mse1} and \ref{fig:hyper_lamda_mse2}, \ref{fig:hyper_mu_mse1} and \ref{fig:hyper_mu_mse2}, \ref{fig:hyper_zeta_mse1} and \ref{fig:hyper_zeta_mse2} (with the MSE metric). This confirms that the effect of different parameters on DPI aligns with our theoretical studies.

\subsection{Utility Evaluation}
\label{sec:utility_eva}

\noindent\textbf{Statistical Queries.} We consider two queries: sum and mean. We evaluate the KL divergence of total $1,000$ time slots with a varying $\epsilon$ and present the results in Figures \ref{fig:statistical_mse_eps1} and \ref{fig:statistical_mse_eps2}. The results show that the utility of DPI improves as $\epsilon$ increases. Even with a small $\epsilon$, DPI ensures low MSE values, e.g., $\sim0.12$ when $\epsilon = 0.1$. In addition, we also evaluate the MSE and KL divergence in each of the $1,000$ time slots, as shown in Figures \ref{fig:statistical_kl_eps} and \ref{fig:statistical_kl_eps2}. The MSE values for all statistical queries are quite low, up to $0.16$.

\subhead{Anomaly Detection} 
Another application focuses on anomaly detection in the Network Traffic dataset. In this task, we will apply the Histogram-based Outlier Score (HBOS) \cite{goldstein2012histogram} (equal to a distribution-based anomaly detection technique) and isolation forest \cite{liu2012} to identify anomalies. The HBOS value reflects the rarity of data in the learned distribution and the extent of deviation of certain features from this distribution. The network monitor can quickly identify and report any anomalies from incoming packets, along with their severity, to the respective tenants.

\begin{figure*}[!tbh]
	\centering
	\subfigure[MSE vs $\epsilon$]{
		\includegraphics[angle=0, width=0.245\linewidth]{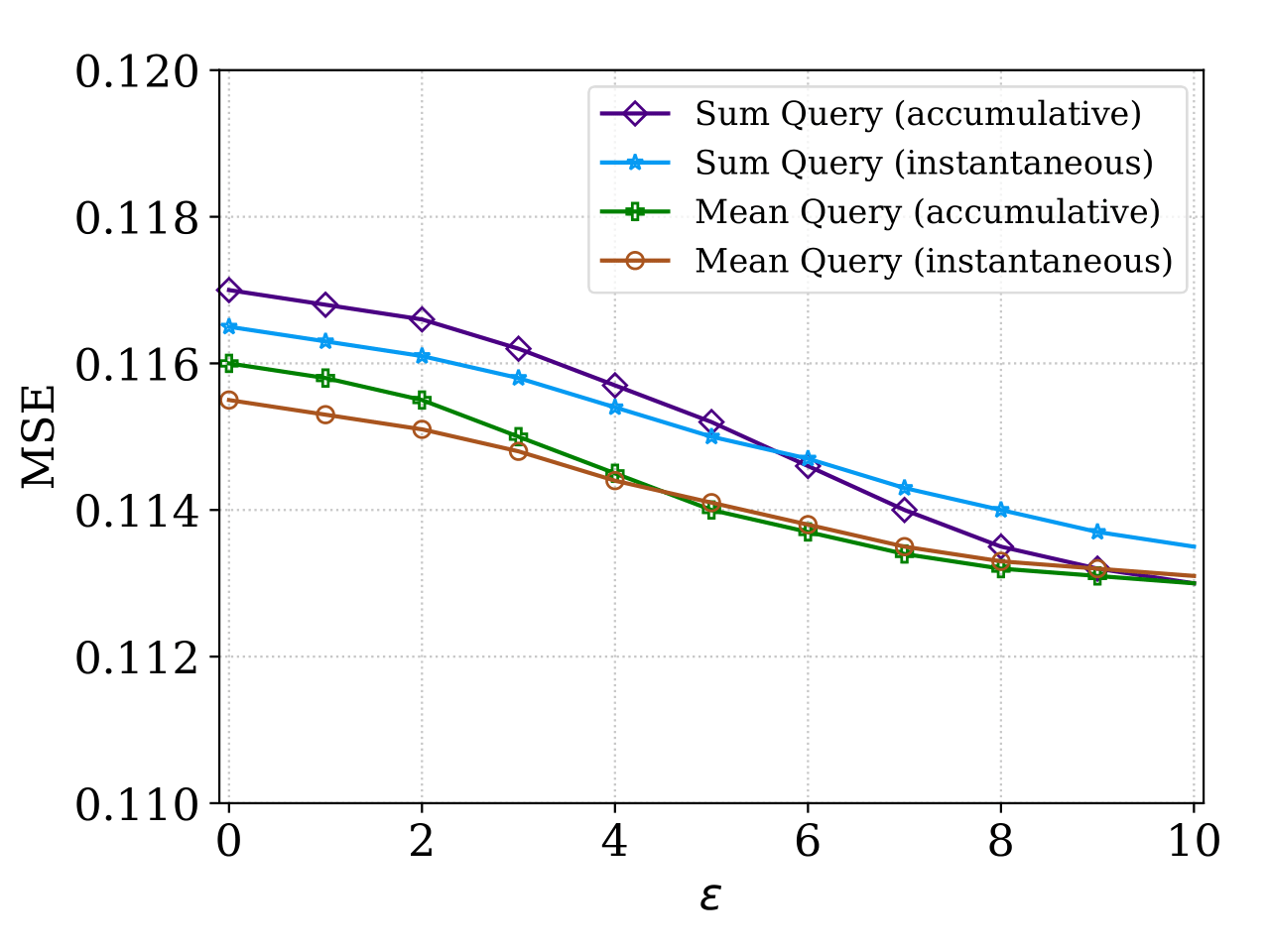}
        \label{fig:statistical_mse_eps1} }
		\hspace{-0.18in}
	\subfigure[KL divergence vs $\epsilon$]{
		\includegraphics[angle=0, width=0.245\linewidth]{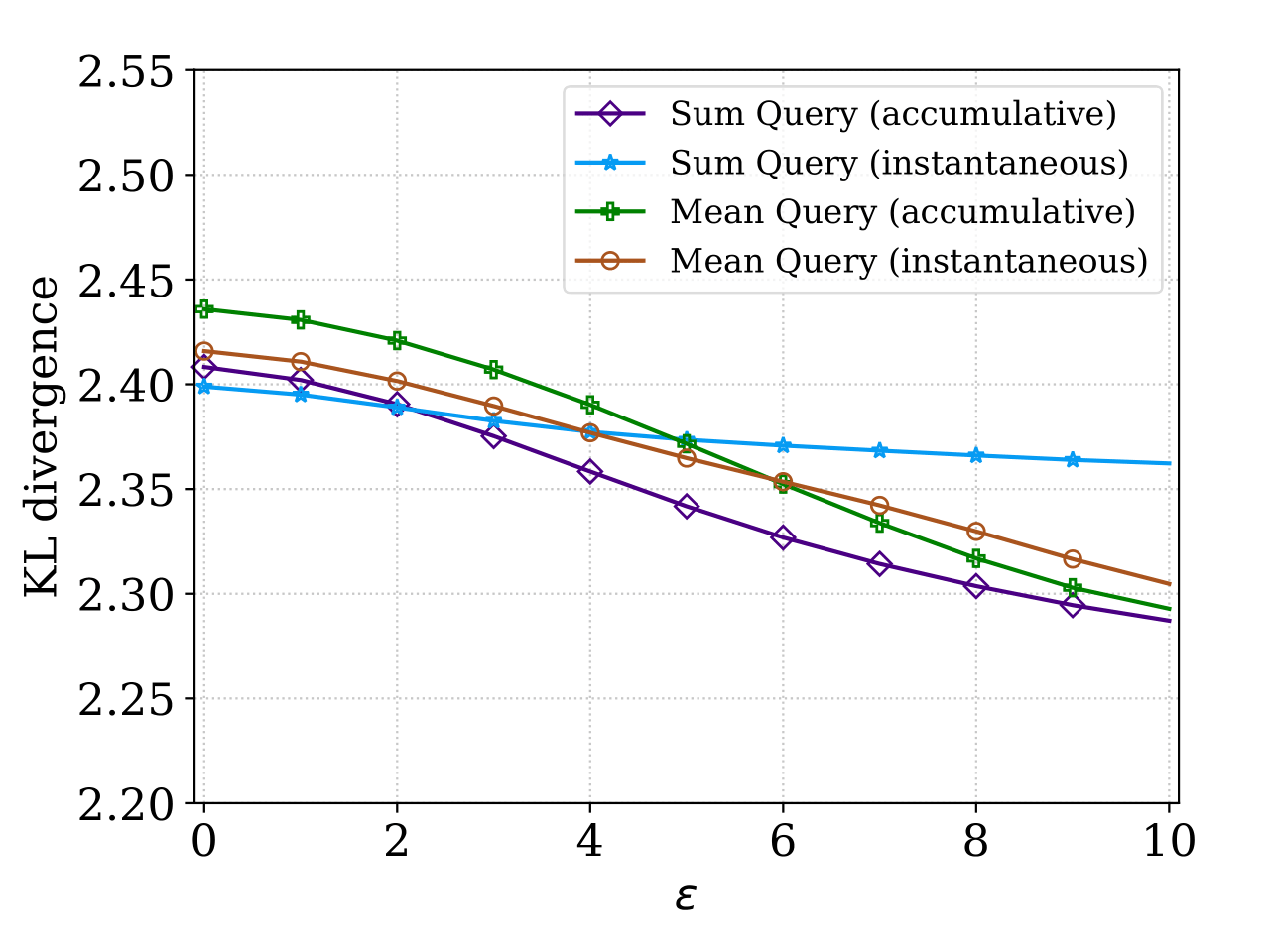}
		\label{fig:statistical_mse_eps2}}
		\hspace{-0.18in}
	\subfigure[MSE (moving average) vs $t$]{
		\includegraphics[angle=0, width=0.245\linewidth]{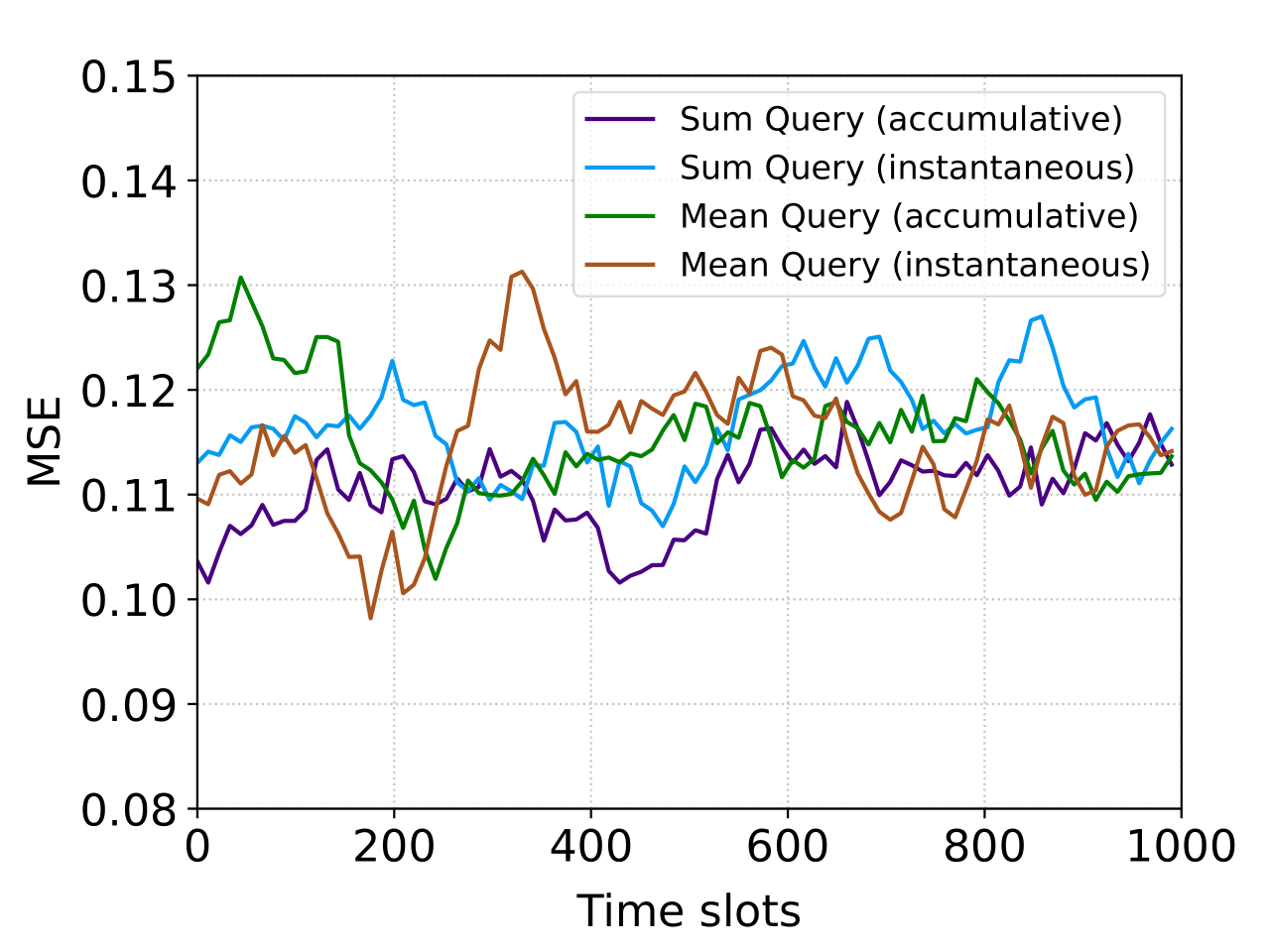}
		\label{fig:statistical_kl_eps} }
		\hspace{-0.18in}
	\subfigure[KL div. (moving average) vs $t$]{
		\includegraphics[angle=0, width=0.245\linewidth]{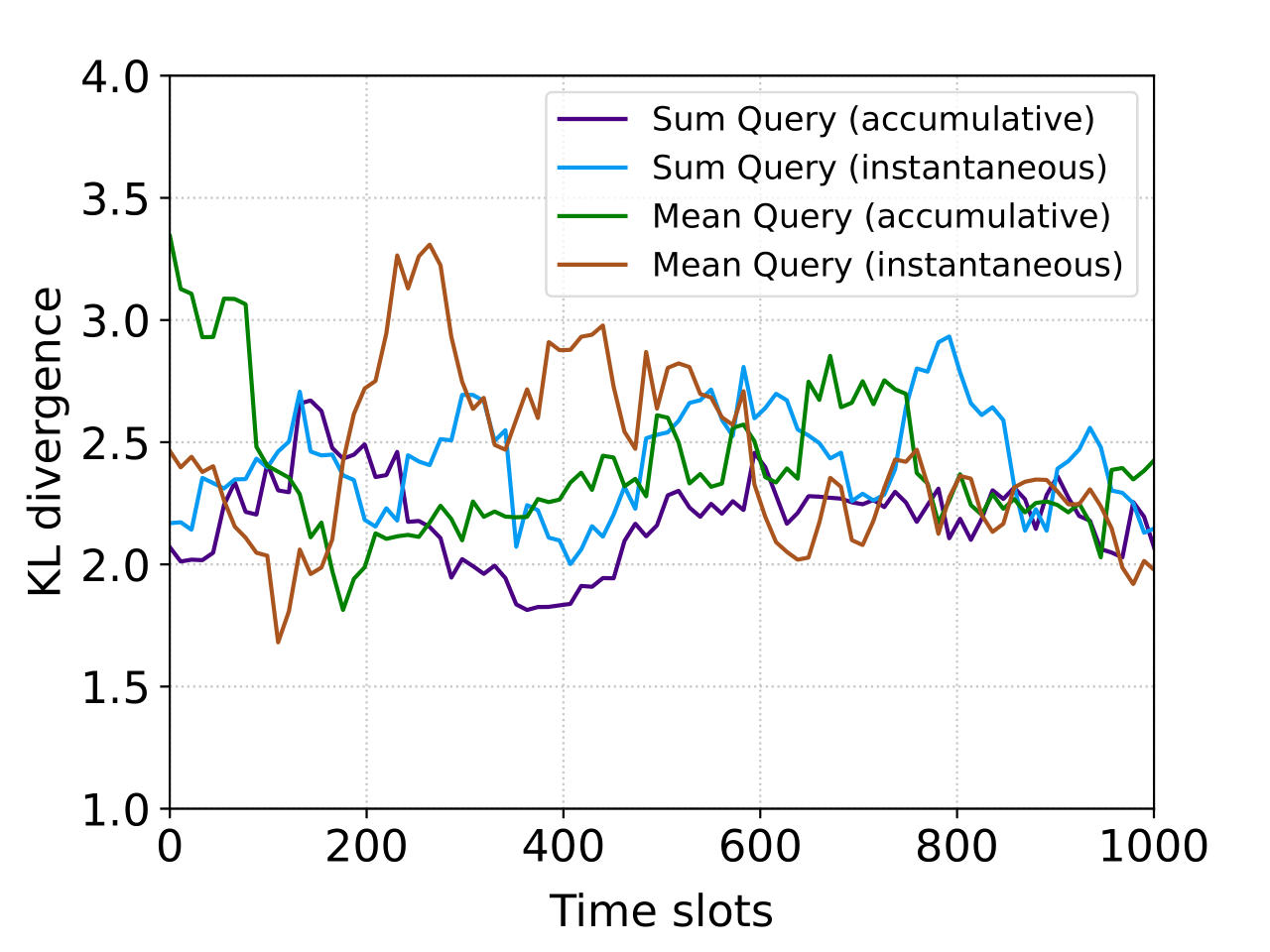}
		\label{fig:statistical_kl_eps2} }
	\caption{Average MSE and KL divergence of statistical queries for $\epsilon$ and $t$ across $1,000$ time slots on COVID-19 dataset. (a), (b): Average MSE and KL divergence with varying $\epsilon$. (c), (d): Average MSE and KL divergence over $t$ when $\epsilon$ is $2$.}
 \vspace{-0.1in}
	\label{fig:statisticalquries}
\end{figure*}

\begin{figure*}[!tbh]
	\centering
	\subfigure[Precision vs $\epsilon$]{
		\includegraphics[angle=0, width=0.245\linewidth]{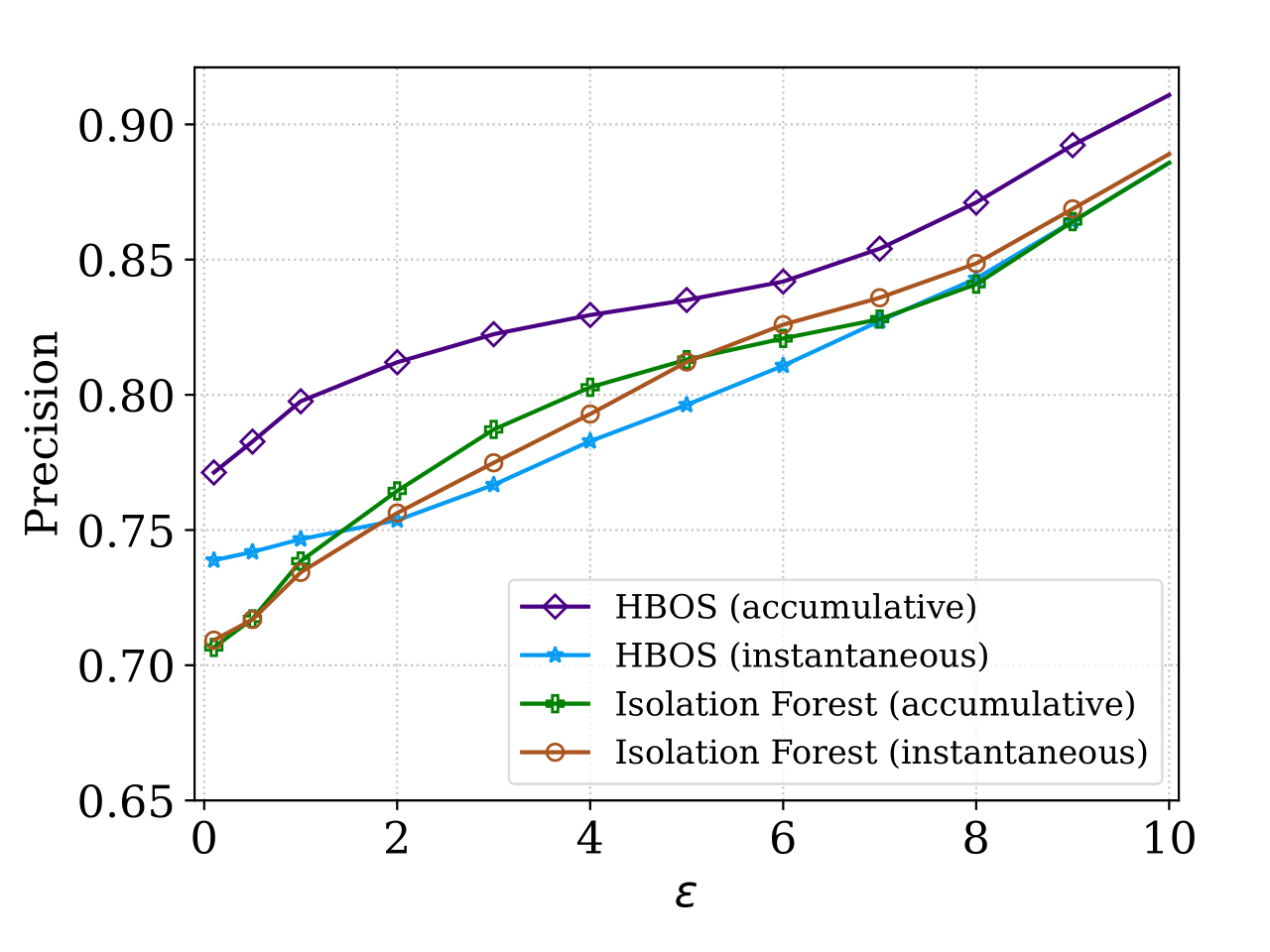}		
        \label{fig:anomaly_precision} }
		\hspace{-0.18in}
	\subfigure[Recall vs $\epsilon$]{
		\includegraphics[angle=0, width=0.245\linewidth]{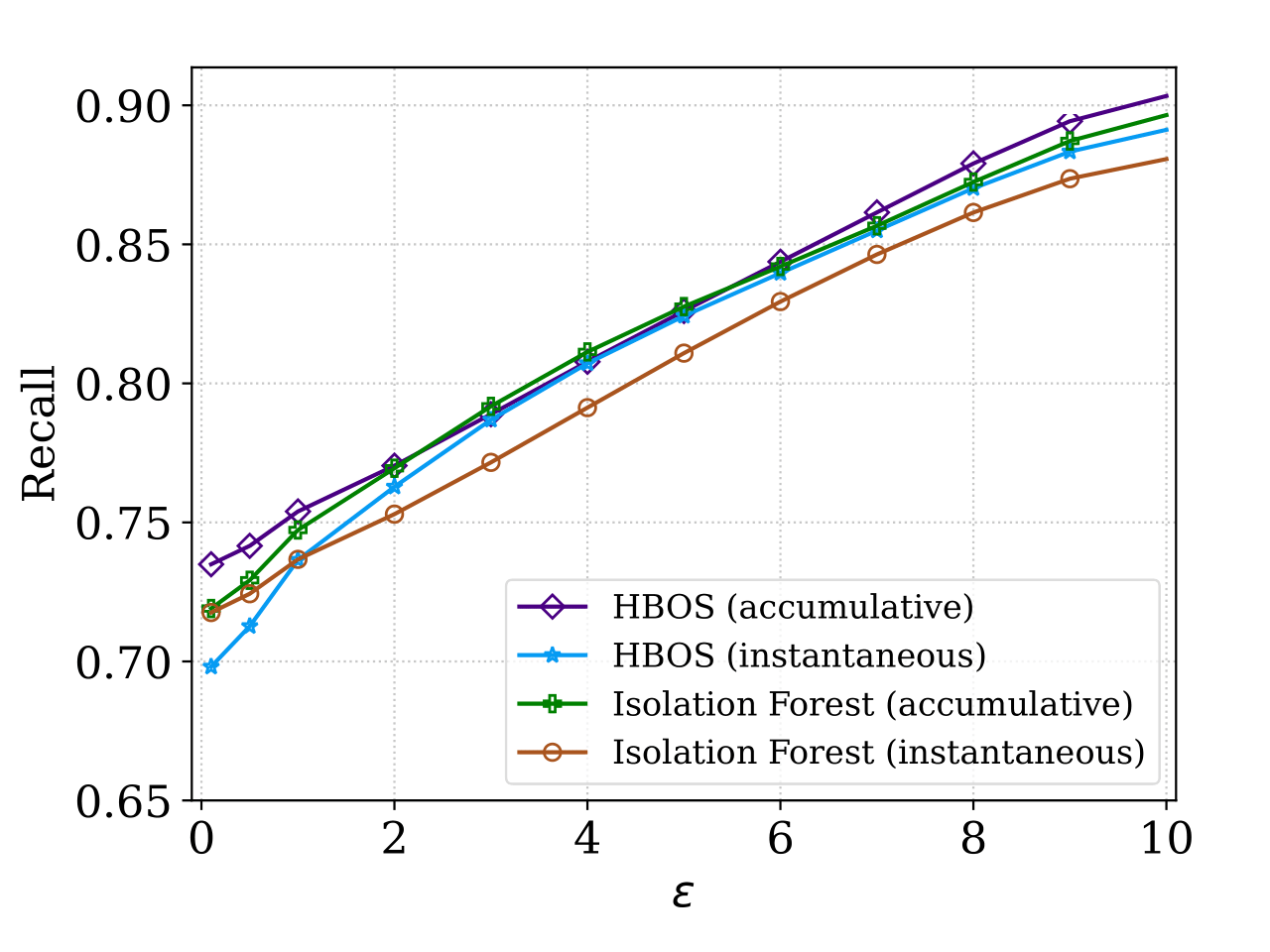}
		\label{fig:anomaly_recall}}
		\hspace{-0.18in}
	\subfigure[Precision (moving average) vs $t$]{
		\includegraphics[angle=0, width=0.245\linewidth]{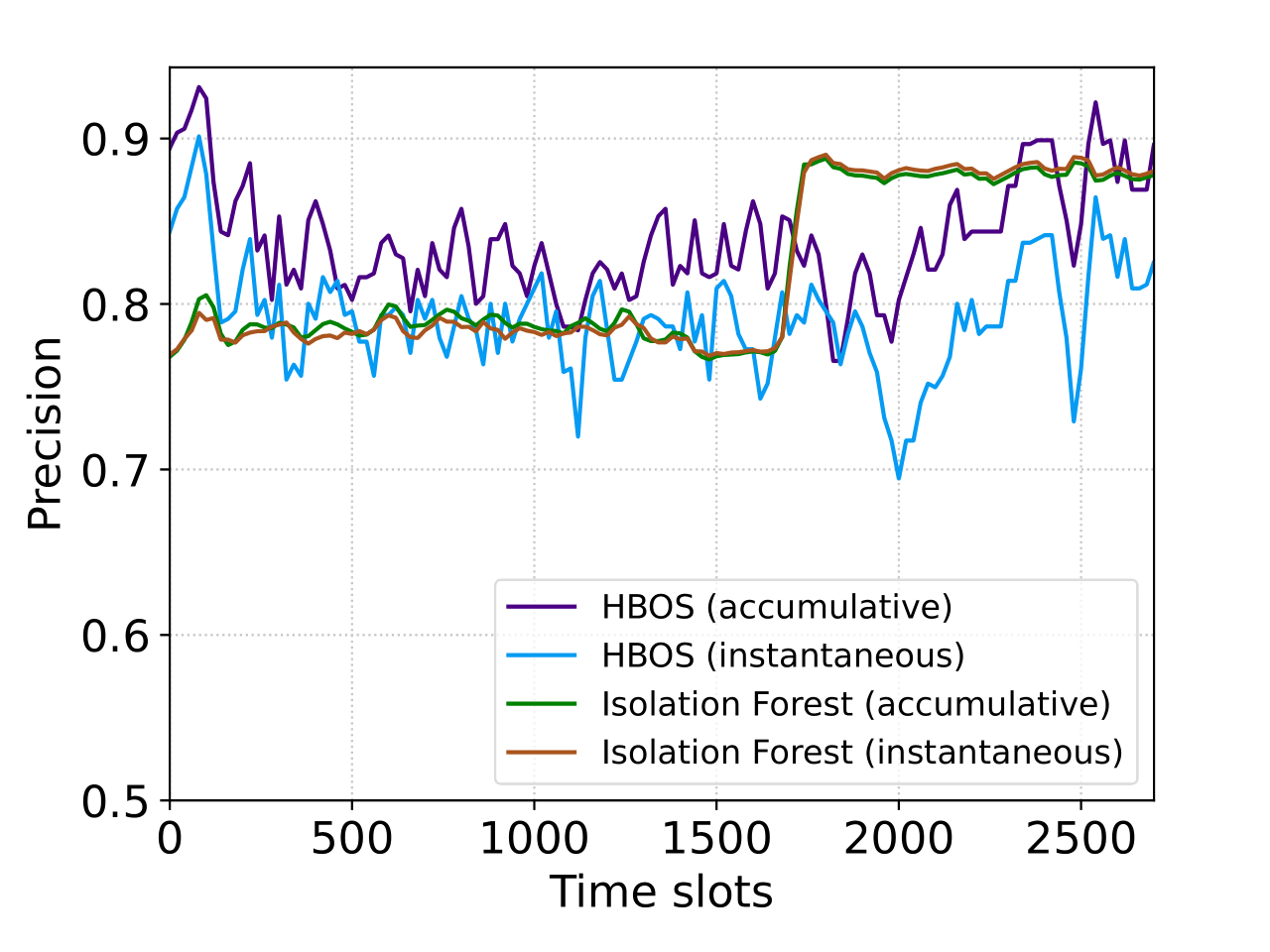}
		\label{fig:anomaly_precision_realtime} }
		\hspace{-0.18in}
	\subfigure[Recall (moving average) vs $t$]{
		\includegraphics[angle=0, width=0.245\linewidth]{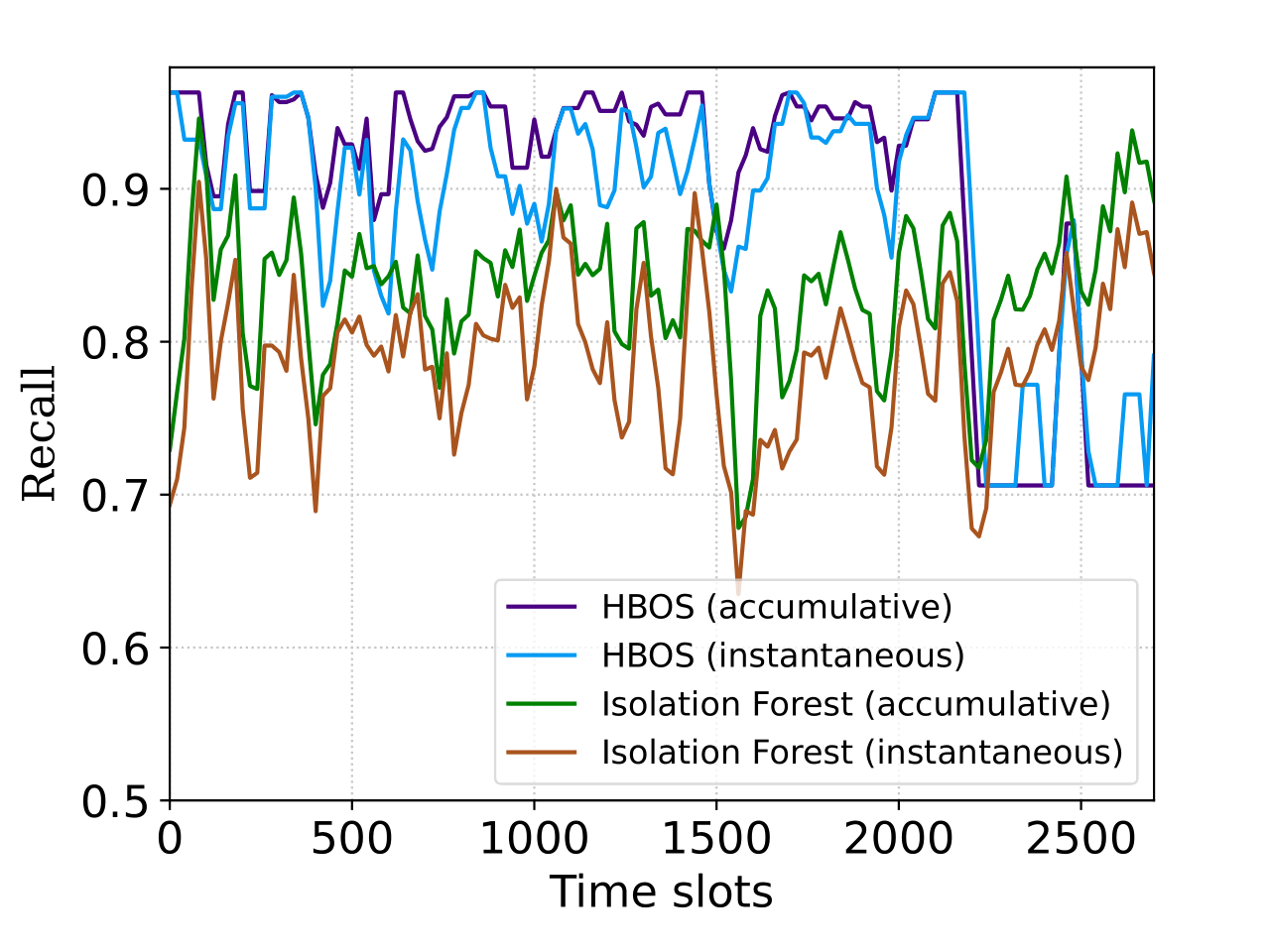}
		\label{fig:anomaly_recall_realtime} }
	\caption{Average Precision and Recall for anomaly detection over $2,700$ time slots on Network Traffic dataset. (a), (b): anomaly detection with different $\epsilon$. (c), (d): anomaly detection in time slot $t$ ($\epsilon = 2$). Due to unsupervised learning on unlabeled data, non-private results are treated as the ground truth.}
 \vspace{-0.1in}
	\label{fig:anomalydetection}
\end{figure*}

\begin{figure*}[!tbh]
	\centering
	\subfigure[Precision vs $\epsilon$]{
		\includegraphics[angle=0, width=0.245\linewidth]{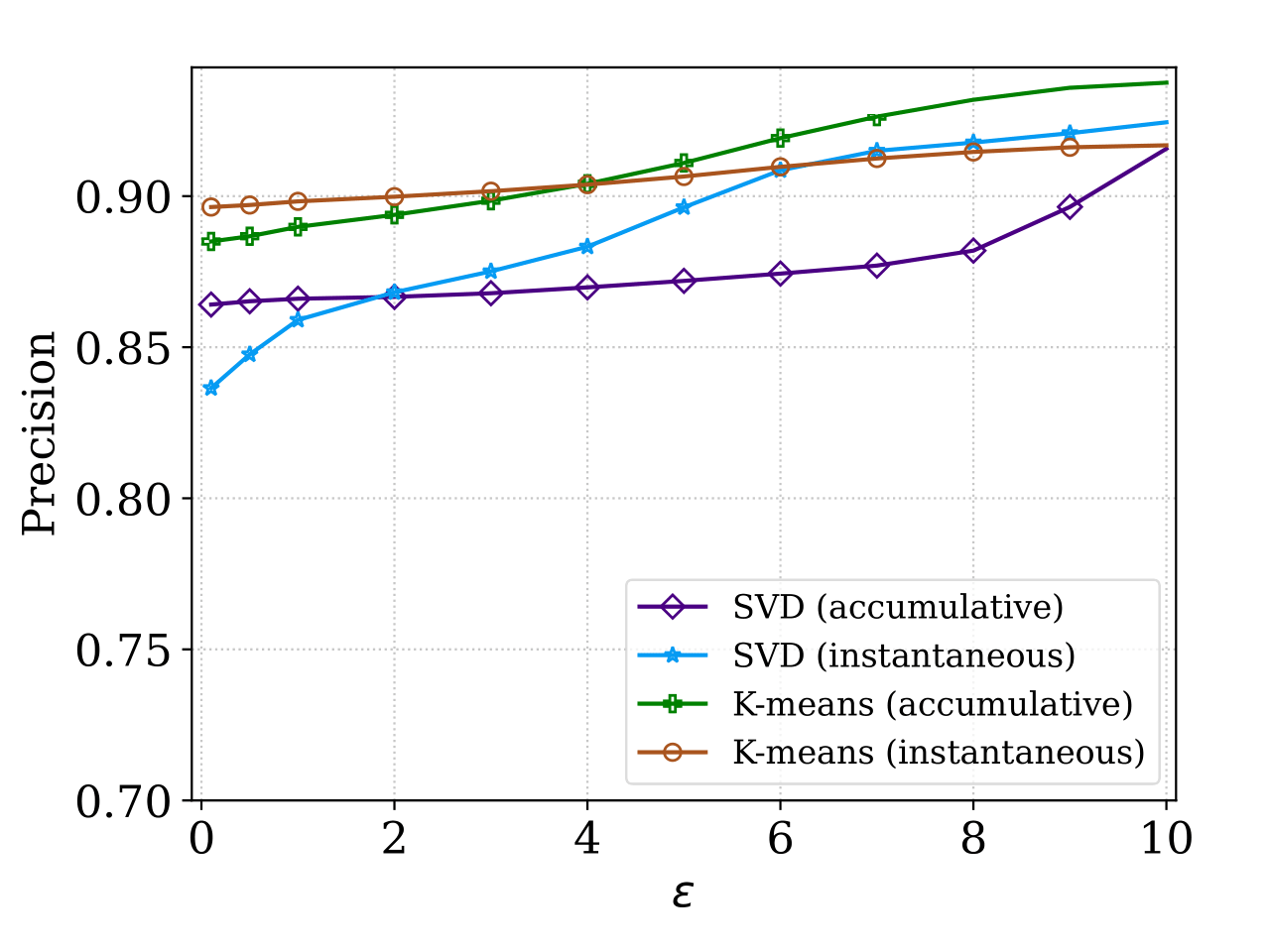}		
        \label{fig:recommend_acc1_eps} }
		\hspace{-0.18in}
	\subfigure[Recall vs $\epsilon$]{
		\includegraphics[angle=0, width=0.245\linewidth]{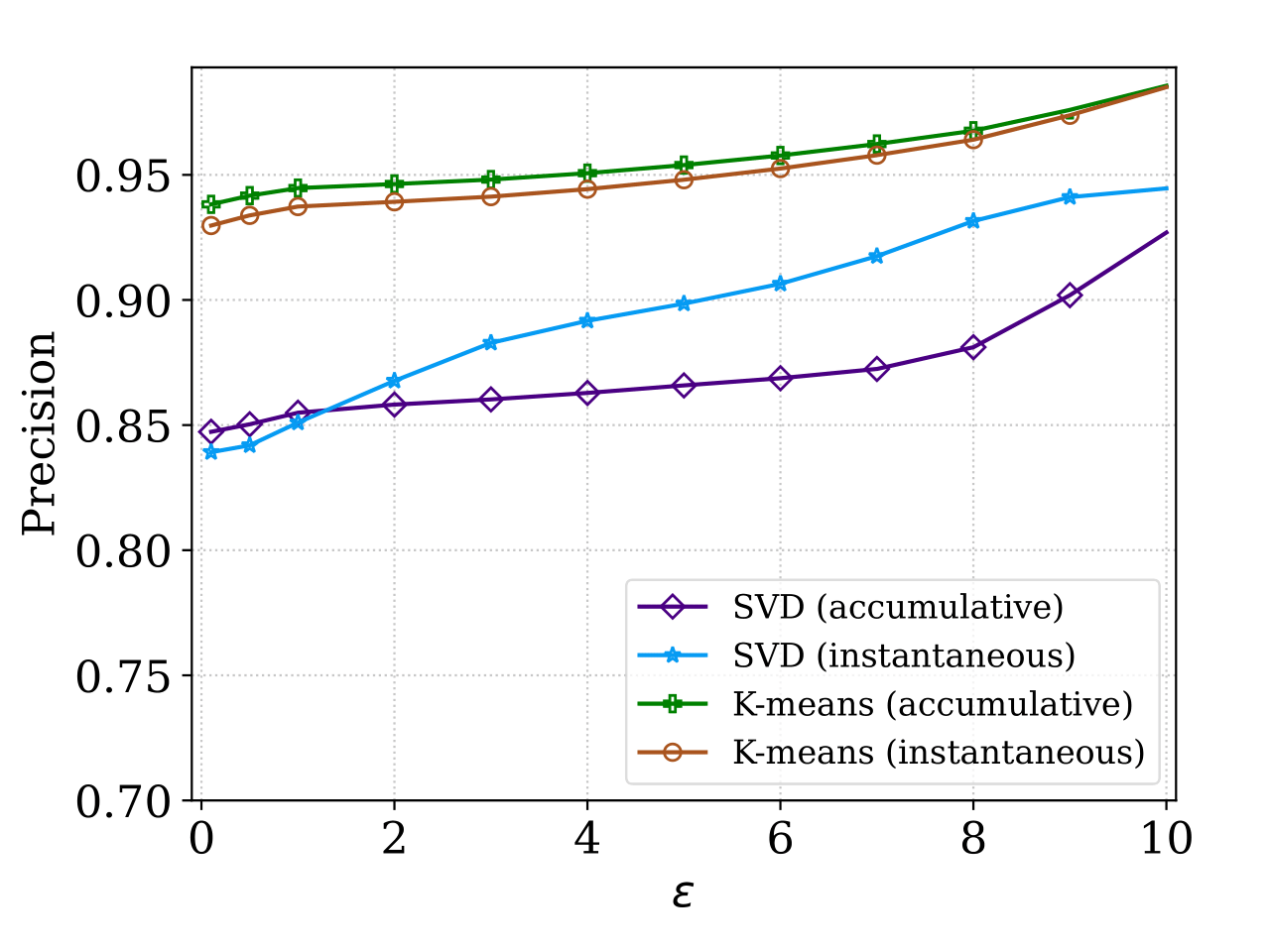}
		\label{fig:recommend_acc2_eps}}
		\hspace{-0.18in}
	\subfigure[Precision (moving average) vs $t$]{
		\includegraphics[angle=0, width=0.245\linewidth]{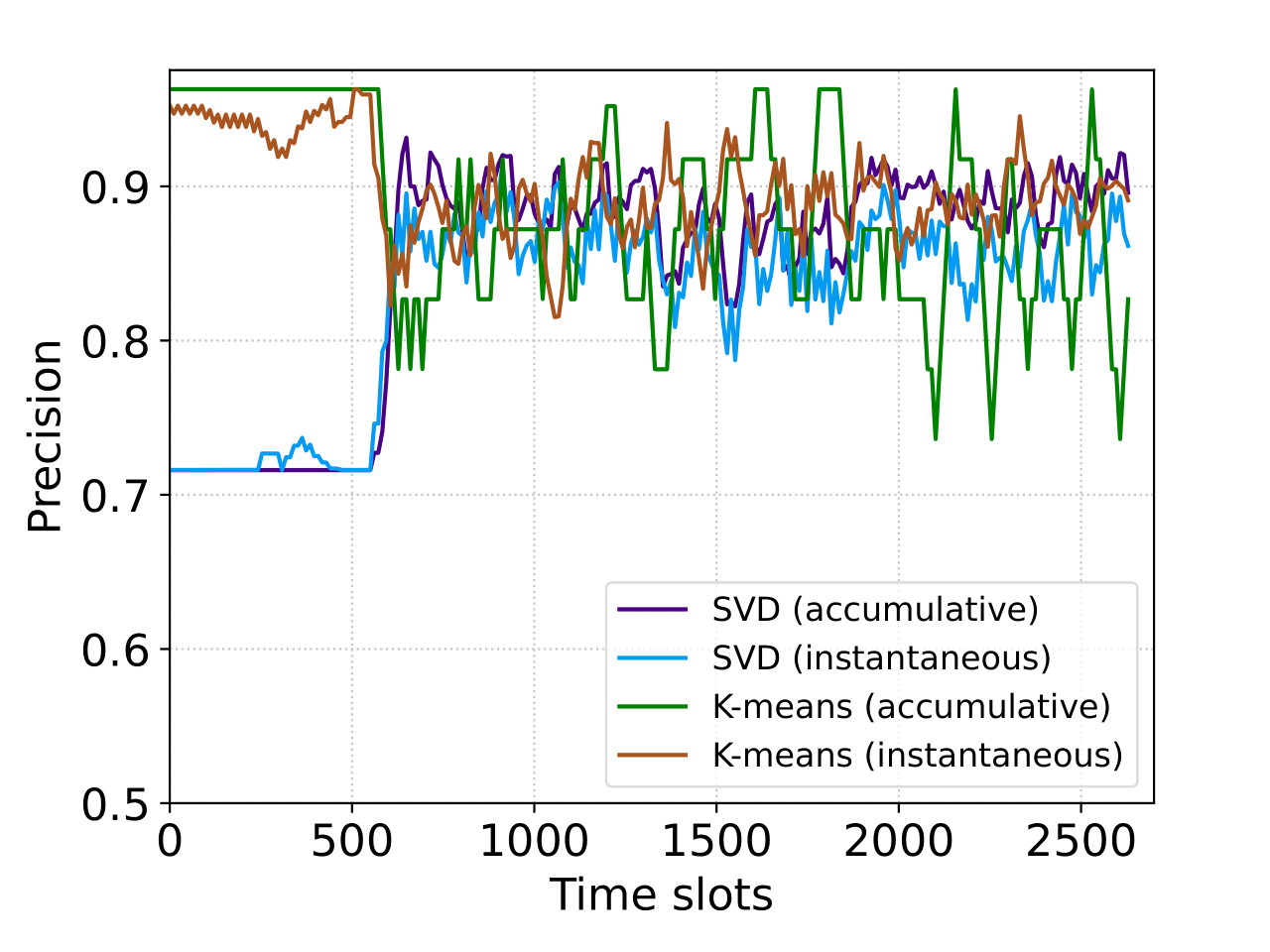}
		\label{fig:recommend_pre1_eps} }
		\hspace{-0.18in}
	\subfigure[Recall (moving average) vs $t$]{
		\includegraphics[angle=0, width=0.245\linewidth]{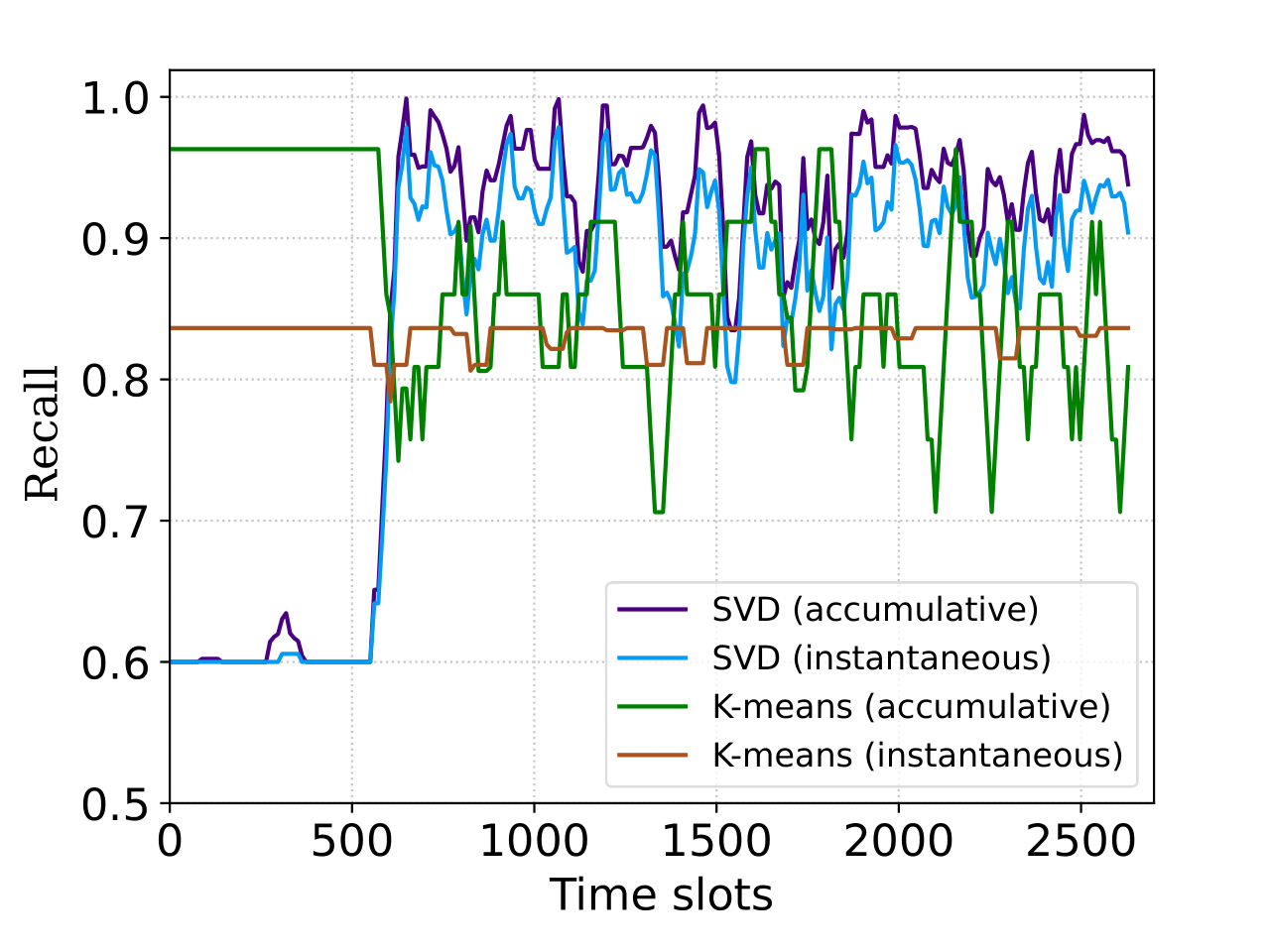}
		\label{fig:recommend_pre2_eps} }
	\caption{Average Precision and Recall for recommender system over $2,700$ time slots on USDA dataset. (a), (b): Precision and Recall vs $\epsilon$. (c), (d): Precision and Recall vs time slot $t$ ($\epsilon = 2$). Due to unsupervised learning on unlabeled data, non-private results are treated as the ground truth.}
 \vspace{-0.1in}
	\label{fig:recommendation}
\end{figure*}

We conduct experiments on the Network Traffic dataset and define the items with a score over a threshold as anomalies. DPI is expected to detect the anomalies from its real-time disclosures. We adopt Precision and Recall as the evaluation metrics compared to the non-private results, as shown in Figure \ref{fig:anomalydetection}. The Precision and Recall of both HBOS and Isolation Forest increase linearly with a growing $\epsilon$. Moreover, anomaly detection can be accurately performed by dynamically tracking the new data over a large number of time slots (as observed from the fluctuated Precision and Recall scores). Finally, anomaly detection over all the time slots (accumulative) tends to show relatively better results than that in specific time slots (instantaneous).

\subhead{Recomender System} 
For recommender systems, we utilize the SVD algorithm \cite{sarwar2000application,zhou2015svd}, which capitalizes on data distribution for delivering recommendations by efficiently managing dimensionality reduction and highlighting essential features. It decomposes a user-item matrix into three constituent matrices, capturing the interaction between users and items. By reducing the dimensionality, SVD effectively uncovers latent features indicative of user preferences. Then, SVD leverages the data distribution to recommend items that align with users' interests based on their past behavior. We also test the utility with the K-means \cite{zhang2018differential,yin2020improved} algorithm. Figure \ref{fig:recommendation} shows that both Precision and Recall increase as $\epsilon$ increases. They can be very high by considering the non-private results as the ground truth (e.g., 90\%+).

\subsection{Highly Dynamic Data Distributions}
\label{sec:highly}

To evaluate DPI on highly dynamic data streams, we generated a synthetic dataset over $3,000$ time slots. For each time slot, we generate a synthetic dataset by initializing $100$ items in each dataset and sampling their counts with the Gaussian distributions (mean $100$ and variance randomly chosen from $1, 4, 9, 16, 25, \dots, 100$). Figure~\ref{fig:streamplot} shows the true values in the synthetic datasets and the DPI outputs under different $\epsilon$ (since there are numerous values in each time slot, we plot the median values in the figure). Then, we can observe that the DPI outputs are close to the true values by well preserving the original data distribution.

Furthermore, Figure \ref{fig:dpi_changingpdf} shows the MSE and KL divergence of DPI outputs over time under different $\epsilon$. From the plots, we observe the values of MSE and KL divergence are very small even in case of small $\epsilon$ (e.g., $0.5$ and $1$). Moreover, although the distribution changes significantly over time, the MSE and KL divergence of DPI are very stable. The boxplots in Figure \ref{fig:dpi_boxplots} show the lower quantile (LQ), upper quantile (UQ), and median (M) of the DPI output densities. The output density here refers to the distribution of DPI outputs in terms of their frequency and spread. The spread of the distributions remains stable over time.

\begin{figure*}[!tbh]
	\centering
	\subfigure[MSE vs $t$]{
		\includegraphics[angle=0, width=0.245\linewidth]{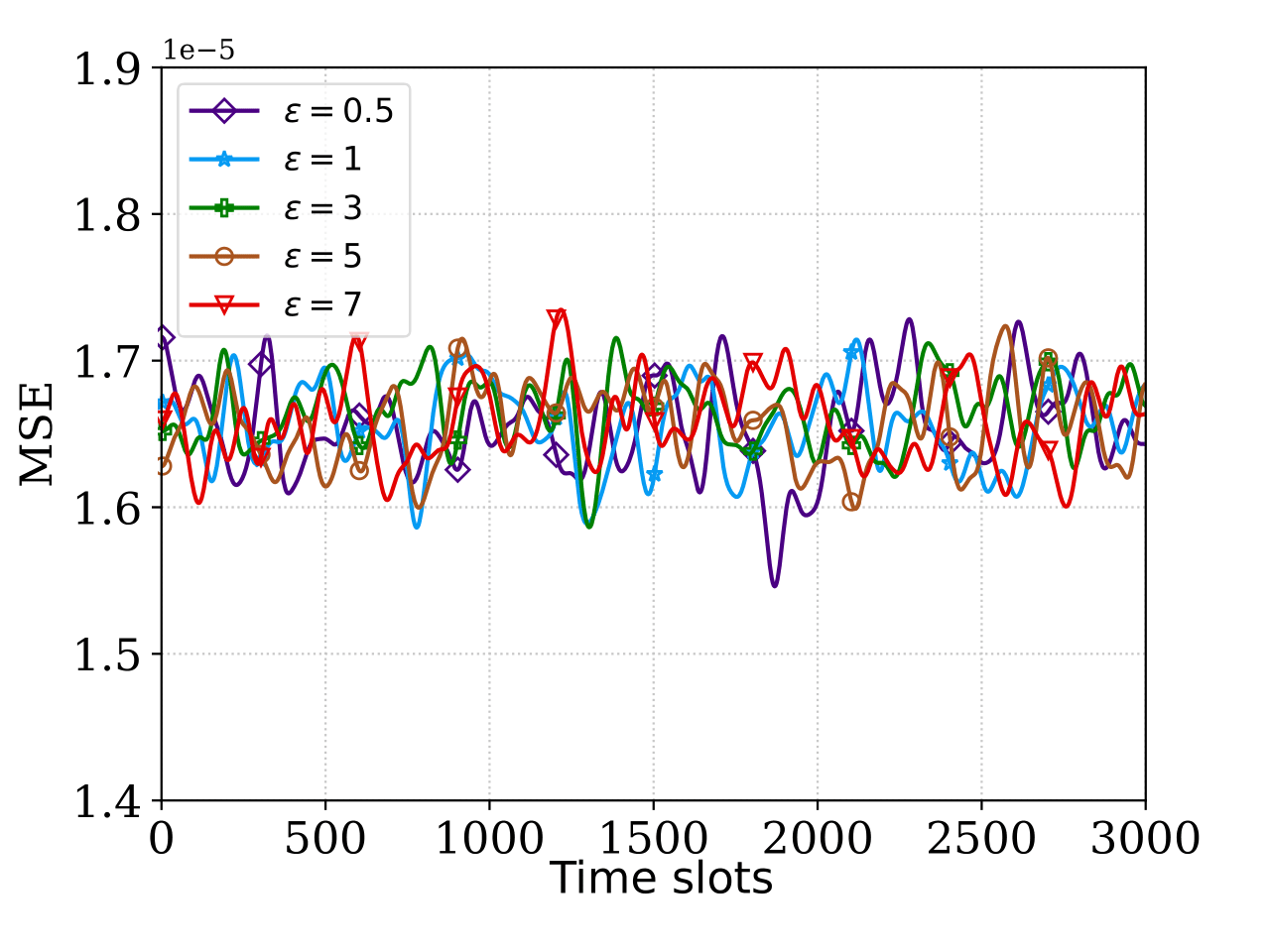}		
        \label{fig:dpi_changing_realtime1_mse} }
		\hspace{-0.18in}
	\subfigure[MSE  vs $t$]{
		\includegraphics[angle=0, width=0.245\linewidth]{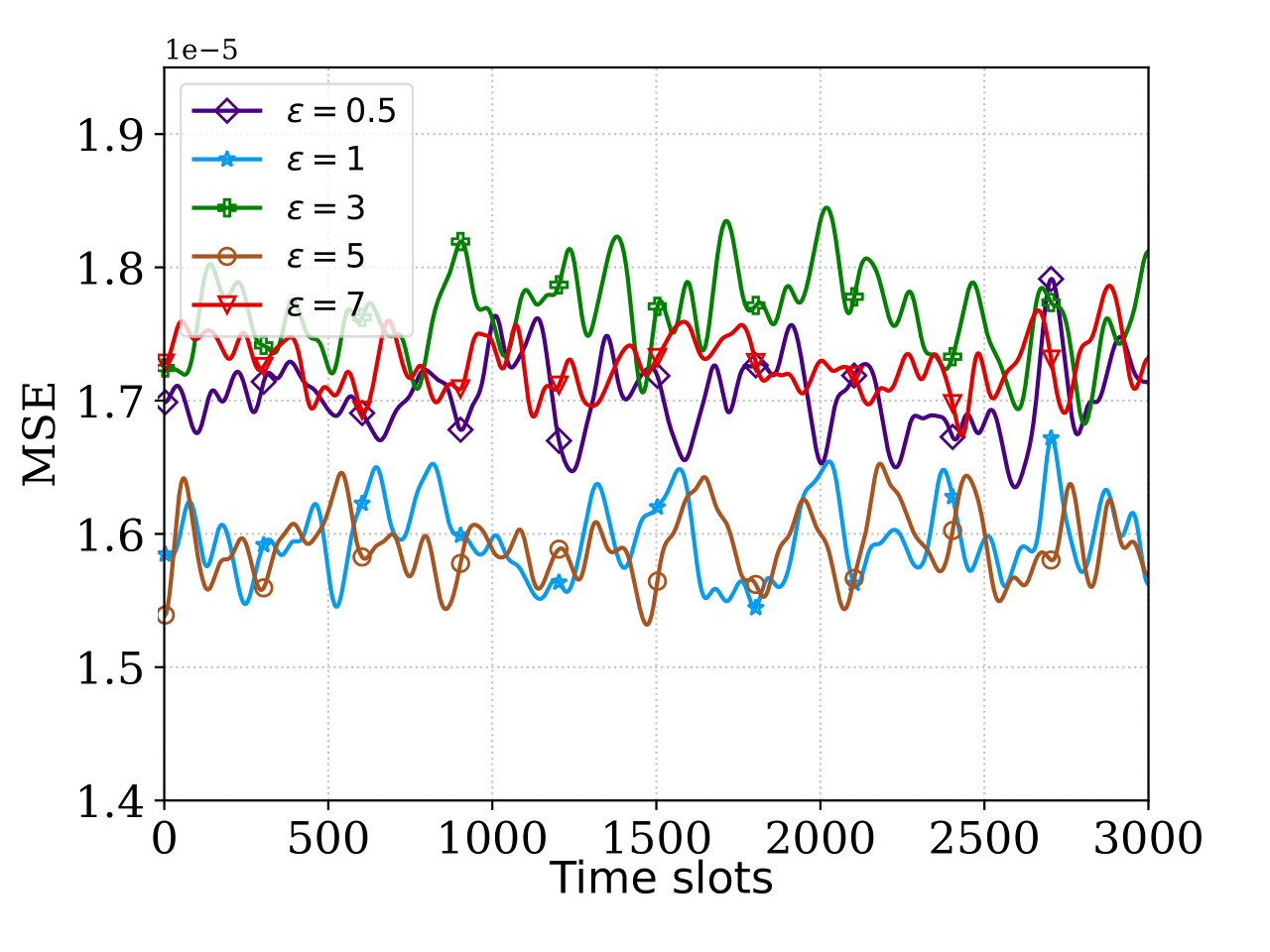}
		\label{fig:dpi_changing_realtime2_mse}}
		\hspace{-0.18in}
	\subfigure[KL div. vs $t$]{
		\includegraphics[angle=0, width=0.245\linewidth]{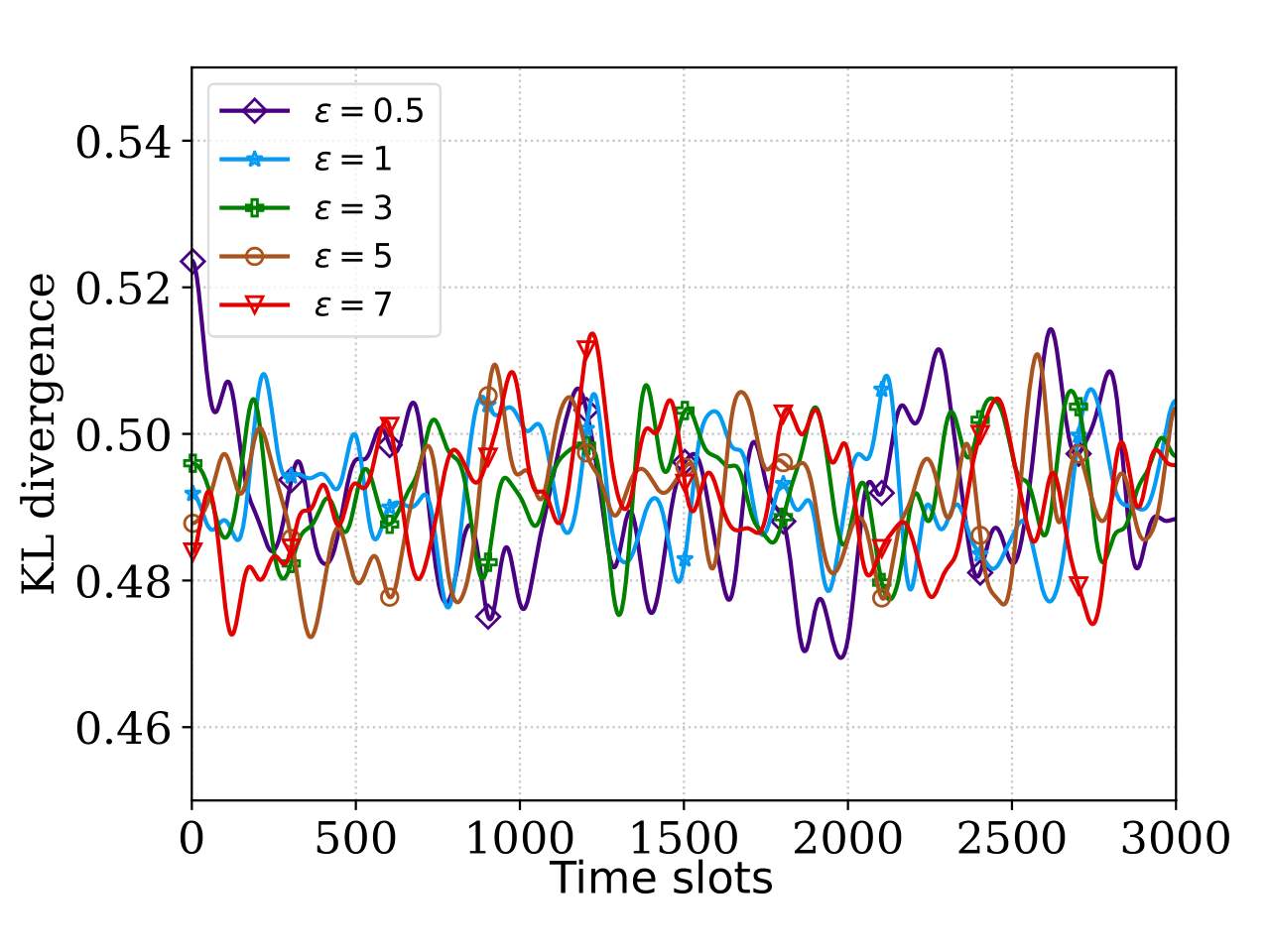}
		\label{fig:dpi_changing_realtime_kl} }
		\hspace{-0.18in}
	\subfigure[KL div. vs $t$]{
		\includegraphics[angle=0, width=0.245\linewidth]{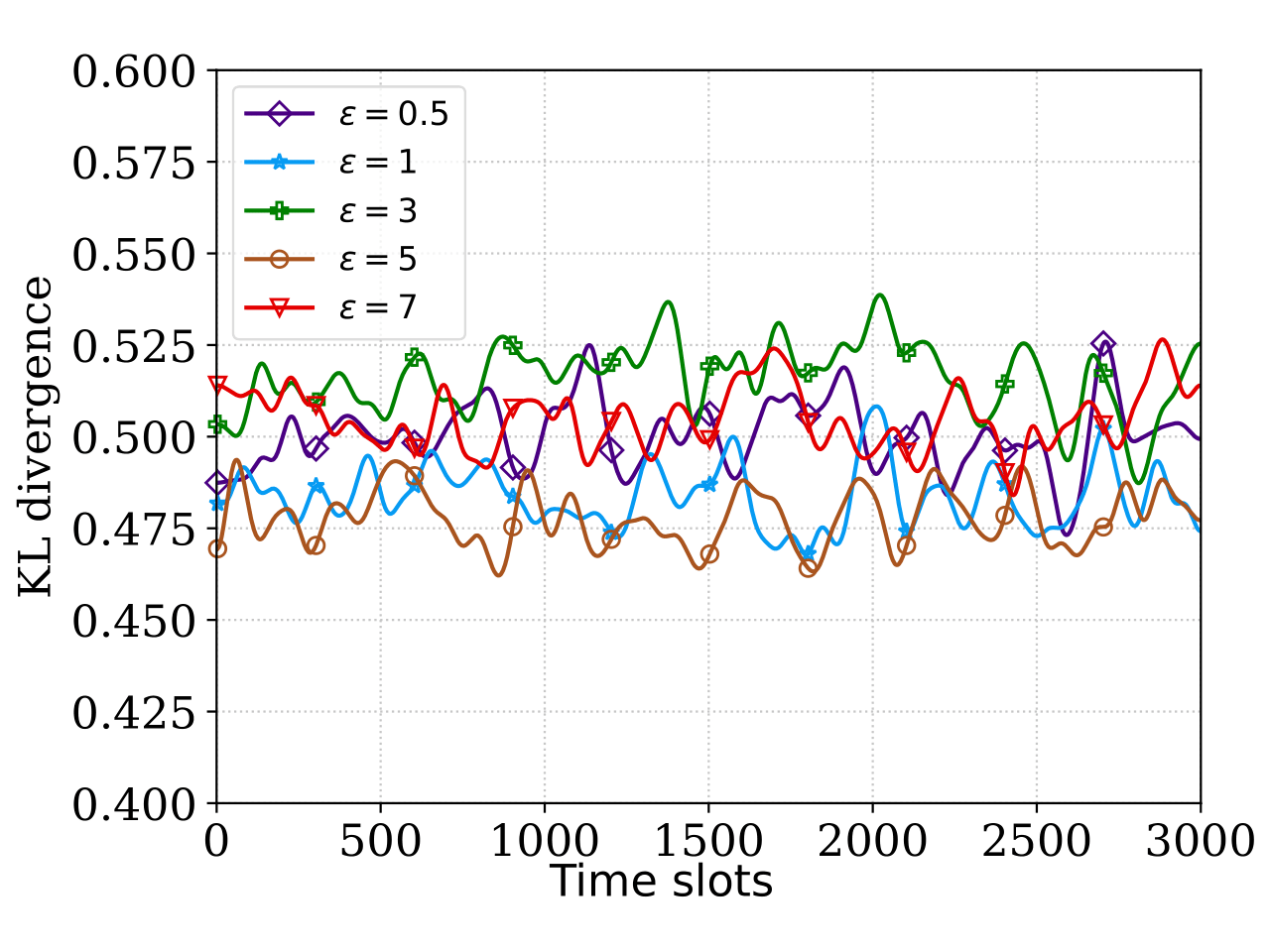}
		\label{fig:dpi_changing_realtime_kl2} }	
	\caption{Average MSE and KL divergence over $1,000$ time slots. (a) 
 (b): MSE vs $t$ (varying $\epsilon$). (c) (d): KL divergence vs $t$ (varying $\epsilon$).}
 \vspace{-0.1in}
	\label{fig:dpi_changingpdf}
\end{figure*}

\begin{figure*}[!tbh]
	\centering
 	\subfigure[Density vs $\epsilon$]{
		\includegraphics[angle=0, width=0.23\linewidth]{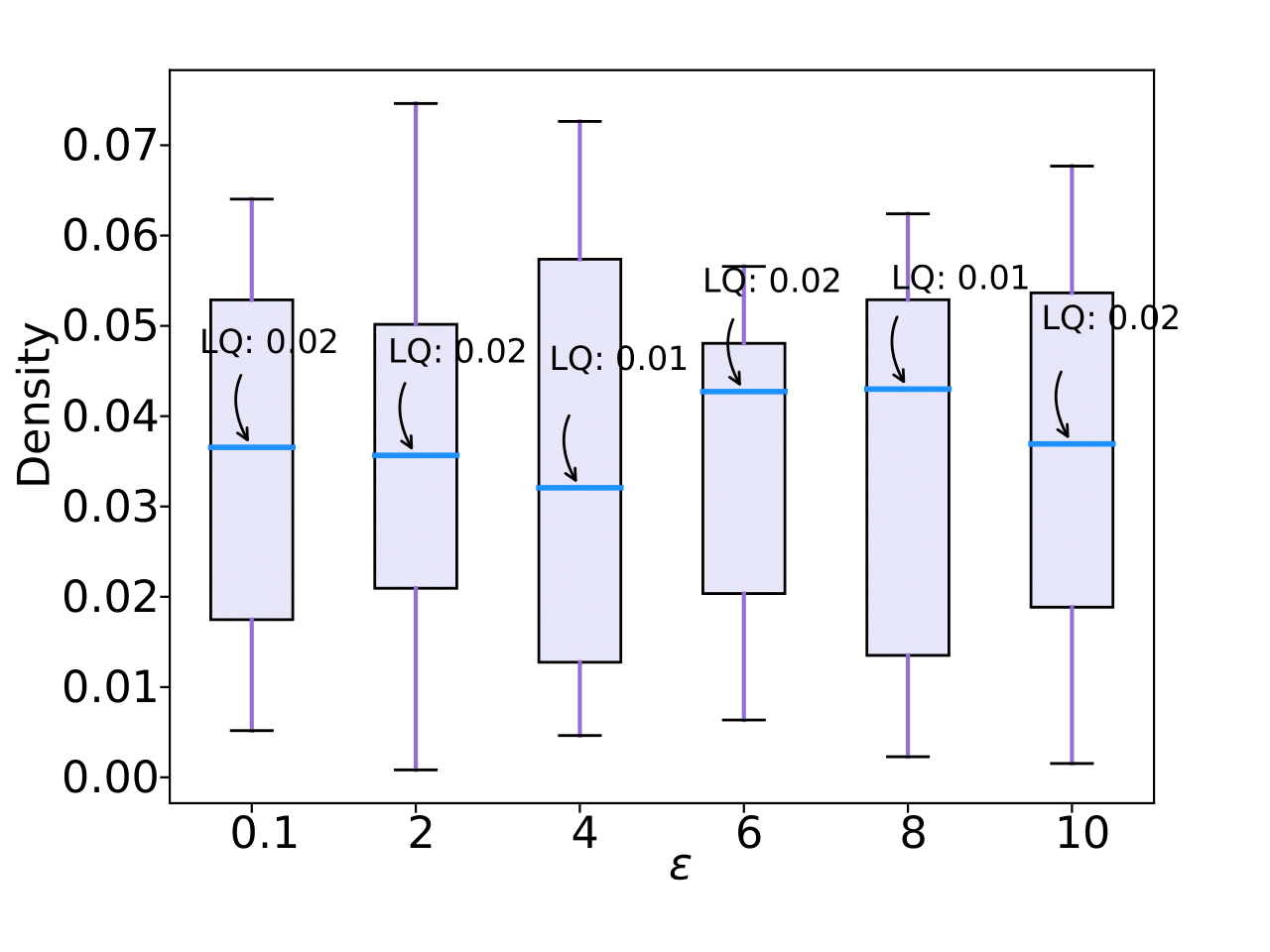}
		\label{fig:results_box_l} }
	\subfigure[Density vs $\epsilon$]{
		\includegraphics[angle=0, width=0.23\linewidth]{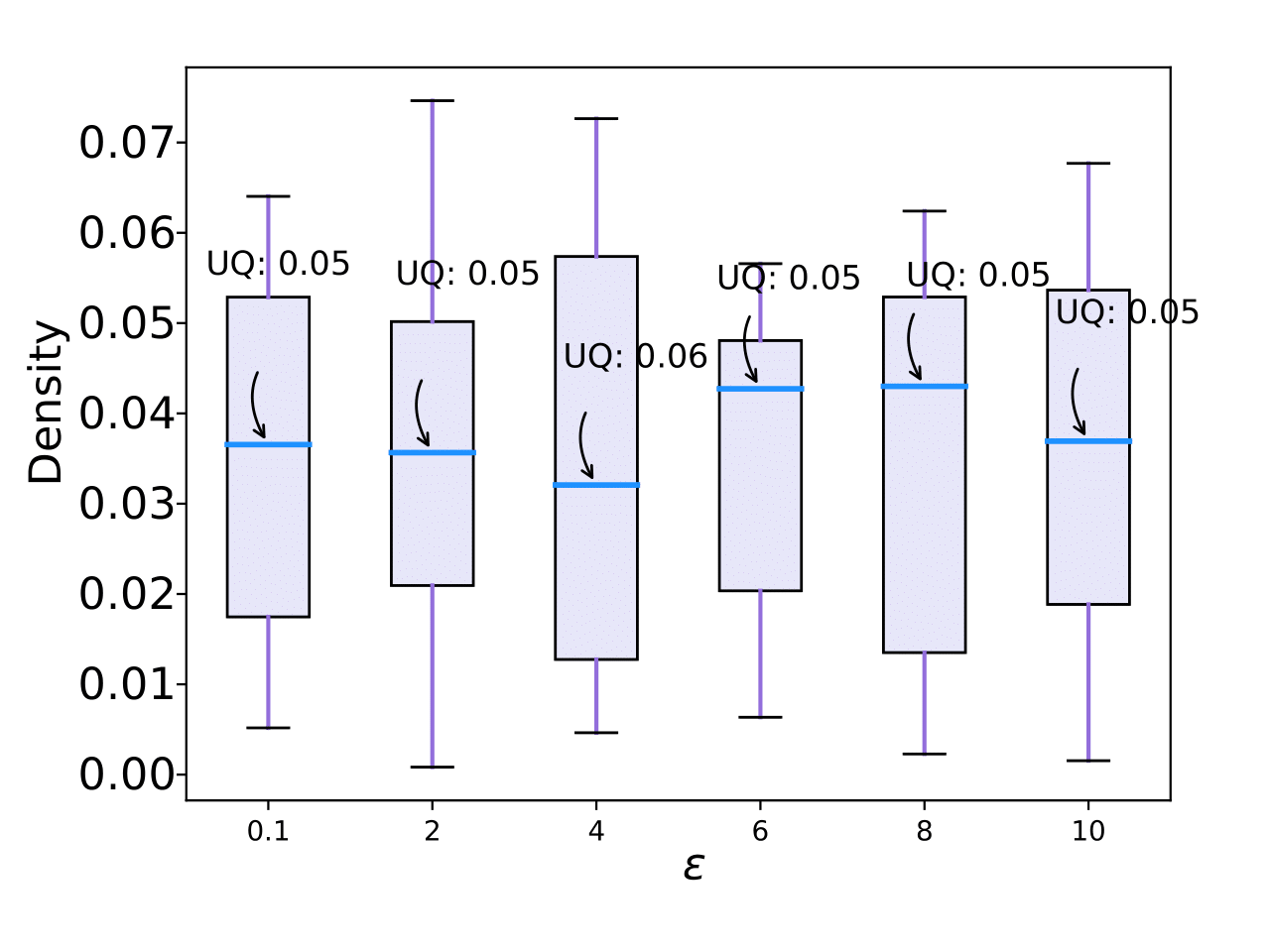}
		\label{fig:results_box_u} }
   	\subfigure[Density vs $t$]{
		\includegraphics[angle=0, width=0.23\linewidth]{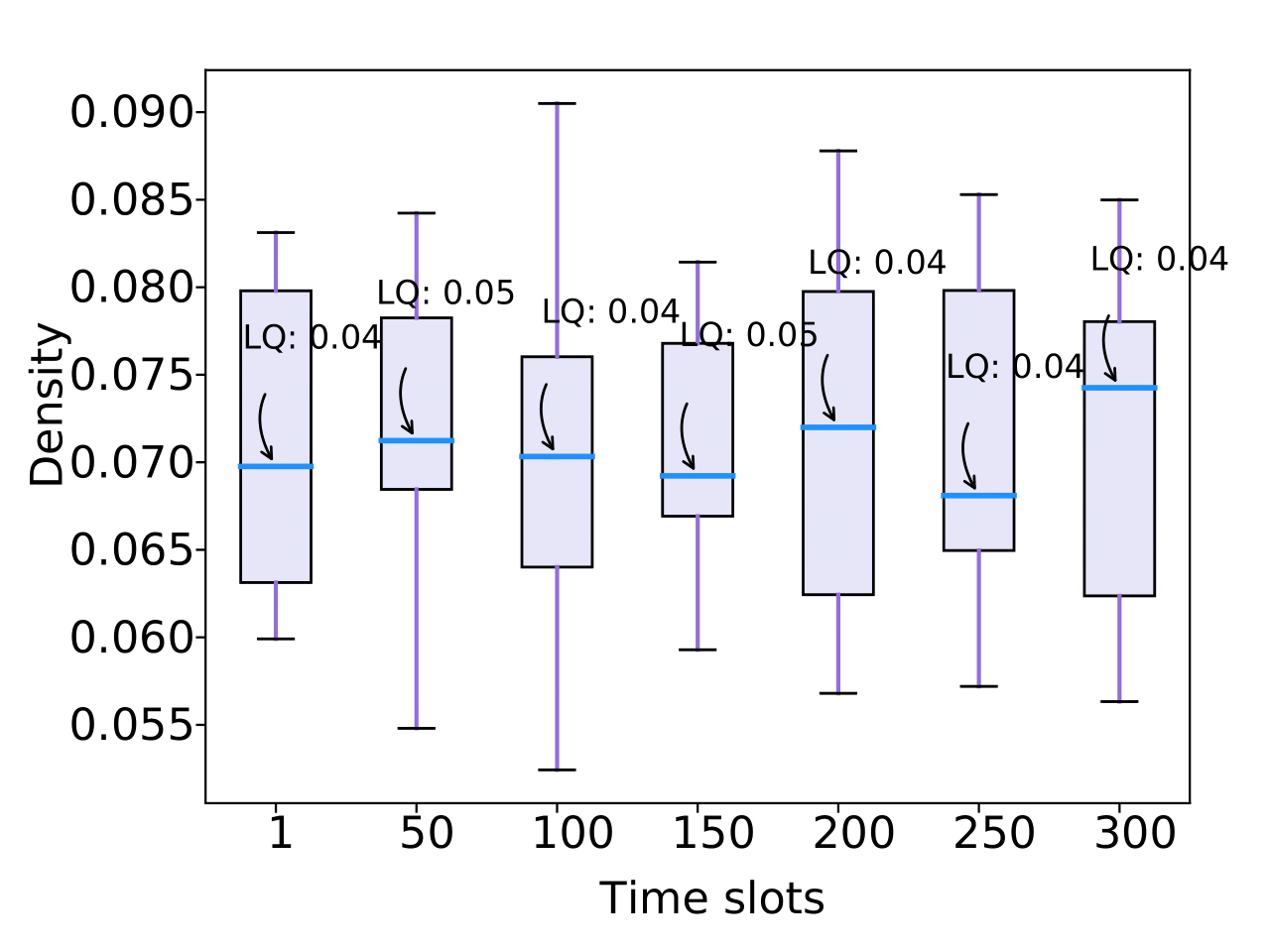}
		\label{fig:results_box_time_LQ} }
	\subfigure[Density vs $t$]{
		\includegraphics[angle=0, width=0.23\linewidth]{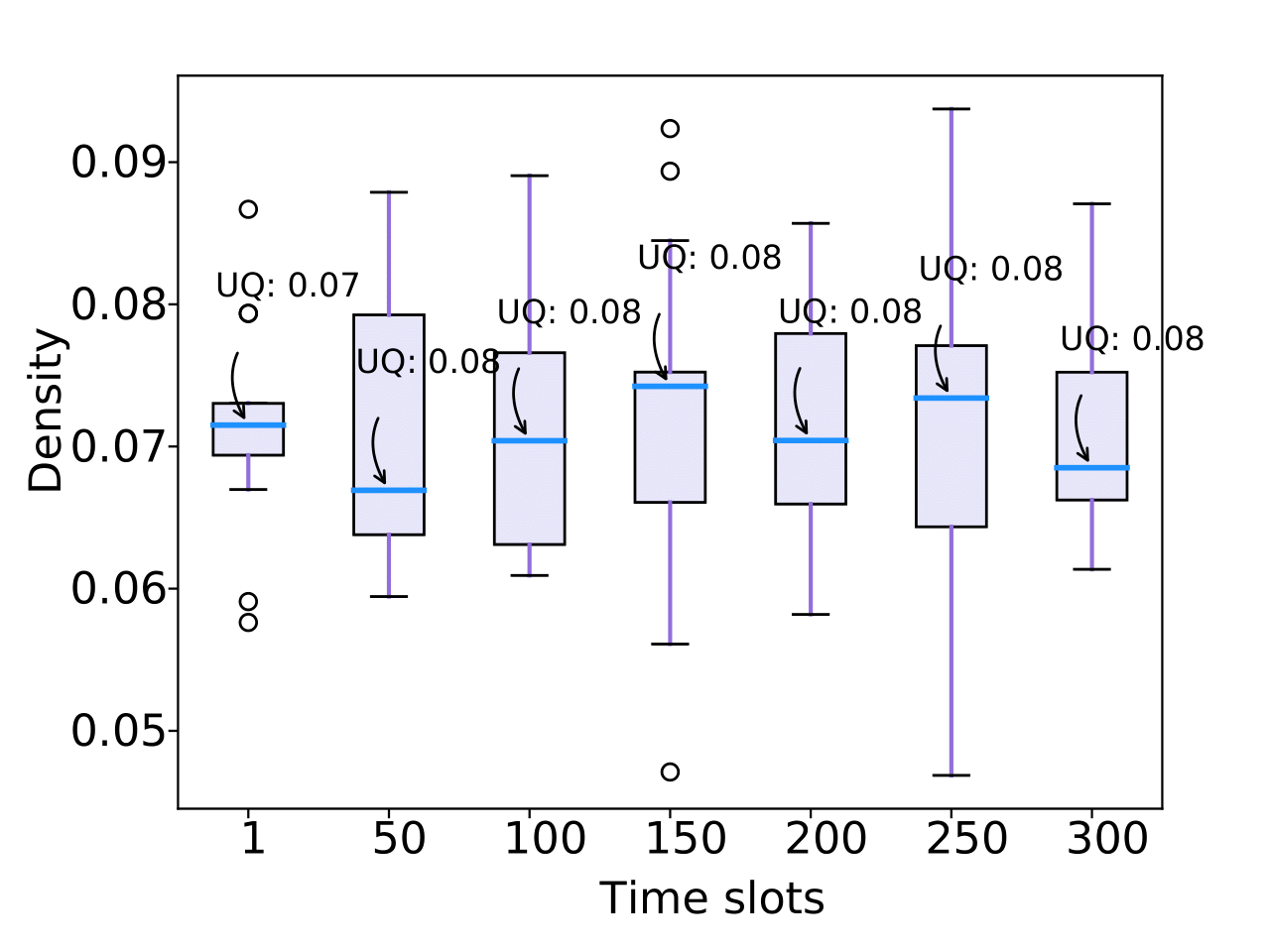}
		\label{fig:results_box_time_UQ} }
	\caption{Distribution spread of DPI output on the synthetic dataset with changing distribution. (a) (b): DPI output vs $\epsilon$. (c) (d): DPI output vs $t$. The boxplots show the lower quantile (LQ), upper quantile (UQ), and median (M) of the DPI output densities. The spread of the distributions remains stable.}
 \vspace{-0.1in}
	\label{fig:dpi_boxplots}
\end{figure*}

\begin{figure}[!h]
    \centering
	\subfigure[Median values vs $t$]{
		\includegraphics[angle=0, width=0.48\linewidth]{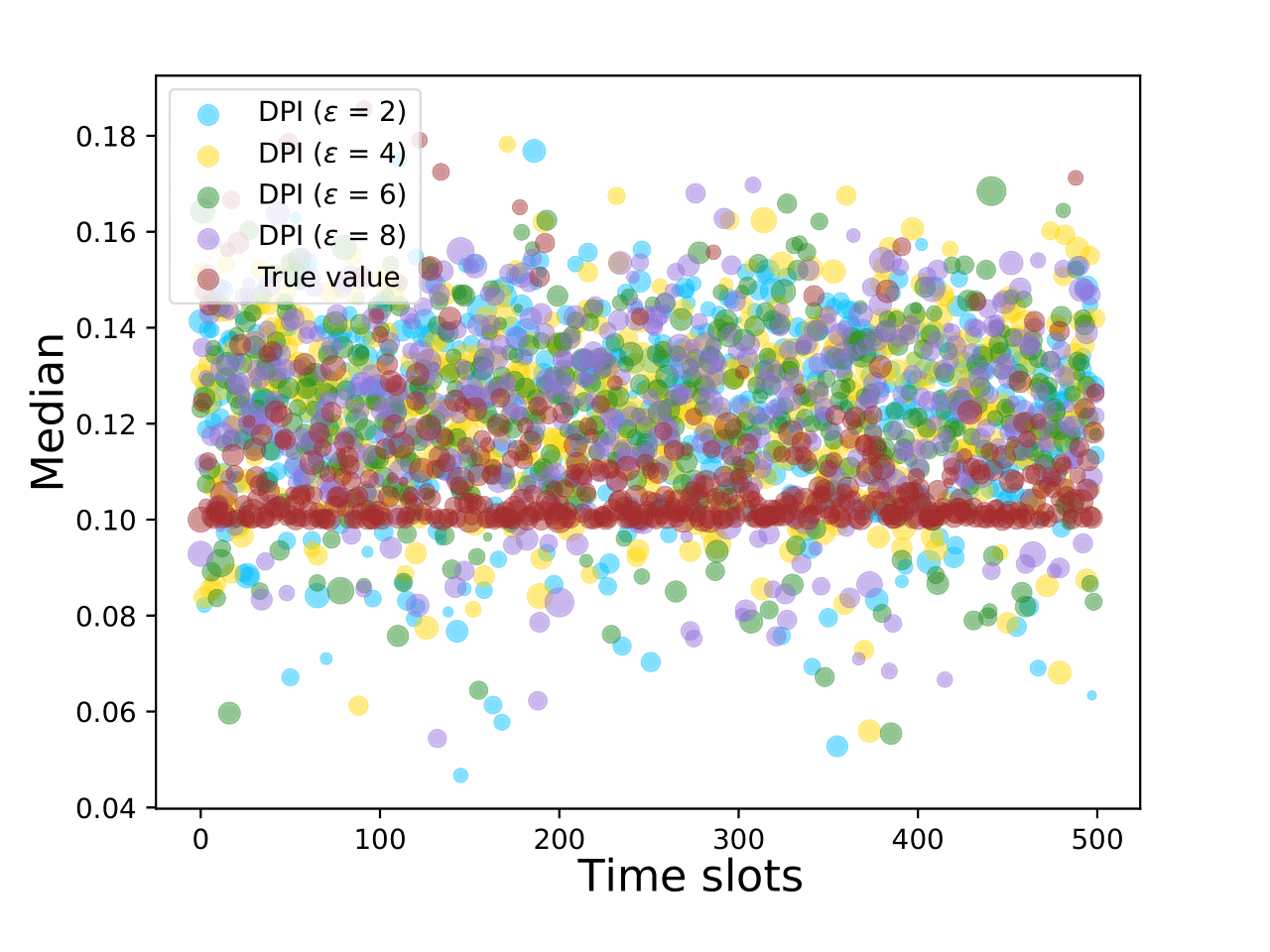}
		\label{fig:stream_mean_acc}}
	\hspace{-0.18in}
	\subfigure[Median values vs $t$]{
	\includegraphics[angle=0, width=0.48\linewidth]{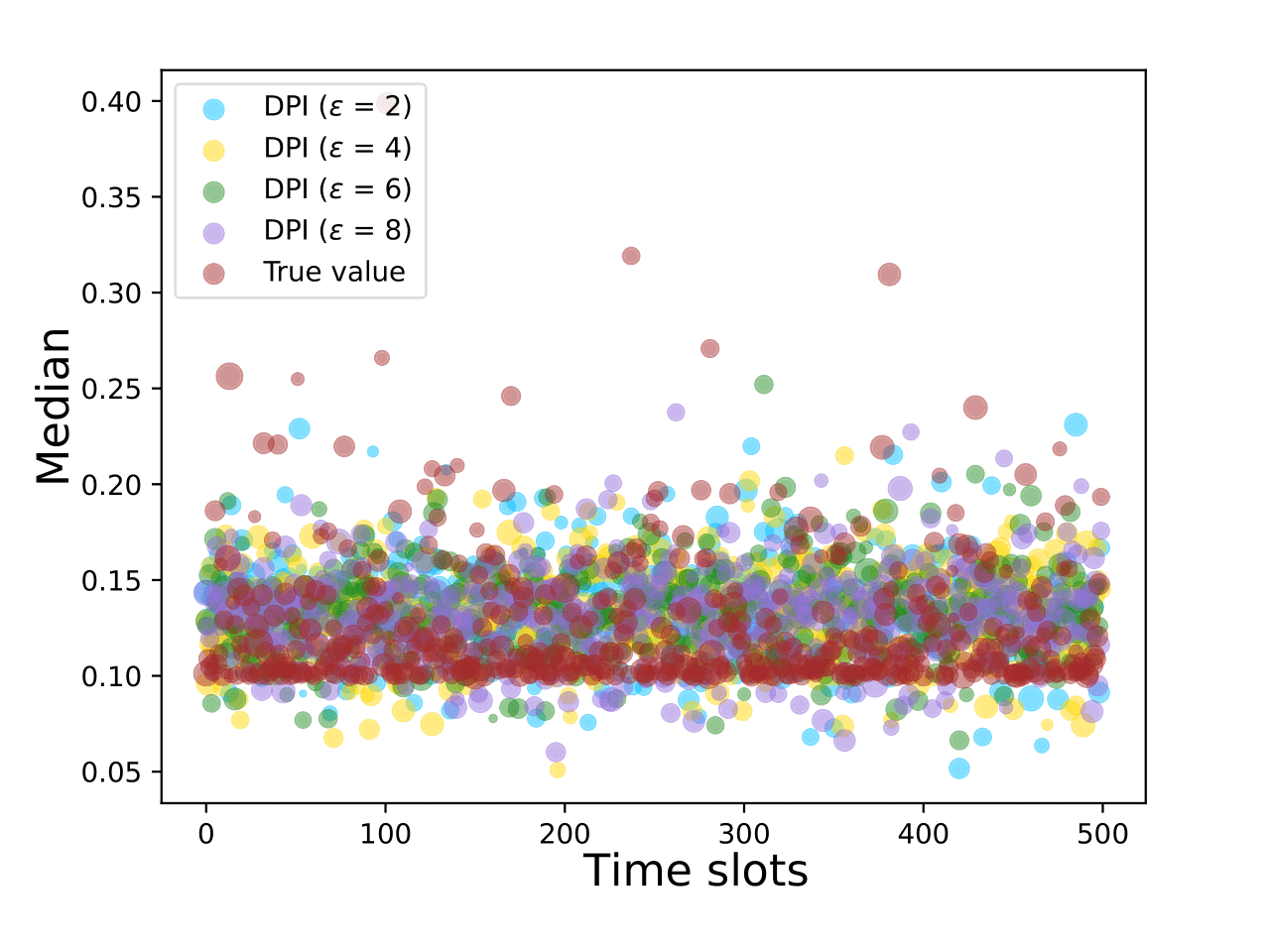}
	\label{fig:stream_mean_current}}
\caption{Real-time DPI outputs and true data (median).}\vspace{-0.1in}
\label{fig:streamplot}
\end{figure}

Finally, we conduct experiments on $20$ more synthetic datasets and different $\epsilon\in[0.5,10]$ (see Figure \ref{fig:box_plot_synthethic} in Appendix \ref{sec:synthetic}). First, we generate $10$ distinct synthetic datasets by initializing $1,000$ items in each dataset and sample their counts with the Gaussian distributions (mean $1,000$ and variances $1, 4, 9, 16, 25, \dots, 100$ for $10$ datasets, respectively). Second, we generate another $10$ distinct synthetic datasets with the domain size $10,000$ and similar settings. Figure \ref{fig:box_plot_synthethic} further proves the stability and effectiveness of DPI across varying data distributions and domain sizes.

 \section{Discussion}
\label{sec:discussion}
 
\subhead{Boosting and DPI} Boosting is a powerful technique in machine learning that combines multiple weak models to produce a stronger and more robust model. In the case of DPI, boosting is utilized to enhance privacy and accuracy through the continuously updated streaming data in each dynamic batch. In particular, we only run $T$ boosting rounds in the first batch to get a good sampler PDF. Subsequently, DPI will continuously update the synopses sampling distribution in each time slot to the underlying data distribution with privacy preservation. Upon that, the efficiency of DPI can be significantly improved. DPI can also be modified to utilize boosting technique for each round. In this way, the accuracy can be better but cost more privacy budget. 

\subhead{Complexity Analysis} 
The complexity analysis for DPI focuses on the primary computational components: the Boosting Algorithm, and the $0$-DP Synopsis Generation algorithm. In DPI, the Boosting Algorithm updates and normalizes distribution synopses with a complexity of $O(n)$, and the $0$-DP Synopsis Generation algorithm generates synopses through probabilistic sampling also with a complexity of $O(n)$. The overall time complexity $O(n)$ also aligns well with the standard streaming algorithms (over time series). Moreover, DPI consumes memory proportional to the domain size (which is tolerable even for large domains). 

\subhead{No Numerical Overflow for Extremely Tiny Budgets} 
The randomization in DPI is based on ``DP-Boosting'', which effectively translates extremely tiny budgets into negligible or even no reweighting. Thus, different from noise-additive DP mechanisms (e.g., Laplace and Gaussian), extremely tiny budgets in DPI do not entail any numerical overflow.

\subhead{Limitations} 
First, DPI presents a new privacy-utility tradeoff compared to traditional noise-additive mechanisms. Utility loss manifests as ineffective tracking of new stream data, potentially in extreme cases of dynamic data over extremely long periods, where privacy budget depletes, possibly necessitating DPI system restart for optimal utility with a renewed budget. However, DPI may not necessarily need to restart in practice for two reasons: (1) our random budget allocation ensures that budgets are not depleted even after a large number of time slots (Figure \ref{fig:remaining_budget} demonstrates that budgets are adequate after $500$ time slots), and (2) given extremely tiny privacy budgets, DPI maintains low MSE and KL divergence for different data distributions over extremely long periods. Figure \ref{fig:extremetimeslot} shows that the results are still stable even after $200,000$ time slots (no need to restart), evaluated on the synthetic data generated in Section \ref{sec:highly}. Indeed, we cannot find extreme cases of the low utility of DPI in our extensive empirical studies.

\vspace{-0.15in}

\begin{figure}[!h]
    \centering
	\subfigure[Remaining budget vs $t$]{
        \hspace{0.04in}
		\includegraphics[angle=0, width=0.48\linewidth]{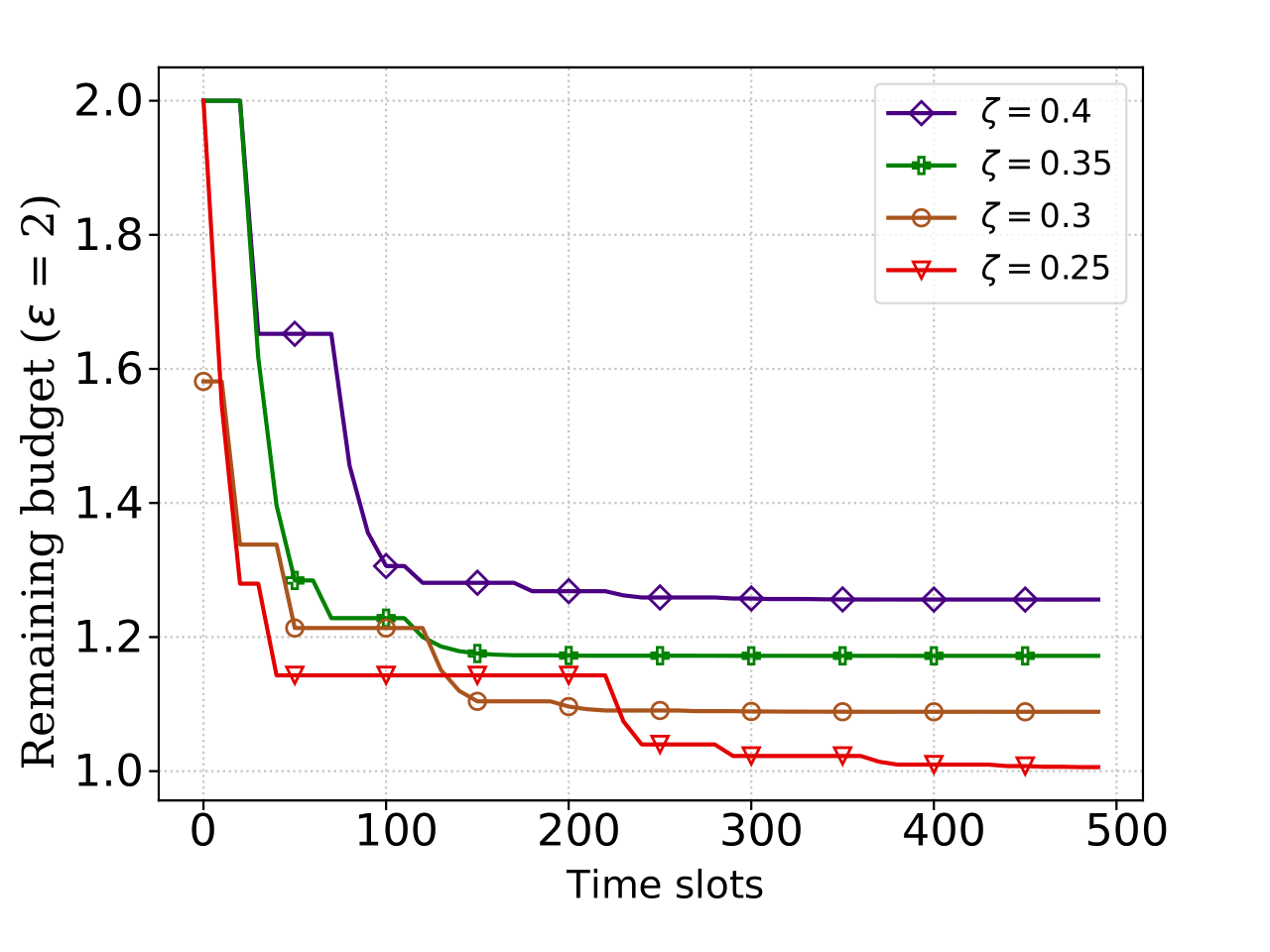}
		\label{fig:remaining_budget1}}
	\hspace{-0.24in}
	\subfigure[Remaining budget vs $t$]{
	\includegraphics[angle=0, width=0.48\linewidth]{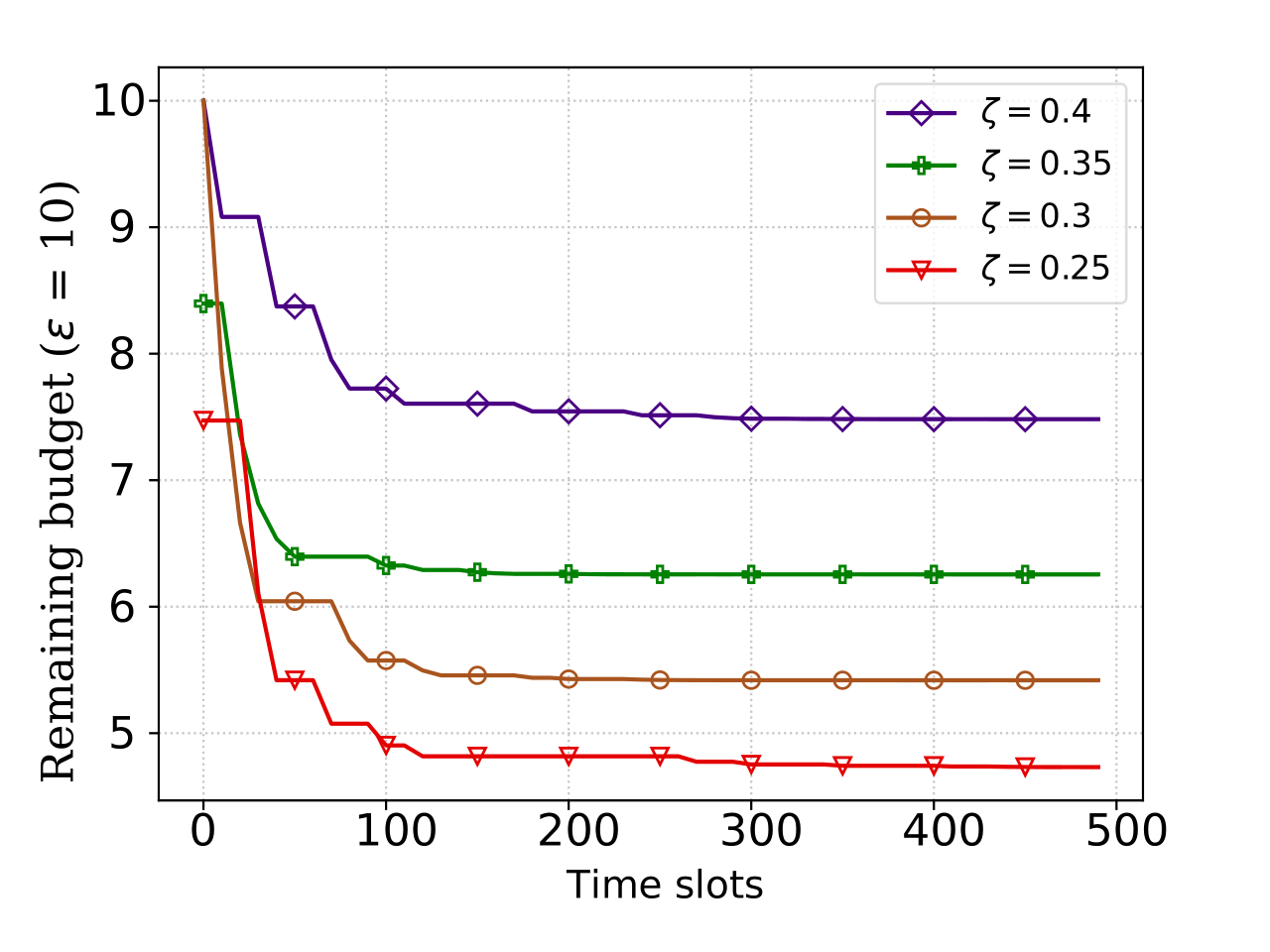}
	\label{fig:remaining_budget2}}\vspace{-0.05in}
\caption{Remaining budget for DPI with the random privacy budget allocation: (a) $\epsilon = 2$, and (b) $\epsilon = 10$.}
\label{fig:remaining_budget}
\end{figure}

\begin{figure}[!h]
    \centering
	\subfigure[Median (moving average) vs $t$]{
		\includegraphics[angle=0, width=0.48\linewidth]{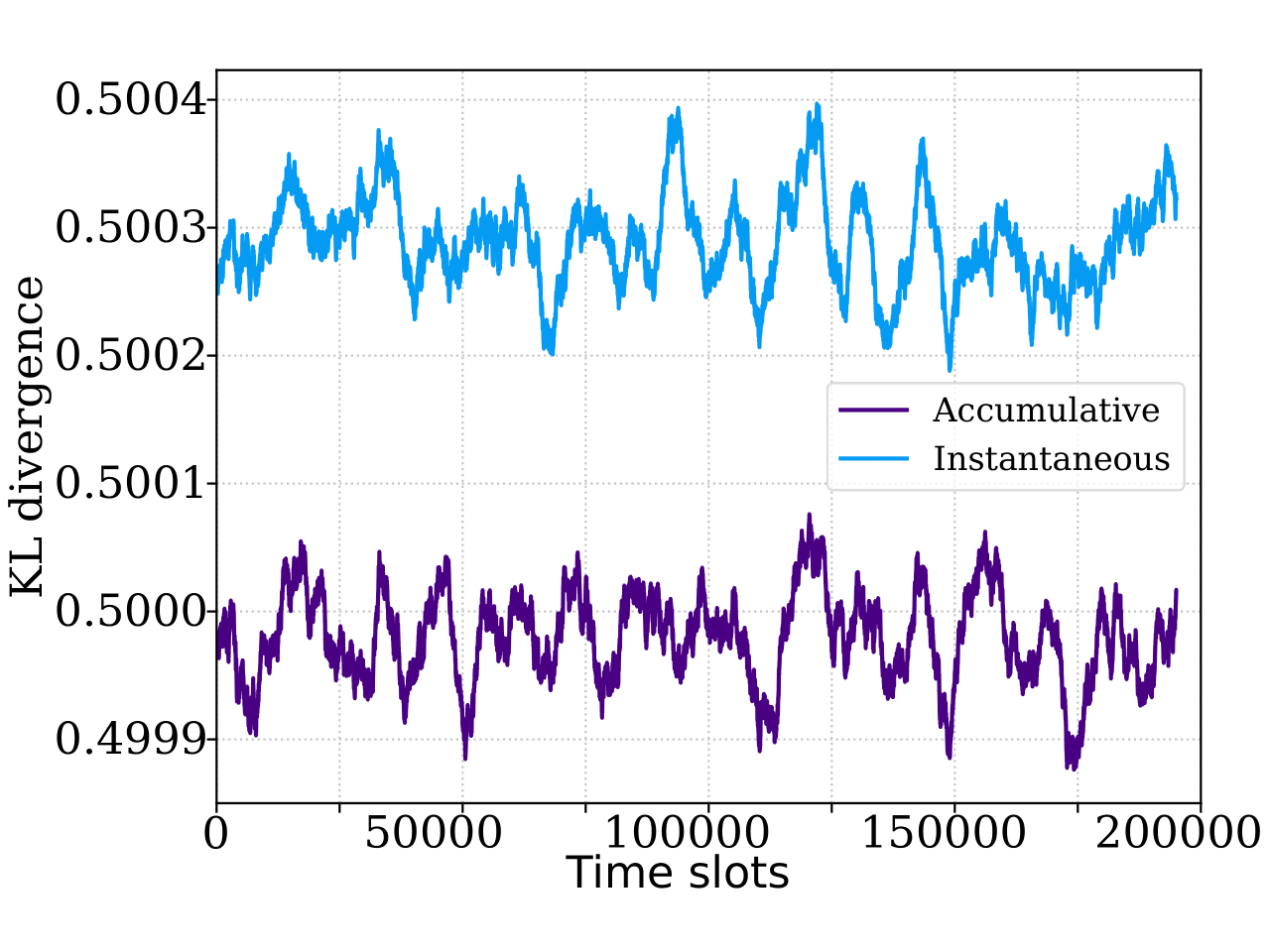}
		\label{fig:extreme_timeslot_kl}}
	\hspace{-0.18in}
	\subfigure[Median (moving average) vs $t$]{
	\includegraphics[angle=0, width=0.48\linewidth]{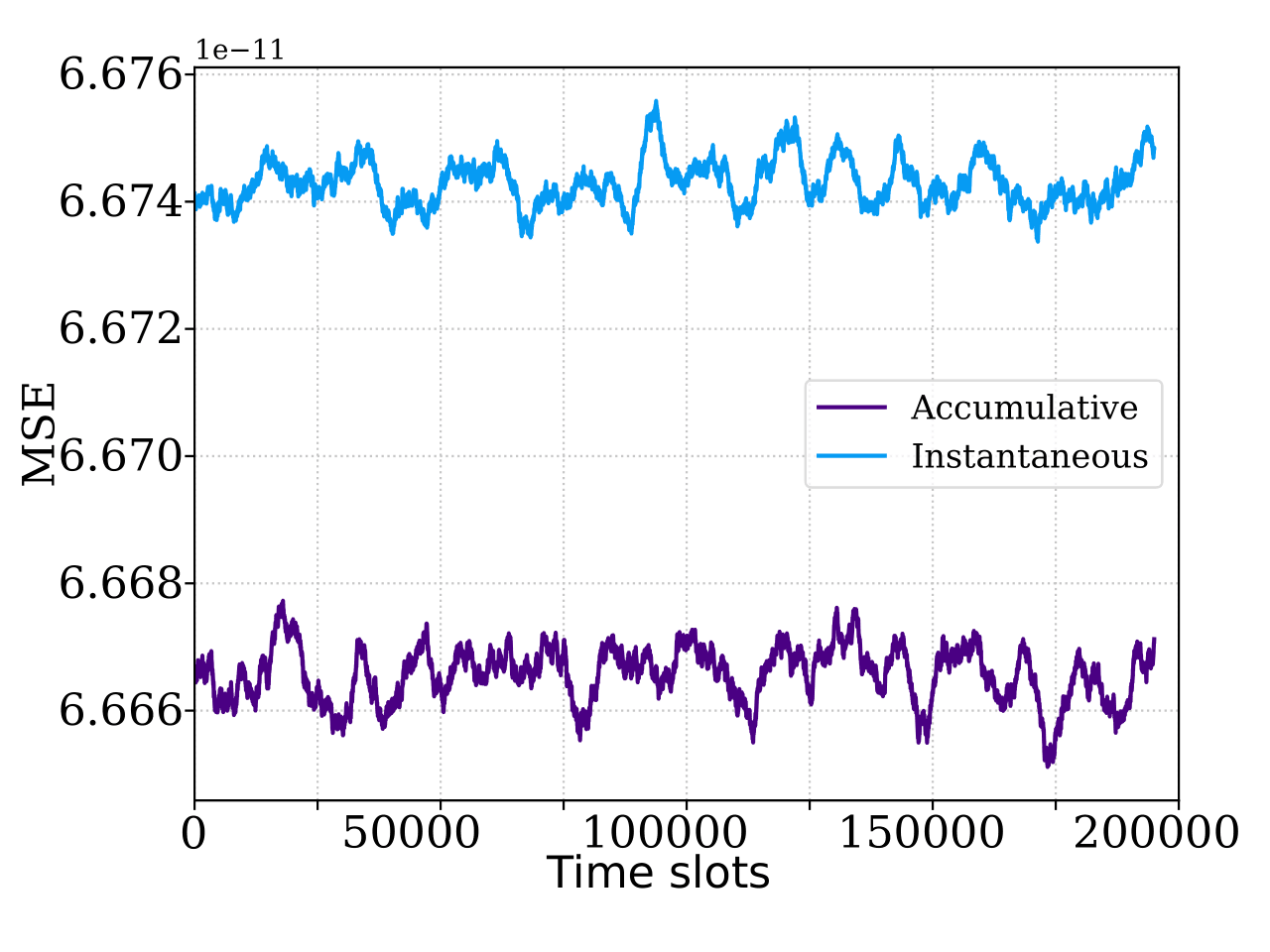}
	\label{fig:extreme_timeslot_mse}}\vspace{-0.05in}
\caption{Evaluation of DPI on $200,000$ time slots on synthetic datasets (a) KL divergence and (b) MSE:  domain size $10,000$.}\vspace{-0.2in}
\label{fig:extremetimeslot}
\end{figure}
\vspace{-0.1in}
Second, DPI outputs data as a series of PDFs instead of count histograms, suitable for most downstream analyses. Nevertheless, counts can be estimated from these PDFs by discretizing the domain into bins, calculating each bin's probability mass, and scaling these probabilities by the total data size. For instance, a 0.02 probability in a bin for ages 15-20 in a million-record dataset implies an estimated 20,000 individuals in that age group. To privately disclose counts in certain applications, DPI only needs to assume the release of the non-private total data size at time slot $t$.

\section{Related Work}
\label{sec:related}
\vspace{-0.1in}
\noindent\textbf{Data Streaming with DP.} Since the early studies on the private data streaming \cite{10.1145/882082.882086,chandrasekaran2002streaming}, 
DP models were proposed to protect the streaming data. Ebadi et al. \cite{10.1145/2676726.2677005} proposed a personalized DP for dynamically adding records to the database. Meanwhile, Liu et al. \cite{10.1145/2792838.2800191} also proposed personalized DP with the weighted posterior sampling to reduce the extra Gaussian noise to the parameter space. To further apply DP to streaming data, many works \cite{bolot2013private, 10dwork2010differential,mir2011pan,chan2012differentially,chan,pegasus,wang2016rescuedp} focus on continuously publishing statistics computed over events. Chan et al. \cite{chan} propose several methods that can handle binary streams of different users over potential infinite streams. Several works \cite{pegasus, perrier2018private, wang} follow a similar idea of applying partition algorithms to handle numeric values in the streaming data. However, event-level privacy is not strong enough to prevent sensitive data. Later, Kellaris et al. \cite{10.14778/2732977.2732989} proposed the $w$-event privacy to protect event sequence occurrence in $w$ timestamps. This notion converges to user-level privacy when $w$ is set to infinite, but the noise would be unbounded. 
Aiming to achieve user-level DP, FAST \cite{fan2013adaptive} was proposed to release real-time statistics without full DP infinite data stream. Dong et al. \cite{10179466} propose continual observation mechanisms for various fundamental functions under user-level differential privacy without requiring a priori restrictions on the data.

\subhead{Query Boosting with DP} Boosting has been widely used for improving the accuracy of learning algorithms \cite{schapire1998improved}. There are also many algorithms that boost the DP results. In \cite{practicalprivacy}, they combined private learning with DP, using learning theory to comprehend database probabilities, potentially enhancing DP's distortion reduction. For general counting queries, Dwork et al. proposed a base synopsis generator \cite{dwork2009complexity}, and also constructs an appropriate base synopsis generator for any set of low-sensitivity queries (not just counting queries) \cite{dwork2010boosting}. These well-designed synopsis generators can significantly boost the utility of DP results. Recall that our synopsis generator significantly differs from them to support infinite data streaming. 

\subhead{Data Streaming with LDP} LDP \cite{ErlingssonPK14} involves users perturbing their data before sending it to an untrusted server for aggregation, posing challenges in privacy accumulation for streaming data with limited utility (e.g., a sequence of locations \cite{WangH0QH22}). Joseph et al. \cite{joseph2018local} proposed a framework for continuous data sharing under LDP, consuming privacy budget based on distribution changes rather than collection periods to reduce the overall privacy expenditure. 
Wang et al. \cite{wang} proposed an algorithm using the Exponential mechanism with a quality function to publish a stream of real-valued data under both centralized and LDP. Li et al. \cite{ldp-memory} proposed an LDP approach to heavy hitter detection on data streams with bounded memory. 
These works can provide event-level privacy protection. Bao et al. \cite{bao2021cgm} 
proposed a novel correlated Gaussian mechanism for $(\epsilon, \delta)$-LDP on streaming data aggregation. However, CGM needs to update the privacy budget periodically. Compared to these, Ren et al. \cite{ren2022ldp} then proposed a population division framework that not only avoids the high sensitivity of LDP noise to the budget division but also requires less communication. However, it can only provide the $w$-event privacy.

\subhead{Applications} Differential privacy (DP) has been the de facto rigorous privacy solution for learning algorithms \cite{238162,10.1145/2976749.2978318,VaidyaSBH13,wu2020value, mohammady2020r2dp,wei2020federated,friedman2010data,WangSFSH22}. McSherry et al. \cite{10.1145/1557019.1557090} applied DP to provide personal recommendations. Chen et al. \cite{10.1145/3485447.3512192} proposed to publish the rating matrix of the source domain with DP guarantee. Feng et al. \cite{7543858} proposed a topic privacy-relevance parameter method for top-$k$ recommendations. Okada et al. \cite{okada2015differentially} used three queries on statistical aggregation on outliers to detect the occurrence of anomalous situations with differential privacy guarantee.

\section{Conclusion}
\label{sec:conclusion}
\vspace{-0.1in}
This paper addresses the significant challenges in an open problem: differential privacy for infinite data streams, which is crucial for various real-time monitoring and analytics applications. DP has been widely used to protect streaming data, but it has key limitations in terms of unbounded privacy leakage and sensitivity for ensuring user-level protection. We propose a novel solution, DPI, to effectively bound privacy leakage and enhance accuracy in real-time analysis on infinite data streams. We have conducted extensive theoretical studies to prove the convergence of privacy bound and the bounded errors (utility loss) for DPI over infinite data streams. We have also conducted comprehensive experiments to validate DPI on various real streaming applications and datasets.
\vspace{-0.1in}
\section*{Acknowledgments}
\vspace{-0.1in}
\noindent The authors sincerely thank the anonymous shepherd and all the reviewers for their constructive comments and suggestions. This work is supported in part by the National Science Foundation (NSF) under Grants No. CNS-2308730, CNS-2302689, CNS-2319277, and CMMI-2326341. It is also partially supported by the Cisco Research Award, the Synchrony Fellowship, the National Key Research and Development Program of China under Grant 2021YFB3100300, and the National Natural Science Foundation of China under Grants U20A20178 and 62072395.

{\footnotesize \bibliographystyle{IEEEtran}
\bibliography{bib}}

\appendices
\section{DP with AdaBoosting \cite{dwork2010boosting}}
\label{app:ada}

\begin{figure}[ht]
	\centering		\includegraphics[width=1.02\linewidth]{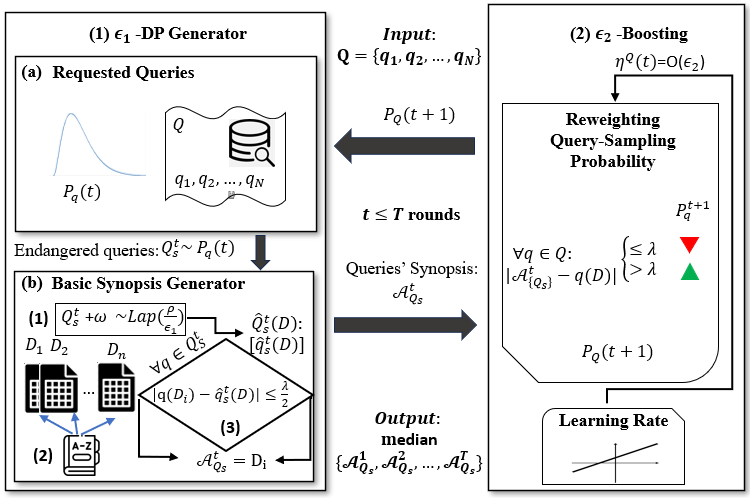}
  \vspace{-0.1in}
	\caption[Optional caption for list of figures]
	{The architecture of DP AdaBoosting~\cite{dwork2010boosting}}\vspace{-0.1in}
	\label{fig:dwork}
\end{figure}

\section{Proofs}

\subsection{\textbf{Proof of Theorem}~\ref{thm:utilitydpi}}
\label{thm:3}
As stated in Claim 4.3 in~\cite{dwork2010boosting}, the probability of inaccuracy produced by a DP boosting algorithm can be described in terms of $(\eta, \beta)$-base learners. In DPI, we can analogously consider $(\eta^{\mathcal{A}(q,t)}_i, \eta^{Q}_i, \beta)$-base learners at each iteration $i$.

After $t$ rounds of DP boosting, the probability of inaccuracy in the DPI framework is upper-bounded by $\prod_{i=1}^{t} \sqrt{\left[(1-4 (\eta^{Q}_i)^2)\left(1-4 (\eta^{\mathcal{A}}i)^2\right)\right]}$. By taking the natural log of this bound, a version similar to the privacy bound (the sum of natural logs) is obtained. Specifically,
\begin{eqnarray*}
& \log \left[\mathbb{P}(\text{bad outcome})\right]\leq \\
& \frac{1}{2} \sum{i=1}^{N \rightarrow \infty} \log \left(1-2 \eta_i\right) \left(1+2 \eta_i\right)
\end{eqnarray*}

Thus, this completes the proof. \qed

\subsection{\textbf{Proof of Remark~\ref{rem:utility}}}
\label{rem1:utility}
Our construction involves a decreasing sequence of privacy budgets, thus proving that $\eta_i$ must also decrease. We assume $\eta(x)<0.5$ is a converging function of at least an exponential order (\textit{exp}($^{-\zeta \cdot x}$)). Then, we have
\small
\begin{align}
  \eta^{\prime}(x) \geq -\zeta \eta(x), \quad \forall x\in \mathbb{R}^{+}
\end{align}
where $\zeta=\mid\inf_{x\in [1,\infty)} \frac{d\eta(x)}{dx} \mid$. By replacing $2\eta(x)$ with $u$, and for constants $m=\min\limits_{x\geq 1}(\eta(x))$ and $M=\max\limits_{x \geq 1}(\eta(x))$, we obtain
\begin{align*}
\small
    \mid L \mid \geq  -\frac{1}{4} \int_{m}^{M} \frac{\log [(1+u)(1-u)]}{\zeta u}du
\end{align*}

but given that the Spence's function~\cite{bloch2000higher} is defined as $Li_2(z)=-\int_{0}^{z} \frac{\log [(1-u)]}{u}du$, for $|z| \leq 1$, we have 
\begin{align*}
\small
    \mid L \mid \geq \frac{1}{4\zeta}\left(Li_2(M^2) - Li_2(m^2)+ 4Li_2(m)- 4Li_2(M)\right) 
\end{align*}
Here, we have utilized the following two results:
\begin{eqnarray}
\label{eq:facts}
    &\int_{0}^{z} \frac{\log [(1+u)]}{u}du=-Li_2(-z)\\
    &Li_2(z)+Li_2(-z)=\frac{1}{2}Li_2(z^2) \nonumber
\end{eqnarray}

Thus, this completes the proof. \qed

\subsection{\textbf{Proof of Theorem~\ref{rem1:privacy}}} 
\label{rem:privacy}
According to Equation~\ref{eq:tradeoff}, the continuous format of DPI utilizes the following privacy budget.
\begin{align*}
\small
    \epsilon=\frac{4}{\mu} \int_0^{\infty} \log\left[\frac{1+2 \eta(x)}{1-2 \eta(x)} \right] dx,
\end{align*}
By following a similar process as that described in the Remark~\ref{rem:utility}, we will have 
\begin{align*}
\small
    \epsilon\geq \frac{4}{\zeta \mu} \int_m^{M} \cfrac{\log\left[\frac{1+u}{1-u} \right]}{u}  dx.
\end{align*}
Finally, by utilizing the two results in Equation~\ref{eq:facts}, we have 
\begin{align*}
\small
    \epsilon \geq \frac{2}{\mu \zeta}\left(Li_2(M^2) - Li_2(m^2)\right). 
\end{align*}
Thus, this completes the proof. \qed

\subsection{\textbf{Proof of Theorem}~\ref{thm:final}}
\label{thm:4}
By establishing lower bounds for both privacy and accuracy loss, we can determine the optimal series. Thus, our optimization problem can be succinctly stated as follows.
\begin{eqnarray}
\small
\label{eq:tradeoff1}
& \min \limits_{0\leq m<M\leq 1} \frac{1}{8\zeta}\left(Li_2(M^2) - Li_2(m^2)+ 4Li_2(m)- 4Li_2(M)\right) \nonumber \\
&\text{w.r.t. }  \frac{2}{\mu \zeta}\left(Li_2(M^2) - Li_2(m^2)\right)=\epsilon,
\end{eqnarray}
but Spence's function has the following infinite series form: 
\begin{align*}
\small
    Li(z)=\sum_{t=1}^{\infty} \frac{z^t}{t^2}
\end{align*}
Therefore, we have $
    Li_2(M)-Li_2(m)=\sum_{t=1}^{\infty} \frac{M^t-m^t}{t^2}$. Due to
\begin{align}
\small
\label{epsseries}
    Li_2(M^2)-Li(m^2)=\sum_{t=1}^{\infty} \frac{M^{2t}-m^{2t}}{t^2}= O(\epsilon),
\end{align}
However, we note that minimizing \textit{probability loss} is equivalent to if we maximize $m^{t}+M^{t}$. This is true because our objective function has two opposite sign terms $A=[Li_2(M^2) - Li_2(m^2)]>0$ and $B= 4Li_2(m)- 4Li_2(M)<0$, where $B$ appears in $A$, and minimization is translated into maximizing $(m^t+M^t)$ (using the sequence of \textit{mean theroems}). Since Spence's function is strictly increasing, maximizing terms $m^t+M^t$, $\forall c$ means that $M=1$.

Thus, when $M=1\Rightarrow C\left(Li(1)-Li\left(m^2\right)\right)=\epsilon$, we have 
\small
\begin{align*}
& \Rightarrow Li\left(m^2\right)=Li(1)-\frac{\varepsilon}{C}=\frac{\pi^2}{6}-\frac{\epsilon}{C},  C=\frac{2}{\mu \zeta}\\
& \Rightarrow m^2=Li_2^{-1}(\frac{\pi^2}{6}-\frac{\epsilon}{C})\end{align*}
\small
\begin{align}
\label{eq:m}
    &\Rightarrow m=\sqrt{Li_{2}^{-1}\left(\frac{\pi^2}{6}-\frac{\epsilon}{C}\right)} 
\end{align}
Since each term in the series representation of the overall $\epsilon$ (Equation~\ref{epsseries}) has to be equal to the budget spent over the same index's disclosure, we have

\small
\begin{align*}
&\frac{2}{\mu \zeta} \cdot \sum_{t=1}^{\infty} \frac{M^{2t}-m^{2 t}}{t^2 } =\frac{4}{\mu} \log \left(\frac{1+2\eta_t}{1-2\eta_t}\right) \\
&\Rightarrow e^{ \cfrac{1-m^{2 t}}{2\zeta t^2 }} =\frac{1+2\eta_t}{1-2\eta_t} \\
&\Rightarrow \eta_t=\cfrac{\left[e^\frac{1-(Li_2^{-1}(\frac{\pi^2}{6}-\frac{\epsilon}{C}))^t}{t^2\zeta}-1\right]} {2\cdot\left[e^\frac{1-(Li_2^{-1}(\frac{\pi^2}{6}-\frac{\epsilon}{C}))^t}{t^2\zeta}+1\right]}
\end{align*}

Thus, this completes the proof. \qed

\section{Approximation of the Universe of PDFs}

\label{subsec:pool}

\begin{algorithm}[!tbh]
\footnotesize
\SetKwInOut{Input}{Input}\SetKwInOut{Output}{Output}

\KwIn{streaming dataset $D$, size of dataset $n$, sample size $N$}
 
\KwOut{the universe of synopses $\mathcal{A}$}

Initialize the quantization precision $p \in (0,1)$ to make the corresponding precision level $n_p=\frac{1}{p}+1$ equal to datset size $n$

$k$ $\leftarrow$ distinct value number of data $D$

Initialize the synopses pool $\mathcal{A}$ as empty

\For{$i \in N$}  
{Set equal sampling probability for each distinct value in the domain $Prob=\frac{1}{k}$

\tcc{Sampling n times from $k$ distinct values with $Prob=\frac{1}{k}$ and have a number of times each distinct value show among n times}
$t$ $\leftarrow$ Multinomial($n$, $k$, $Prob$)

\tcc{Sum of $P_i$ is $1$}
$P_i \leftarrow t*p$

$\mathcal{A}$.append($P_i$)

}
\Return the universe of synopses $\mathcal{A}$
\caption{$0$-DP Synopsis Generator}
\label{algm:zero-syno}
\end{algorithm}

When $n$ is sufficiently large, we can employ an asymptotic approximation to analyze the behavior. This approximation corresponds to the normal distribution limit of the Binomial distribution.

For large $n$, we have: $H \approx k - \frac{1}{2} \ln(2\pi ne) + \frac{1}{2} \sum_{i=1}^{k} \ln p_i$, where the natural logarithm is used for entropy calculation. Note that this approximation is not intended to provide an upper or lower bound, but rather an approximate form for large $n$. In the special case where the $p_i$ are uniform, the expression simplifies to:

\[ H \approx k - \frac{1}{2} \ln(2\pi ne) - \frac{k}{2} \ln k \]

While this is not an exact expression, it yields a highly accurate estimate of $H$ when $n$ is large.

\begin{lemma}
Let $H(\mathcal{A})$ denote the entropy of the constructed pool of synopses $\mathcal{A}$, and let $H_{\text{ideal}}$ denote the entropy of the ideal multinomial PDF. As the number of samples drawn from the multinomial distribution, denoted as $N$, approaches infinity, the ratio $\frac{H(\mathcal{A})}{H_{\text{ideal}}}$ converges to 1.
\end{lemma}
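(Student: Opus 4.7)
The plan is to interpret $H(\mathcal{A})$ as the Shannon entropy of the empirical frequency distribution that the pool $\mathcal{A}=(P_1,\dots,P_N)$ induces over the finite set $\Omega$ of realizable quantized PDFs (with $|\Omega|=\binom{n_p+k-1}{k-1}$), and $H_{\text{ideal}}$ as the Shannon entropy of the underlying multinomial law $\mathcal{M}$ over $\Omega$ from which each $P_i$ is drawn i.i.d.\ in Algorithm~\ref{algm:zero-syno}. With this reading, the claim reduces to showing that the entropy of an empirical measure formed by $N$ i.i.d.\ draws from $\mathcal{M}$ converges to the entropy of $\mathcal{M}$ as $N\to\infty$, after which dividing by the strictly positive constant $H_{\text{ideal}}$ gives the ratio.

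First, I would apply the strong law of large numbers coordinatewise on $\Omega$ (equivalently Glivenko--Cantelli, which is tight since $\Omega$ is finite): for each $\omega\in\Omega$, the empirical mass $\hat{\mathcal{M}}_N(\omega)=\tfrac{1}{N}\sum_{i=1}^N \mathbf{1}[P_i=\omega]$ converges almost surely to $\mathcal{M}(\omega)$, and therefore $\hat{\mathcal{M}}_N\to\mathcal{M}$ almost surely in total variation. Next, I would invoke the continuity of the Shannon entropy functional $H:\Delta_{|\Omega|-1}\to[0,\log|\Omega|]$ on the finite probability simplex; since $\Omega$ is finite, $H$ is uniformly continuous (and Lipschitz on any subset bounded away from the boundary). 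Composing the two steps yields $H(\mathcal{A})=H(\hat{\mathcal{M}}_N)\to H(\mathcal{M})=H_{\text{ideal}}$ almost surely, and hence $H(\mathcal{A})/H_{\text{ideal}}\to 1$.

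The hard part will be the modeling step, not the analytic step: one must justify that ``the entropy of the pool'' is the entropy of the induced empirical distribution over $\Omega$, rather than, say, an average of the within-synopsis entropies $H(P_i)$ (which would not depend on $N$ at all and would make the statement vacuous). Once this interpretation is fixed, the standard asymptotic machinery closes the argument. If a quantitative rate is desired, I would replace Glivenko--Cantelli with a Dvoretzky--Kiefer--Wolfowitz-type concentration bound on $\|\hat{\mathcal{M}}_N-\mathcal{M}\|_1$ and combine it with the local Lipschitz constant of $H$ away from the boundary to obtain an explicit $O(1/\sqrt{N})$ control of $|H(\mathcal{A})-H_{\text{ideal}}|$, refining the convergence claim into a finite-sample guarantee.
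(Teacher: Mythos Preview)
Your proposal is correct and follows the same underlying idea as the paper's proof: both argue that as $N\to\infty$ the pool $\mathcal{A}$ becomes ``representative'' of the ideal multinomial law, from which entropy convergence follows. The paper's own proof is a short informal paragraph asserting exactly this intuition without invoking any specific tools, whereas your version supplies the missing analytic machinery (SLLN/Glivenko--Cantelli on the finite outcome space $\Omega$ plus continuity of $H$ on the simplex) and, importantly, makes explicit the modeling choice that $H(\mathcal{A})$ means the entropy of the empirical type over $\Omega$---a point the paper leaves implicit.
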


\begin{proof} This proof is based on the concept that as the number of samples approaches infinity, the constructed pool of synopses becomes more representative of the ideal multinomial PDF. By constructing $\mathcal{A}$ from the multinomial distribution, which is designed to distribute the available probability units uniformly, the synopses generated cover a comprehensive range of possible outcomes. As a result, the entropy of $\mathcal{A}$ approaches the entropy of the ideal multinomial PDF. 
Therefore, the ratio $\frac{H(\mathcal{A})}{H_{\text{ideal}}}$ converges to 1.
\end{proof}

\begin{theorem} Let $H(\mathcal{A}; n_p, k)$ denote the entropy of the constructed pool of synopses $\mathcal{A}$, given a precision level $n_p$ (equivalent to $1/p$, where $p$ is the quantization precision) and a domain size $k$. For sufficiently large values of $N$ (number of samples drawn), $H(\mathcal{A}; n_p, k)$ approximates the entropy expression for the ideal multinomial PDF.
\label{them:mul1}
\end{theorem}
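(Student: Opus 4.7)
The plan is to combine the preceding Lemma (which establishes $H(\mathcal{A})/H_{\text{ideal}} \to 1$ as $N\to\infty$) with the asymptotic expansion stated just above the theorem. The claim essentially refines the Lemma by giving the explicit functional form that $H(\mathcal{A}; n_p, k)$ approaches in the large-$N$ regime, namely the multinomial entropy approximation $k - \tfrac{1}{2}\ln(2\pi n e) + \tfrac{1}{2}\sum_{i=1}^{k}\ln p_i$ with $n = n_p$.

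First I would write $H(\mathcal{A}; n_p, k)$ as the empirical Shannon entropy over the pool of $N$ synopses independently drawn from the multinomial law with $n_p$ trials and category probabilities $p_i = 1/k$ (uniform by construction in Algorithm~\ref{algm:zero-syno}). Let $\hat{\pi}_s$ denote the empirical frequency with which synopsis pattern $s$ appears among the $N$ draws and $\pi_s$ its true multinomial probability. The empirical entropy is $H(\mathcal{A}; n_p, k) = -\sum_s \hat{\pi}_s \log \hat{\pi}_s$, and the ideal entropy is $H_{\text{ideal}} = -\sum_s \pi_s \log \pi_s$. A strong law of large numbers argument (or a plug-in estimator concentration bound such as the Paninski-type inequality) yields $\hat{\pi}_s \to \pi_s$ almost surely for every $s$ as $N \to \infty$, and the continuous mapping theorem then gives $H(\mathcal{A}; n_p, k) \to H_{\text{ideal}}$.

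Next I would invoke the asymptotic expansion presented immediately before the theorem: for $n_p$ sufficiently large, $H_{\text{ideal}} \approx k - \tfrac{1}{2}\ln(2\pi n_p e) + \tfrac{1}{2}\sum_{i=1}^{k}\ln p_i$, which in the uniform case $p_i = 1/k$ collapses to $k - \tfrac{1}{2}\ln(2\pi n_p e) - \tfrac{k}{2}\ln k$. Chaining the two approximations, $H(\mathcal{A}; n_p, k) \approx H_{\text{ideal}} \approx$ the closed-form expression for the ideal multinomial entropy, which is exactly the statement of the theorem. To make the two-level approximation quantitative, I would write the combined error as $|H(\mathcal{A}; n_p, k) - H_{\text{ideal}}^{\mathrm{asymp}}| \leq |H(\mathcal{A}; n_p, k) - H_{\text{ideal}}| + |H_{\text{ideal}} - H_{\text{ideal}}^{\mathrm{asymp}}|$, where the first term is $O(\tfrac{1}{\sqrt{N}})$ by concentration of the plug-in entropy estimator and the second is $O(\tfrac{1}{n_p})$ from the Stirling-based asymptotic expansion of the multinomial entropy.

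The main obstacle I anticipate is handling the joint asymptotics in $N$ and $n_p$ cleanly, since the theorem's statement mixes two different limits: convergence of the empirical pool entropy to the true multinomial entropy (which requires $N \to \infty$) and the Gaussian-limit approximation of the multinomial entropy itself (which requires $n_p \to \infty$). I would address this by stating the theorem as a two-parameter approximation with the explicit error bound above, and noting that for any fixed $n_p$ the first error vanishes with $N$ while the second controls the residual gap to the stated closed form. A secondary technical point is justifying the entropy-estimator concentration over a possibly very large alphabet of synopsis patterns (of size $\binom{n_p+k-1}{k-1}$); this can be handled either by restricting attention to the typical set of multinomial outcomes (whose mass tends to $1$ by the asymptotic equipartition property) or by invoking a known consistency result for the plug-in entropy estimator under mild regularity.
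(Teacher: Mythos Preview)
Your proposal is correct and follows the same overall route as the paper: first argue that the pool entropy converges to the ideal multinomial entropy, then invoke a closed-form asymptotic approximation for the latter. Your treatment is considerably more rigorous than the paper's own proof, which simply states an approximation formula (citing a reference) and asserts the convergence without any supporting argument; you supply the SLLN/continuous-mapping justification, an explicit two-term error decomposition, and you flag the large-alphabet issue for the plug-in estimator, none of which the paper addresses. One minor discrepancy worth noting: the paper's proof actually uses the approximation $H_{\text{approx}}(N,k) \approx \tfrac{k-1}{2}\ln(2\pi N e) - \tfrac{k}{2}\ln k$ (coefficient $\tfrac{k-1}{2}$ on the log term, with argument $N$), whereas you invoke the formula stated in the text immediately before the theorem, $k - \tfrac{1}{2}\ln(2\pi n_p e) - \tfrac{k}{2}\ln k$; these two expressions are not identical, but this is an internal inconsistency in the paper rather than an error in your argument.
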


\begin{proof} The proof of Theorem \ref{them:mul1} involves comparing the entropy expression for the ideal multinomial PDF, denoted as $H_{\text{ideal}}$, with $H(\mathcal{A}; n, k)$. For a large value of $n$, the entropy of a heterogeneous multinomial PDF can be approximated as:

\begin{equation}
H_{\text{approx}}(N, k) \approx \frac{k-1}{2}\ln(2\pi Ne) - \frac{k}{2}\ln(k),
\end{equation}

where $N$ represents the number of samples drawn from the multinomial distribution \cite{10.1137/1123020}.

By examining the entropy expression for the ideal multinomial PDF, which depends on $k$, and comparing it with $H(\mathcal{A}; n_p, k)$ for different values of $n$ and $k$ in the constructed pool of synopses, we can establish a relationship between the two. Specifically, for sufficiently large values of $N$, $H(\mathcal{A}; n_p, k)$ approaches $H_{\text{approx}}(N, k)$, indicating that the constructed pool of synopses accurately represents the ideal multinomial PDF. Figure \ref{fig:multinomial} shows the information theory of $0$-DP Synopsis Generation.
\end{proof}

\vspace{-0.2in}

\begin{figure}[!h]
    \centering
	\subfigure[Expressiveness vs $p$]{
		\includegraphics[angle=0, width=0.48\linewidth]{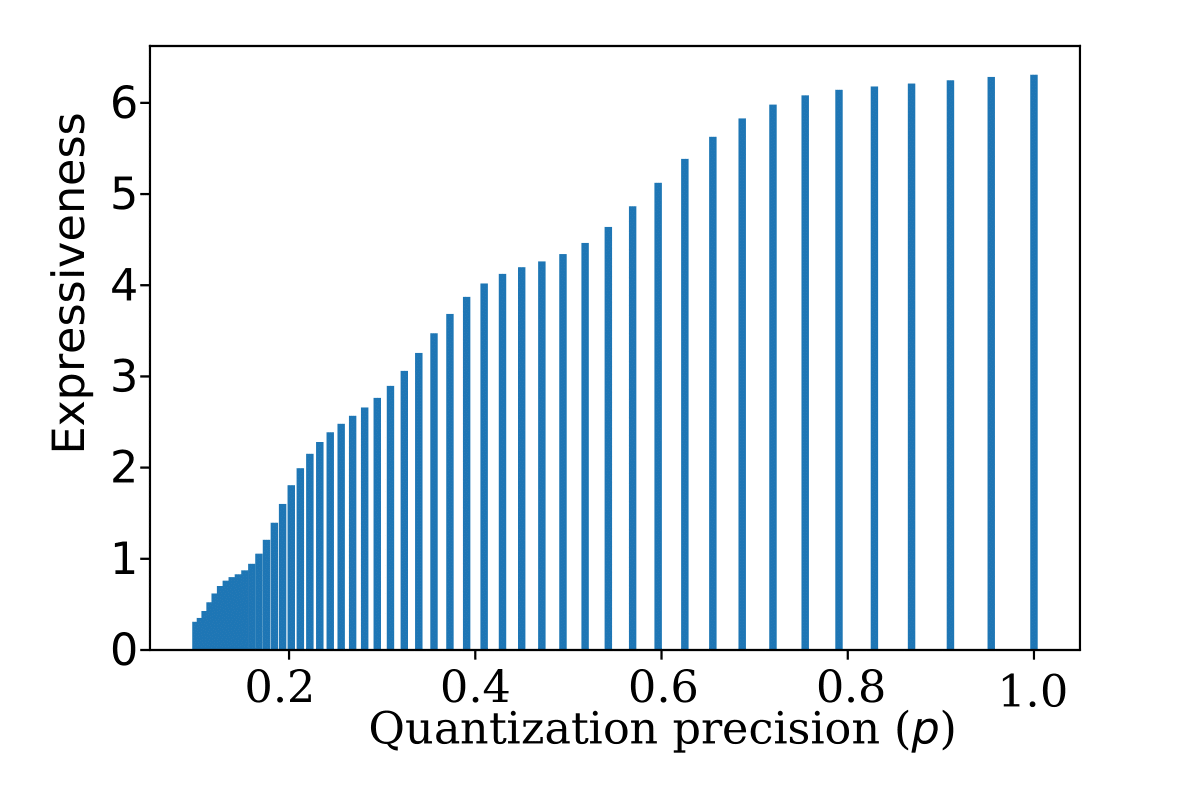}
		\label{fig:if_precision}}
	\hspace{-0.18in}
	\subfigure[Expressiveness vs $n_p$]{
	\includegraphics[angle=0, width=0.48\linewidth]{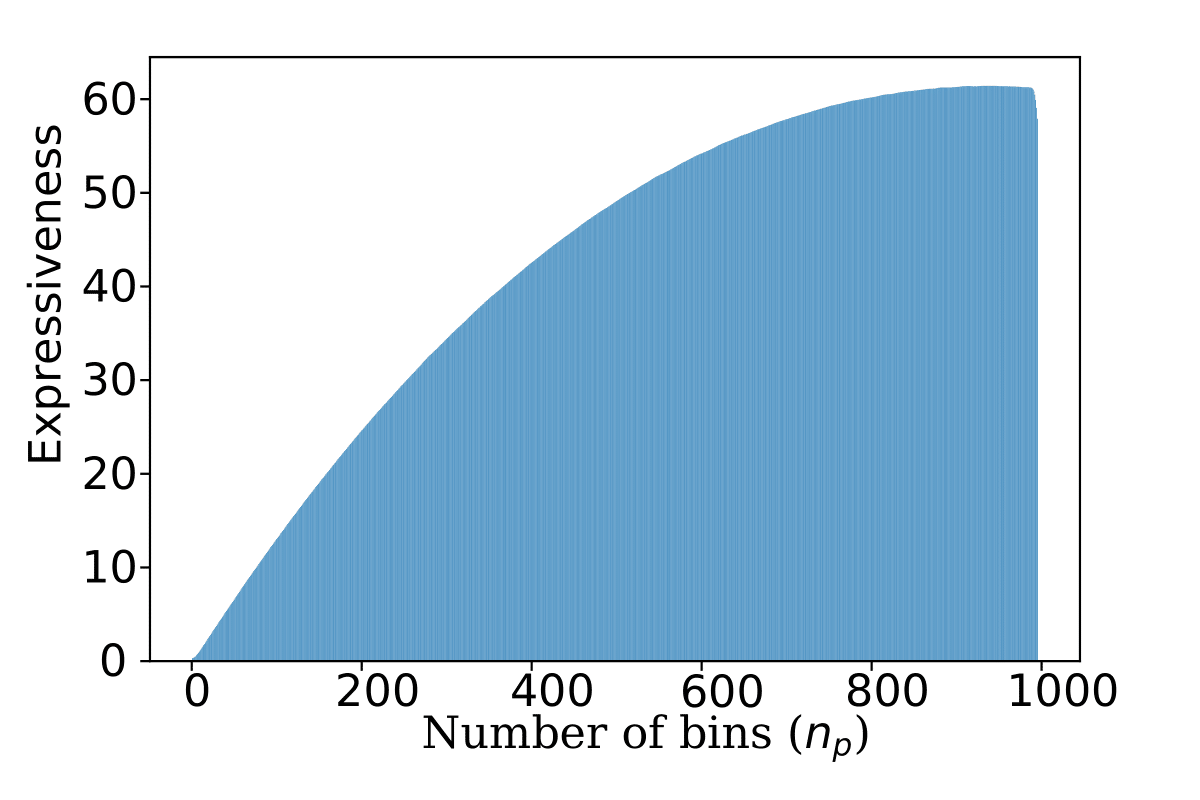}
	\label{fig:if_n}}
\caption{Information theory of $0$-DP synopsis generation.}\vspace{-0.05in}
\label{fig:multinomial}
\end{figure}

\begin{table*}[!h]
\footnotesize
\centering
\caption{Expanded sensitivity and privacy loss (actual DP) for the median query of packets in each time slot (0.5 seconds) on a network traffic dataset. $\epsilon=0.5$ is used to generate noise for existing methods ($\epsilon$ is always bounded by $0.5$ in DPI). 
}
\label{tab:privacy-results}
\begin{tabular}{|c|c|c|c|c|c|c|c|c|}
\hline
&Sensitivity after $5$ slots & Sensitivity after $50$ slots & Actual DP after $5$ slots & Actual DP after $50$ slots \\
\hline
Chan et al. \cite{chan} & 7 & 67 &  0.590 &  1.771 \\
\hline
Chen et al. \cite{pegasus} & 7 &  67 & 1.841  &  2.166 \\
\hline
Perrier et al. \cite{perrier2018private} & 7 &  67 &  1.734 &  2.248 \\
\hline
Wang et al. \cite{wang} & 7 & 67 &  1.674 &  1.972 \\
\hline
DPI (ours) & 2 & 2 & 0.001 &  0.005\\
\hline
\end{tabular}
\end{table*}

\section{Random Budget Allocation (Alternative)}
\label{sec:RBA2}

Random budget allocation with a non-depleting total budget output can also be achieved by dividing the total privacy budget $\epsilon$ into several ranges like \emph{high}, \emph{medium}, and \emph{low}. In each time slot $t$, $\epsilon_t$ is randomly selected from one of these ranges. Refer to Algorithm~\ref{algm:RBA2} for details of this alternative approach. In essence, the random allocation of the privacy budget allows us to cater to diverse and changing data streams, ensuring the high utility of DPI while preserving privacy.

\begin{algorithm}[!h]
\label{algm:RBA1}
\footnotesize
\SetAlgoLined
\caption{Random Budget Allocation (RBA)}
\KwData{available privacy budget space $\varepsilon_t = \{\epsilon_1, \epsilon_2, \ldots\}$}
\KwData{tuning parameter $\Lambda = 10^8$}
\tcc{$\Lambda$ is a large number tunable to scale of remaining epsilon values in $\varepsilon_t$}
\tcc{sample from exponential distribution}
$S \sim \text{Exp}(\Lambda)$\;

\tcc{find the closest remaining privacy budget to $S$}
$\epsilon_t \leftarrow \argmin_{\epsilon \in \varepsilon_t} |\epsilon - S|$\;

\tcc{update available privacy budget space }
$\varepsilon_{t+1} \leftarrow \varepsilon_t \setminus \{\epsilon_t\}$\;

\tcc{return the selected privacy budget}
\textbf{return} $\epsilon_t$\;
\end{algorithm}

\begin{algorithm}
\footnotesize
\caption{RBA: Alternative Approach}
\label{algm:RBA2}
initialize queues $smallNumbers$, $mediumNumbers$, $largeNumbers$

define the ranges: $smallRange$, $mediumRange$, $largeRange$

\For{each $i$ in $smallRange$, $mediumRange$, and $largeRange$}{
    correspondingQueue.enqueue($i$)
}

\SetKwFunction{FMain}{getUniqueRandomNumber}
    \While{true}{
        $randomQueue \leftarrow$ randomly choose from ($smallNumbers$, $mediumNumbers$, $largeNumbers$)
        
        \If{$randomQueue$ is not empty}{
             $number \leftarrow randomQueue$.dequeue()
             
            \Return $number$
        }
    }
\end{algorithm}

\section{Additional Results}
\subsection{Privacy Loss of DPI vs Existing Methods}
\label{sec:sotaloss}

We additionally conduct experiments to compare the privacy loss of DPI with representative existing methods  \cite{pegasus,chan,wang,perrier2018private}. Existing methods for differentially private data stream disclosures have several limitations. Most critically, they cannot preserve overall privacy loss in a bounded or converging manner. To prove this, we measure the expansion of sensitivity across the first 5 time slots and 50 time slots, respectively (in terms of the user-level DP protection) using the median query of packets in each time slot over a network traffic dataset in Table~\ref{tab:privacy-results}. In addition, we derive the privacy loss via the notion of R\'enyi differential privacy (e.g., deriving the R\'enyi differential privacy guarantees after injecting the noise and then converting the R\'enyi differential privacy to $\epsilon$-DP \cite{mironov2017renyi}). Figure \ref{fig:motivation} further demonstrates that existing methods cannot bound the privacy over time. Thus, we do not benchmark DPI with them in the infinite stream settings. 

\begin{figure}[!h]
    \centering
	\subfigure[0.5 seconds each time slot]{
		\includegraphics[angle=0, width=0.48\linewidth]{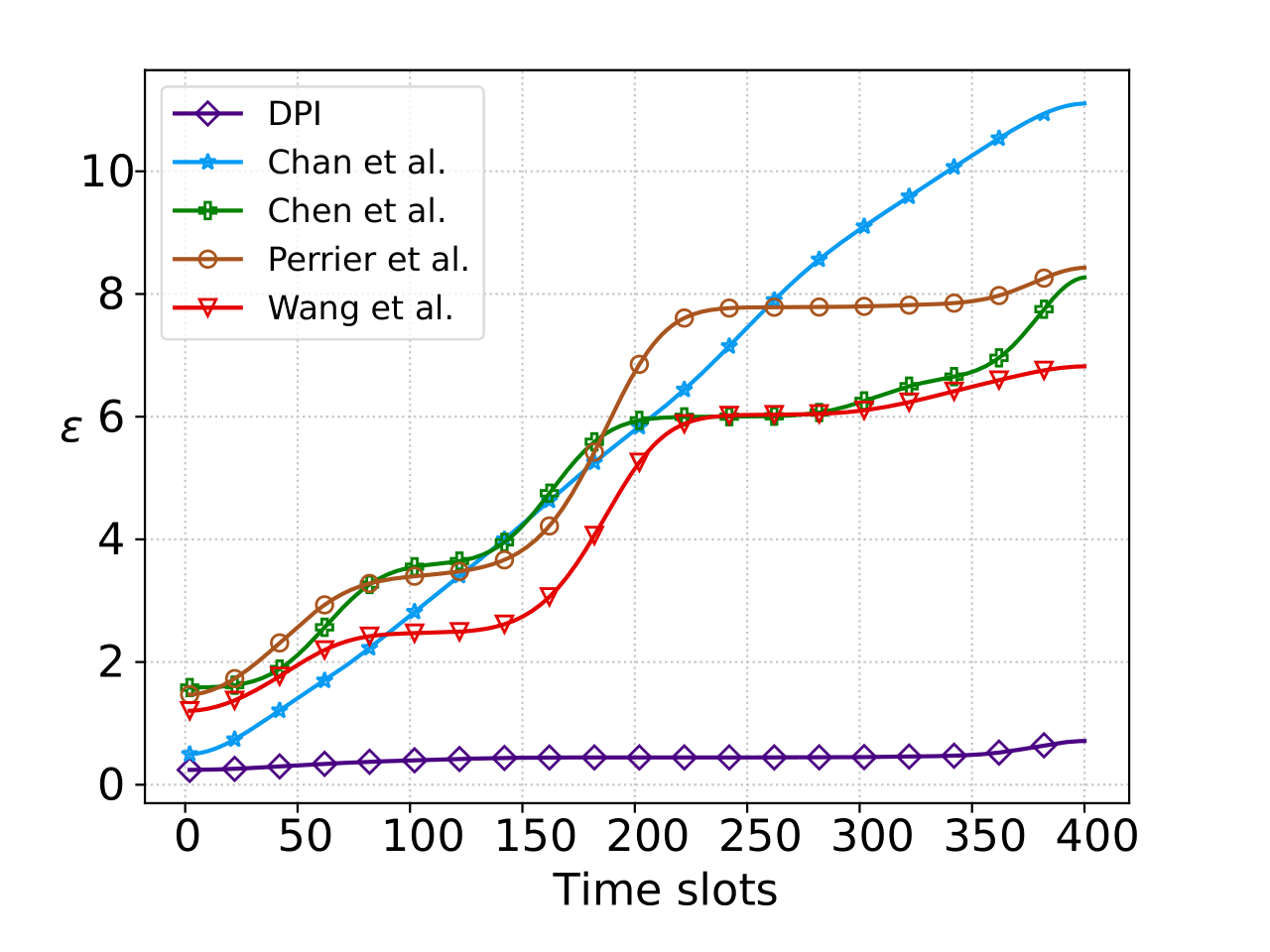}
		\label{fig:originl1}}
	\hspace{-0.18in}
	\subfigure[5 seconds each time slot]{
	\includegraphics[angle=0, width=0.48\linewidth]{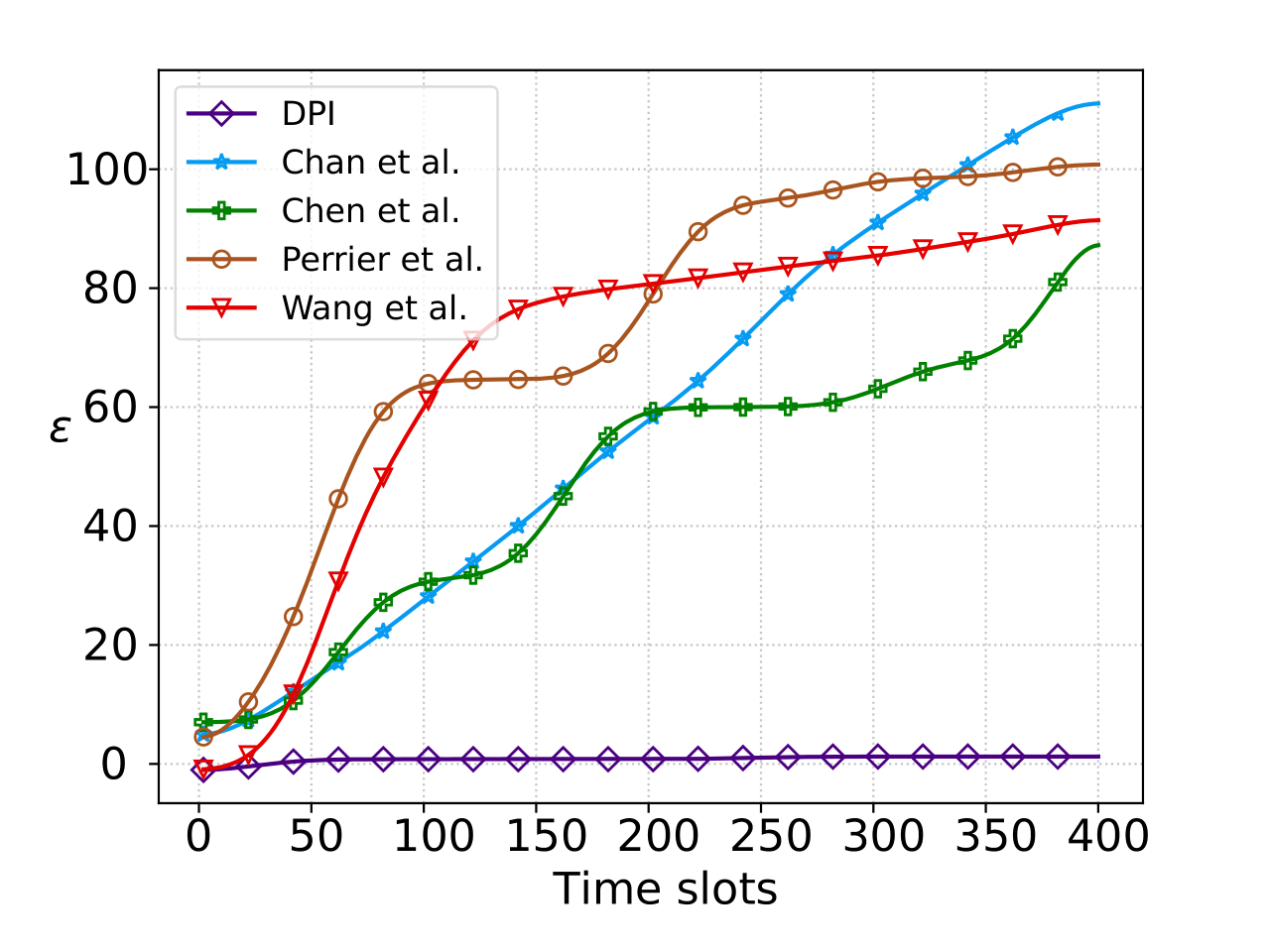}
	\label{fig:originl2}}
\caption{Total privacy loss of DPI and existing methods.}\vspace{-0.1in}
\label{fig:motivation}
\end{figure}

\subsection{Additional Results on Synthetic Data}
\label{sec:synthetic}

To further show that DPI can produce accurate output results on datasets with diverse characteristics, we conducted additional experiments on more synthetic data. 
Figure \ref{fig:box_plot_synthethic} shows the utility of DPI (w.l.o.g., using the accumulative query on the data distribution as an example) evaluated on additional $20$ synthetic datasets and different privacy budgets $\epsilon$ varying from $0.5$ to $10$. Out of $594,000$ query results ($2,700$ time slots, 11 different $\epsilon$, 10 datasets, 2 different domain sizes), DPI can still ensure low and stable MSE and KL divergence in all these different settings. These results demonstrate high utility on datasets with diverse characteristics.

\vspace{-0.1in}

\begin{figure}[!h]
    \centering
	\subfigure[MSE vs $t$]{
	\includegraphics[angle=0, width=0.48\linewidth]{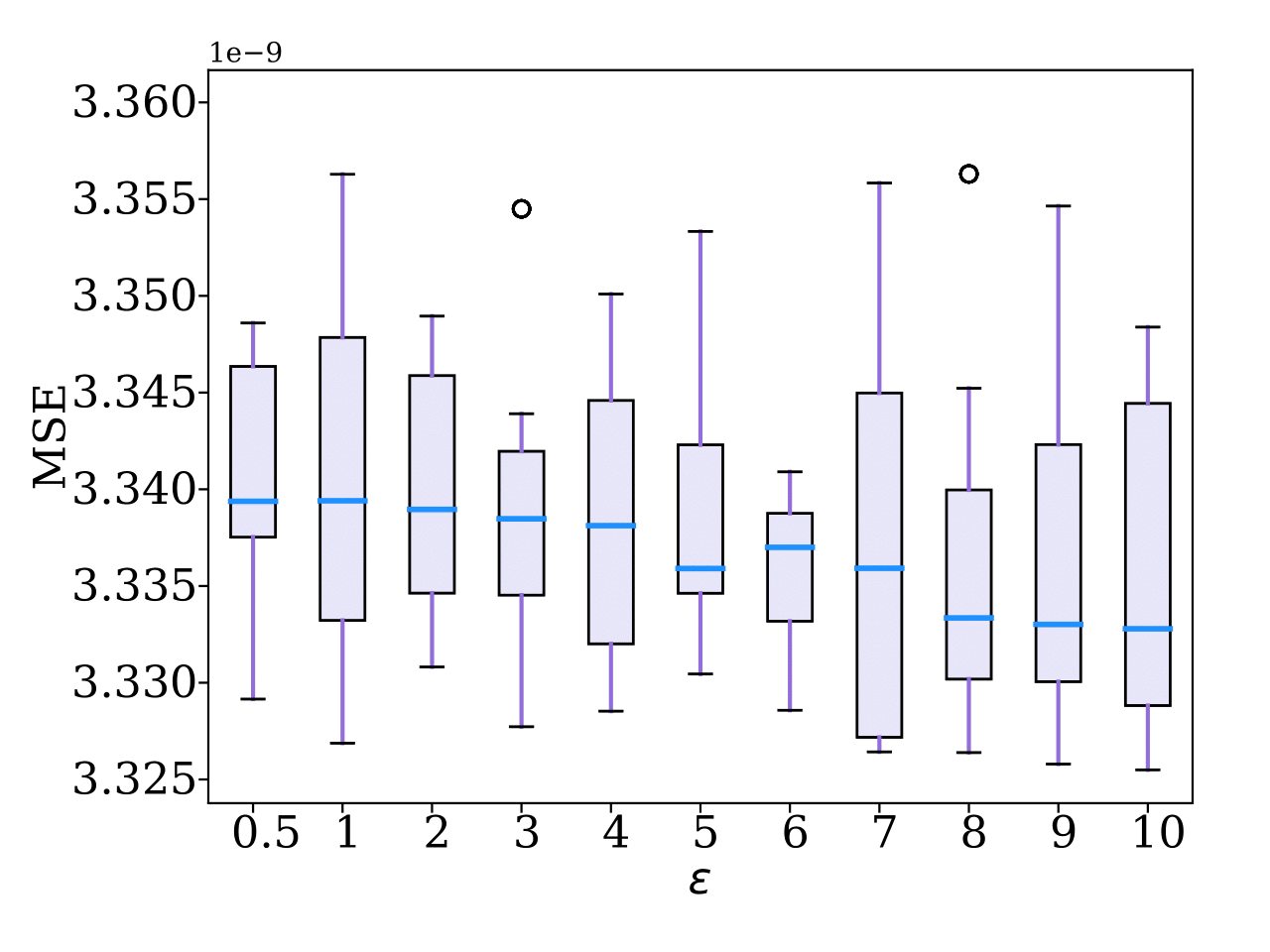}
	\label{fig:mse_box_rebuttal_1000}}
	\hspace{-0.18in}
\subfigure[KL divergence vs $t$]{
		\includegraphics[angle=0, width=0.48\linewidth]{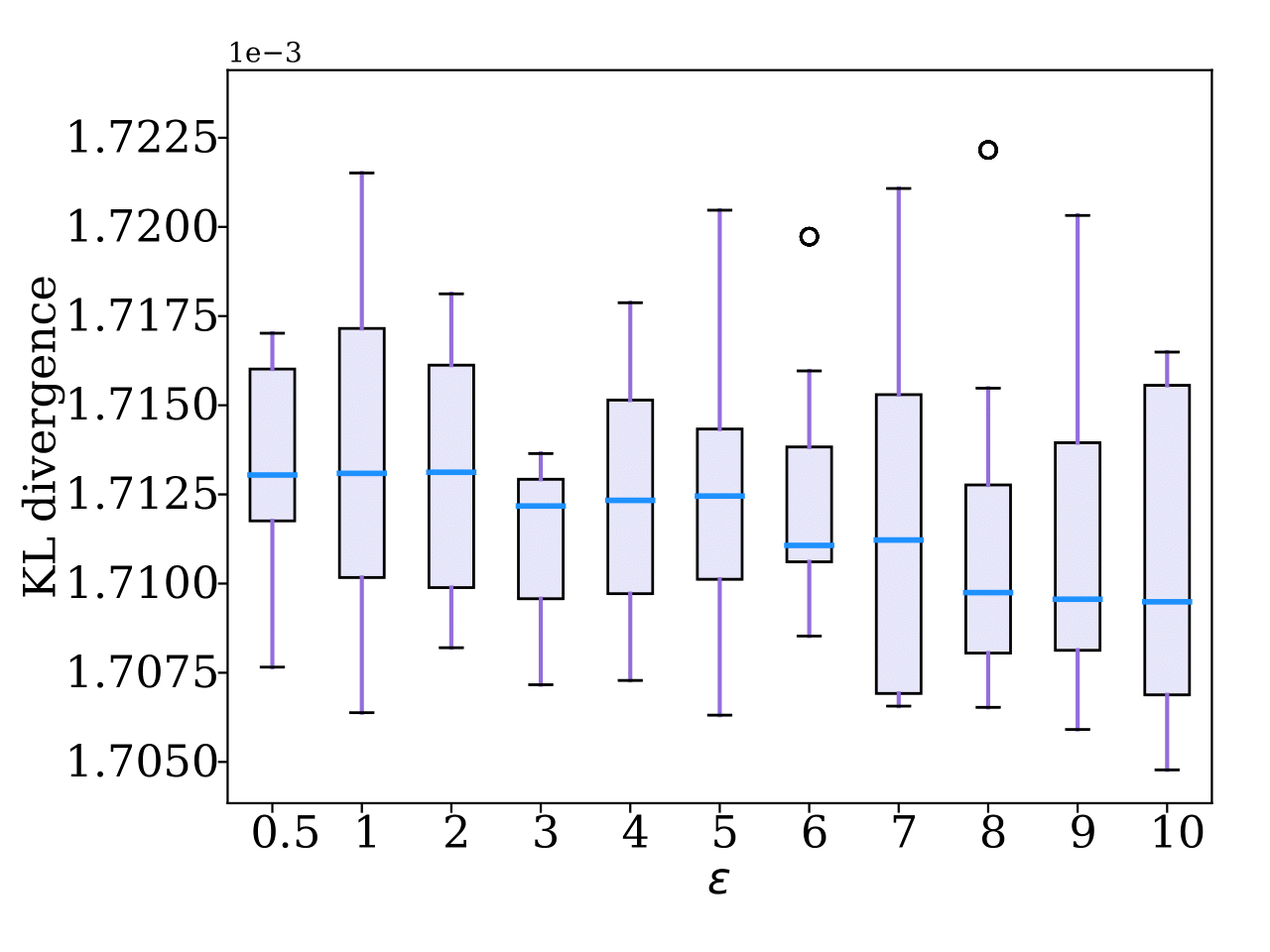}
		\label{fig:kl_box_rebuttal_1000}}
 \hspace{-0.18in}
\subfigure[MSE vs $t$]{
		\includegraphics[angle=0, width=0.48\linewidth]{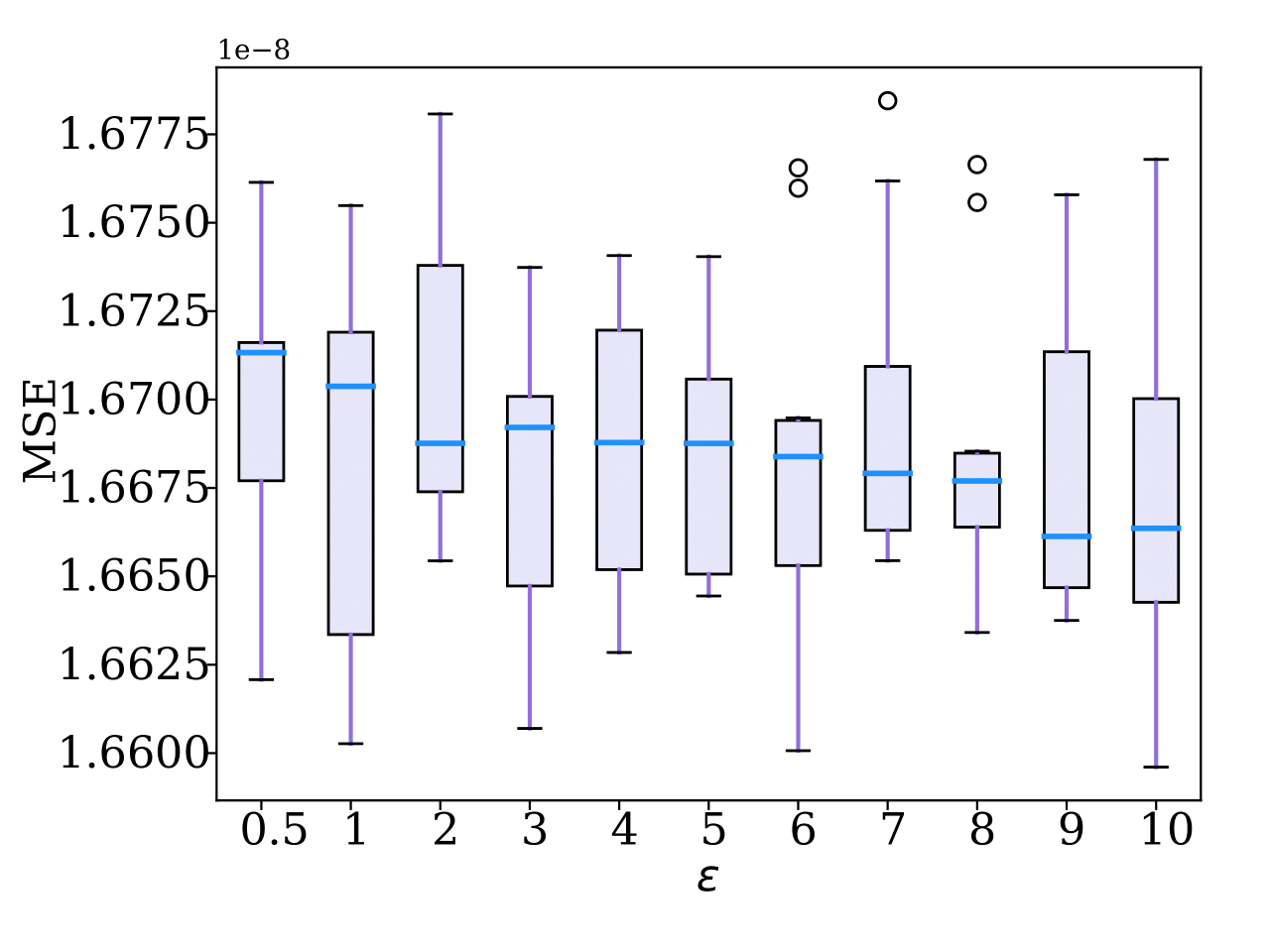}
		\label{fig:mse_box_rebuttal_100}}
  \hspace{-0.18in}
   \subfigure[KL divergence vs $t$]{
		\includegraphics[angle=0, width=0.48\linewidth]{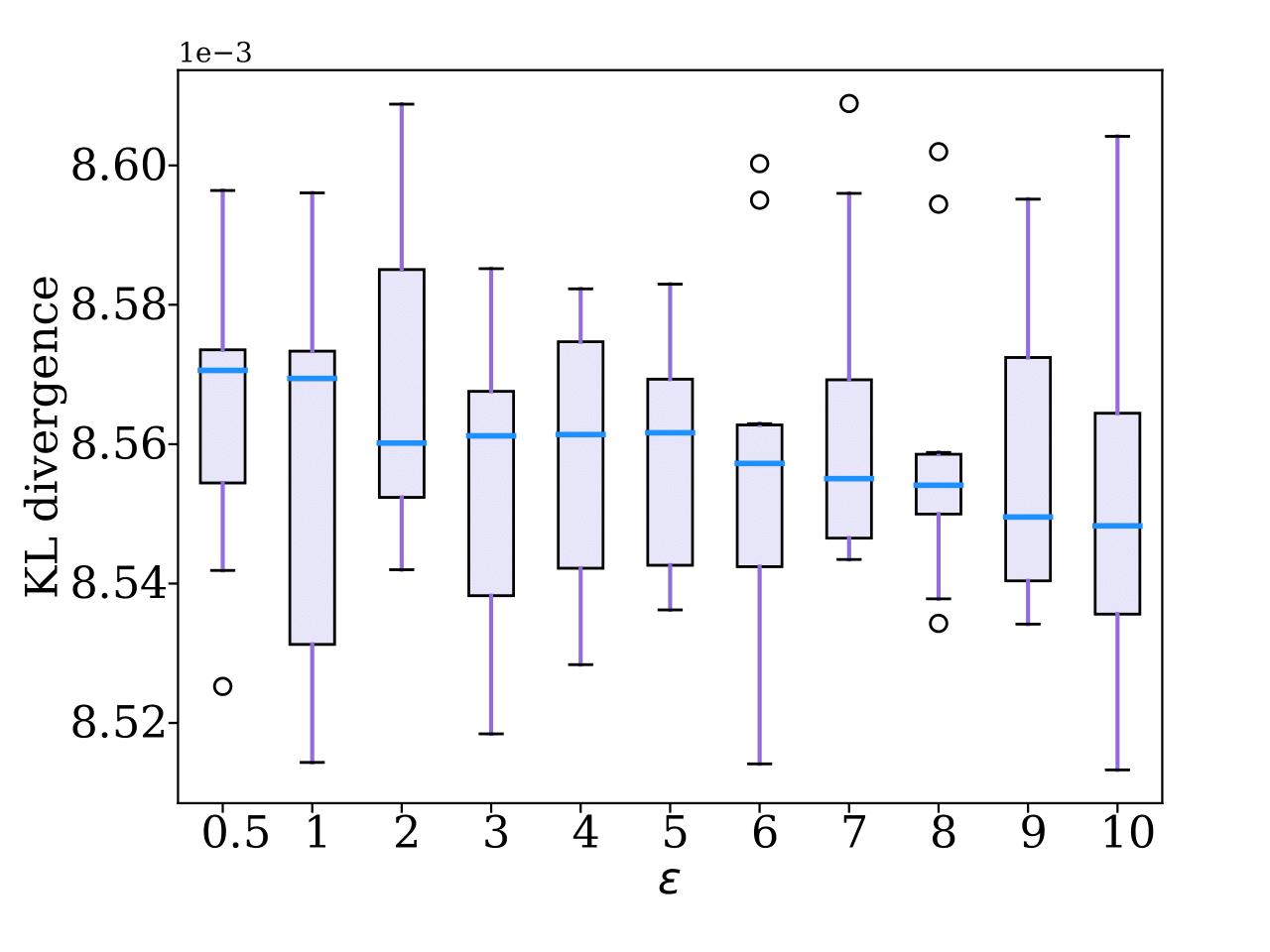}
		\label{fig:kl_box_rebuttal_100}}
\caption{Evaluation of DPI on $10$ synthetic datasets (each box includes $10$ results for a specific privacy bound $\epsilon$: the average MSE/KL of $2,700$ time slots in each of the 10 datasets). (a) (b) MSE and KL divergence: domain size $1,000$. (c) (d) MSE and KL divergence: domain size $10,000$. Each experiment is repeated for 10 times, and the average results are plotted in the figures.}
\label{fig:box_plot_synthethic}
\end{figure}

\newpage

\section{Meta-Review}

The following meta-review was prepared by the program committee for the 2024
IEEE Symposium on Security and Privacy (S\&P) as part of the review process as
detailed in the call for papers.

\subsection{Summary}

DPI is a framework for handling infinite data streams with differential privacy and bounded privacy leakage.

\subsection{Scientific Contributions}

\begin{itemize}
\item Provides a Valuable Step Forward in an Established Field

\end{itemize}

\subsection{Reasons for Acceptance}

The paper provides a valuable step forward in an established field. Handling streaming data with differential privacy, particularly infinite streams, is a known open research problem. The paper proposes the combination of several novel techniques to bound privacy loss, including: data independent synopses, DP boosting, and budget allocation from a converging infinite series.

\label{sec:appendix}

\end{document}